\newtheorem{Theorem}{Theorem}[section]
\newtheorem{Lemma}[Theorem]{Lemma}
\newtheorem{Proposition}[Theorem]{Proposition}
\newtheorem{Corollary}[Theorem]{Corollary}
\theoremstyle{definition}
\newtheorem{Definition}[Theorem]{Definition}
\theoremstyle{remark}
\newtheorem{rem}[Theorem]{Remark}
\numberwithin{equation}{section}
\newcommand{\R}{\mathbb R}
\newcommand{\N}{\mathbb N}
\newcommand{\Z}{\mathbb Z}
\newcommand{\C}{\mathcal{C}}
\newcommand{\p}{\mathcal{P}}
\newcommand{\A}{\mathcal{A}}
\newcommand{\rel}{\mathcal{R}}
\newcommand{\F}{\mathcal{F}}
\newcommand{\Hol}{\mathcal{H}}
\newcommand{\D}{\mathcal{D}}
\def\d{\partial}
\def\K{{ \! \rm \ I\!K}}
\def\pathloc{ \operatorname{Paths}_{\operatorname{loc}} }
\title[KP, Mulase factorization, and Fr\"olicher Lie groups]{Well-posedness of the 
Kadomtsev-Petviashvili hierarchy, Mulase factorization, and Fr\"olicher Lie groups}
\author[J.-P. Magnot and E.G. Reyes]{Jean-Pierre Magnot$^1$ and Enrique G. Reyes$^2$}
\address{\small $^1$: LAREMA, Universit\'e d’Angers, 2 Bd Lavoisier, 
49045 Angers cedex 1, France and Lyc\'ee Jeanne d'Arc, Avenue de Grande Bretagne, 63000 
Clermont-Ferrand, France}
\email{\small magnot@math.univ-angers.fr ; jean-pierr.magnot@ac-clermont.fr}
\address{\small $^2$:
Departamento de Matem\'{a}tica y Ciencia de la Computaci\'{o}n,
Universidad de Santiago de Chile, Casilla 307 Correo 2, Santiago,
Chile. }\email{\small enrique.reyes@usach.cl ;
e\_g\_reyes@yahoo.ca}
\begin{document}

    \begin{abstract}
We recall the notions of Fr\"olicher and diffeological spaces and
we build regular Fr\"olicher Lie groups and Lie algebras of formal pseudo-differential 
operators in one independent variable. Combining these constructions with a smooth version 
of Mulase's deep algebraic factorization of infinite dimensional groups based on formal 
pseudo-differential operators, we
present two proofs of the well-posedness of the Cauchy problem for the 
Kadomtsev-Petviashvili (KP) hierarchy in a smooth category. We also
generalize these results to a KP hierarchy modelled on formal pseudo-differential 
operators with coefficients which are series in formal parameters, we describe a rigorous 
derivation of the Hamiltonian interpretation of the KP hierarchy, and we discuss how 
solutions depending on formal parameters can lead to sequences of functions converging to 
a class of solutions of the standard KP-II  equation.
    \end{abstract}

\maketitle

\hfill {\large Dedicated to the memory of Professor Leonid Aleksandrovich Dickey}

\vspace{.6cm}

\textit{Keywords:} KP hierarchy, Mulase factorization, Fr\"olicher Lie groups and Lie
 algebras; Well-posedness; Ambrose-Singer theorem.

\textit{MSC(2010):} 35Q51; 37K10; 37K25; 37K30

\section{Introduction}
In the 1980's M. Mulase published two fundamental papers on the
algebraic structure and formal integrability properties of the
Kadomtsev-Petviashvili (KP) hierarchy, see \cite{M1,M3}. A common
theme in these papers was the use of a powerful theorem on the
factorization of a group of formal pseudo-differential operators
of infinite order which integrated  the algebra of formal
pseudo-differential operators: this factorization ---a delicate
algebraic generalization of the Birkhoff decomposition of loop
groups appearing for example in \cite{GR,PS}--- allowed him to
{\em solve} the Cauchy problem for the KP hierarchy in an
algebraic context.

In this paper we adapt Mulase's results and constructions to a
rigorous setting suitable for Analysis, and we prove
well-posedness of the Cauchy problem for the KP hierarchy in a
smooth category. In fact, we provide two solutions for the Cauchy
problem. The first one is modelled after the theory of $r$-matrices
\cite{RS,STS}, while the second one uses an infinite-dimensional
version of the Ambrose-Singer theorem (see \cite{Ma2013,Ma2015}, and \cite{Stern} for the 
classical finite-dimensional case).
As our smooth category we choose the setting of regular Fr\"olicher Lie
groups and algebras, see \cite{KM}, which is not too different
from the so-called convenient setting described in the same
reference. This context allows us to construct genuine Lie groups
and Lie algebras structures out of spaces of formal
pseudo-differential operators.

It is well-known that endowing spaces of formal
pseudo-differential operators with rigorous manifold structures
derived from topological structures on infinite dimensional vector
spaces, is a non-trivial issue. For example, we may recall that
there is no natural Lie group attached to the algebra of
differential operators on $S^1$ (see \cite[Sect. 4]{KZ},
\cite[II.4.4]{KW}, or \cite[Chp. 10]{GR}), and that the manifolds
modelled on ``direct limit of ILH algebras'' appearing in
\cite[Section 5]{ARS1} possess underlying topological structures
which need to be treated with extreme care, since there appear,
e.g., locally convex topological vector spaces which are not
complete. We need groups such as the one mentioned above in order
to establish a smooth version of Mulase's factorization, our main
tool for proving well-posedness. Our constructions in a
Fr\"olicher setting are indeed flexible and user-friendly enough
so that they allow us to distinguish clearly which properties
depend on smoothness or on topology. In our approach via
Fr\"olicher Lie groups, properties where topology appears, such as
integration, are clearly delimited. On the other hand, properties
which we obtain via differentiation, which are numerous, are also
explicitly described. Moreover, see Remark \ref{comp}, our chosen
approach is compatible with more standard settings, in the sense
that smoothness in the Fr\"olicher category coincides with
smoothness in the (restricted) category of the
($c^\infty$-)convenient setting \cite{KM} and with Gateaux
smoothness for Fr\'echet manifolds. These facts allow us to carry
out, without using topological arguments which could lead us to
restricting the field of applications, explicit computations
leading, for example, to the announced smooth version of Mulase's
factorization, to solving the Cauchy problem and to a rigorous
hamiltonian formulation for the KP hierarchy. They also allow us
to examine explicit examples: motivated by the theory of
pseudo-differential operators with rough coefficients, see
\cite{Marsch1988}, we propose a deformed KP hierarchy and we
remark briefly that its solutions determine solutions to the
KP-II equation as it appears for instance in \cite{Bourg1993,Sh1986}.

Our paper is structured as follows. Section \ref{S2} is a summary of the
necessary notions in the theory of diffeological and of Fr\"olicher spaces
which we will use in this work\footnote{As explained in the 
previous paragraphs, our favoured setting is the category of Fr\"olicher 
spaces. This category is a full subcategory of the category of diffeological
spaces, and therefore it is natural to begin by considering the latter 
spaces, as we do in Section \ref{S2}.} and it is also a contribution to
the theory of regular Fr\"olicher Lie groups and Lie algebras, as
we now explain. We introduce and study the main properties of
Souriau's diffeological spaces and Fr\"olicher spaces in the first
five subsections of Section 2. Then, we introduce diffeological
groups and the more refined notion of Fr\"olicher Lie groups.
Interestingly, it is not obvious how to define tangent spaces
--let alone Lie algebra-- of a diffeological group. We do it in
detail and we prove (see Proposition \ref{leslie}) that in fact
they are diffeological vector spaces. This proposition is a
precise version of a result announced in \cite{Les}. We then
consider (regular) Fr\"olicher Lie groups and (regular)
Fr\"olicher Lie algebras. In the context of
Lie groups modelled on e.g. Fr\'echet spaces, this notion has led to many 
investigations, see \cite{ARS1,ARS2,DGV,GV,Glo,Ma2018-2,Om,OMYK2}, and also 
\cite{Neeb2007} for an overview of presently open problems. We discuss 
regularity in subsections 2.6 and 2.7, and we answer a question
raised by Kriegl and Michor  in \cite{KM} (after Omori \cite{Om})
on the existence of a non-regular (in a precise sense appearing in
2.7 below) Lie group. We finish Section 2 with a summary of the
theory of principal fiber bundles and the Ambrose-Singer theorem
in the context of diffeological spaces after \cite{Ma2013}. We
need these tools for our second proof of the well-posedness
of the Cauchy problem of the KP hierarchy.

Section \ref{S3} is on the algebra of formal pseudo-differential
operators in one independent variable and its ``integration" to a
regular Fr\"olicher Lie group. We begin with a review of some
aspects of Mulase's work including his factorization theorem, see
\cite{M1,M3} and a more recent review \cite{ER2013}, and then we adapt his 
constructions so as to obtain
a regular Fr\"olicher Lie group $G(\overline{\Psi}(A_t))$ of
infinite order formal pseudo-differential operators, with regular
Fr\"olicher Lie algebra.
This Lie algebra is given by formal pseudo-differential operators
with appropriate coefficients which are reminiscent of Mulase's
choice of ``time-dependent" coefficients, see \cite{M1,M3}. These 
constructions allow us to prove a smooth version of Mulase's factorization
theorem, our main tool in the solution of the
Cauchy problem for the KP hierarchy.

The actual analysis of the Cauchy problem for the KP hierarchy in
a smooth category is carried out in Section \ref{S4}. As announced, we
present two proofs of well-posedness: one uses an
infinite-dimensional version of the Reyman--Semenov-Tian-Shansky
integration theorem, see \cite{RS}, and the other uses
infinite-dimensional principal bundles and the version of the
Ambrose-Singer theorem given in Section 2. Our proofs are much
enriched versions of previous work by the authors, see
\cite{ER2013,Ma2013}: in \cite{ER2013} is shown that the
Reyman--Semenov-Tian-Shansky integration theorem yields formally
solutions to the equations of the KP hierarchy via Mulase's
factorization theorem, but no comments are made in that paper with
respect to the existence of smooth solutions. On the other hand,
the second proof builds upon tools developed for the study of a
$q$-deformed version of the KP hierarchy in \cite{Ma2013,Ma2015},
and completes these two works.

The KP hierarchy with formal series coefficients, with emphasis on
solutions which are functions with low regularity, is considered
on its own right in Section \ref{S5}. We argue that the deformed initial
data is not only an appropriate tool for the analysis of standard
KP, but also that it is a natural way to study some dynamical
systems with rough initial data. In fact, we expect that our work
can be used to investigate actual existence of conservation laws
or blow-up phenomena. We also consider in this section the
hamiltonian structure of the KP and deformed KP hierarchies,
extending to a smooth setting the approach of \cite{ER2013}. This
structure has been studied before for non-deformed KP hierarchy
and functions in $C^\infty(S^1,\R)$ (see {\em e.g.}, \cite{W1,W2,D,KW}), but
we propose a general non-formal version valid in our smooth
category of Fr\"olicher spaces. As a final remark, in Subsection 5.3 we point out the need
to compare the solutions to the KP-II equation which are deduced from our solutions to the
{\em deformed} KP hierarchy, with the ones considered in
\cite{Bourg1993,Sh1986}. We believe this comparison is an important open problem in the
theory of non-formal analysis of nonlinear integrable systems.

\smallskip


We close this introduction on a personal note. The second named author was fortunate to be 
Prof. Dickey's colleague at the University of Oklahoma during two years. Leonid 
Aleksandrovich was a kind and generous human being, in addition to being a brilliant 
mathematician. It is an honour to dedicate this work to his memory. 

\section{Preliminaries on diffeological Lie groups} \label{S2}
\subsection{Souriau's diffeological spaces and Fr\"olicher spaces}
\label{1.1}

\begin{Definition} \cite{Sou}, see e.g. \cite{Igdiff}. Let $X$ be a set.

\noindent $\bullet$ A \textbf{p-parametrization} of dimension $p$ 
(or $p$-plot) on $X$ is a map from an open subset $O$ of $\R^{p}$ to $X$.

\noindent $\bullet$ A \textbf{diffeology} on $X$ is a set $\p$
of parametrizations on $X$, called plots of the diffeology, such that, for 
all $p\in\N$,

    - any constant map $\R^{p}\rightarrow X$ is in $\p$;

    - Let $I$ be an arbitrary set of indexes; let 
    $\{f_{i}:O_{i}\rightarrow X\}_{i\in I}$ be a family of compatible maps 
    that extend to a map $f:\bigcup_{i\in I}O_{i}\rightarrow X$.
    If $\{f_{i}:O_{i}\rightarrow X\}_{i\in I}\subset\p$, then $f\in\p$.

    - Let $f\in\p$, defined on $O\subset\R^{p}$. Let $q\in\N$,
    $O'$ an open subset of $\R^{q}$ and $g$ a smooth map (in the usual
    sense) from $O'$ to $O$. Then, $f\circ g\in\p$.

\noindent $\bullet$ If $\p$ is a diffeology on $X$, then $(X,\p)$ is
    called a \textbf{diffeological space}.

    \noindent Let $(X,\p)$ and $(X',\p')$ be two diffeological spaces;
    a map $f:X\rightarrow X'$ is \textbf{differentiable} (=smooth) if
    and only if $f\circ\p\subset\p'$. \end{Definition}

\begin{rem}
Any diffeological space $(X,\p)$ can be
endowed with the  weakest topology such that all the maps that
belong to $\p$ are continuous. We do not dwell deeper on this fact
in this work because it is not closely related to the main themes
of this paper.
\end{rem}


We now introduce Fr\"olicher spaces, see \cite{FK}, using terminology from \cite{KM}.

\begin{Definition} $\bullet$ A \textbf{Fr\"olicher} space is a triple
    $(X,\F,\C)$ such that

    - $\C$ is a set of paths $\R\rightarrow X$,

    - $\F$ is the set of functions from $X$ to $\R$, such that a function
    $f:X\rightarrow\R$ is in $\F$ if and only if for any
    $c\in\C$, $f\circ c\in C^{\infty}(\R,\R)$;

    - A path $c:\R\rightarrow X$ is in $\C$ (i.e. is a \textbf{contour})
    if and only if for any $f\in\F$, $f\circ c\in C^{\infty}(\R,\R)$.

    \vskip 5pt $\bullet$ Let $(X,\F,\C)$ and $(X',\F',\C')$ be two
    Fr\"olicher spaces; a map $f:X\rightarrow X'$ is \textbf{differentiable}
    (=smooth) if and only if $\F'\circ f\circ\C\subset C^{\infty}(\R,\R)$.
\end{Definition}

Any family of maps $\F_{g}$ from $X$ to $\R$ generates a Fr\"olicher
structure $(X,\F,\C)$ by setting, after \cite{KM}:

- $\C=\{c:\R\rightarrow X\hbox{ such that }\F_{g}\circ c\subset C^{\infty}(\R,\R)\}$

- $\F=\{f:X\rightarrow\R\hbox{ such that }f\circ\C\subset C^{\infty}(\R,\R)\}.$

In this case we call $\F_g$ a \textbf{generating set of functions}
for the Fr\"olicher structure $(X,\F,\C)$. One easily see that
$\F_{g}\subset\F$. This notion is useful for this paper since
it allows us to describe a Fr\"olicher structure in a simple way, see for 
instance Proposition \ref{froproj} below. A Fr\"olicher space $(X,\F,\C)$
carries a natural topology, which is the pull-back topology of
$\R$ via $\F$. We note that in the case of a finite dimensional
differentiable manifold $X$ we can take $\F$ the set of all smooth
maps from $X$ to $\R$, and $\C$ the set of all smooth paths from
$\R$ to $X.$ In this case the underlying topology of the
Fr\"olicher structure is the same as the manifold topology
\cite{KM}. In the infinite dimensional case, there is to our
knowledge no complete study of the relation between the
Fr\"olicher topology and the manifold topology; our intuition is
that these two topologies can differ.

We also remark that if $(X,\F, \C)$ is a Fr\"olicher space, we can
define a natural diffeology on $X$ by using the following family
of maps $f$ defined on open domains $D(f)$ of Euclidean spaces
(see \cite{Ma}):
\begin{equation}  \label{alfa1}
\p_\infty(\F)=
\coprod_{p\in\N}\{\, f: D(f) \rightarrow X; \, \F \circ f \in C^\infty(D(f),\R) \quad \hbox{(in
    the usual sense)}\}\; .
\end{equation}
If $X$ is a differentiable manifold, this diffeology has been
called the {\bf n\'ebuleuse diffeology} by  P. Iglesias-Zemmour, see
\cite{Igdiff}. We can easily show the following:

\begin{Proposition} \label{fd} \cite{Ma}
    Let$(X,\F,\C)$
    and $(X',\F',\C')$ be two Fr\"olicher spaces. A map $f:X\rightarrow X'$
    is smooth in the Fr\"olicher sense if and only if it is smooth for
    the underlying diffeologies $\p_\infty(\F)$ and $\p_\infty(\F').$
\end{Proposition}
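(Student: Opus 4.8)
The plan is to prove the two implications directly from the definitions of Fr\"olicher smoothness and of the n\'ebuleuse-type diffeologies $\p_\infty(\F)$ and $\p_\infty(\F')$, using as the only nontrivial input the classical fact (see \cite{KM}) that a map from an open subset $D$ of a Euclidean space to $\R$ is smooth if and only if its composite with every smooth curve $\R\rightarrow D$ is smooth.

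Suppose first that $f$ is smooth in the sense of Fr\"olicher, i.e. $\F'\circ f\circ\C\subset C^\infty(\R,\R)$, and let $p\in\p_\infty(\F)$ be a plot defined on an open set $D(p)$, together with a function $g\in\F'$; I must show $g\circ f\circ p\in C^\infty(D(p),\R)$. By the quoted fact it suffices to verify that $g\circ f\circ p\circ c\in C^\infty(\R,\R)$ for every smooth curve $c:\R\rightarrow D(p)$. The key step is that such a curve produces a contour $p\circ c\in\C$: indeed, for every $h\in\F$ we have $h\circ p\in C^\infty(D(p),\R)$ since $p$ is a plot of $\p_\infty(\F)$, whence $h\circ(p\circ c)=(h\circ p)\circ c\in C^\infty(\R,\R)$, which is exactly the defining condition for $p\circ c$ to lie in $\C$. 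Fr\"olicher smoothness of $f$ then gives $g\circ f\circ(p\circ c)\in C^\infty(\R,\R)$, and since $c$ was arbitrary the quoted fact yields $g\circ f\circ p\in C^\infty(D(p),\R)$; as $g\in\F'$ was arbitrary, $f\circ p\in\p_\infty(\F')$, so $f$ is smooth for the underlying diffeologies.

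For the converse, suppose $f$ is smooth for $\p_\infty(\F)$ and $\p_\infty(\F')$, and let $c\in\C$. Viewing $c$ as a $1$-parametrization, the condition $\F\circ c\subset C^\infty(\R,\R)$ defining a contour is precisely the condition for $c\in\p_\infty(\F)$; hence $f\circ c\in\p_\infty(\F')$, that is, $\F'\circ f\circ c\subset C^\infty(\R,\R)$. Since $c\in\C$ was arbitrary, $f$ is smooth in the sense of Fr\"olicher. I do not expect a real obstacle here --- the argument is pure bookkeeping with the two definitions --- the only delicate point being the passage from ``smooth along all curves'' to ``jointly smooth'' in the first implication, for which the cited theorem of Boman is indispensable.
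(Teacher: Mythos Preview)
Your proof is correct and is exactly the argument the paper has in mind: the paper does not spell out a proof here but refers to \cite{Ma} and, in the surrounding discussion, singles out Boman's theorem \cite[p.~26]{KM} as the key input, which is precisely how you handle the nontrivial direction. Both implications are treated correctly; in particular your observation that $p\circ c\in\C$ whenever $p\in\p_\infty(\F)$ and $c$ is a smooth curve into $D(p)$ is the right bridge between the two notions, and your use of Boman to pass from smoothness along curves to joint smoothness of $g\circ f\circ p$ is exactly the ``delicate point'' the paper alludes to.
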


Thus, we can also state:
\vskip 12pt

\begin{tabular}{ccccc}
Smooth manifold  & $\Rightarrow$  & Fr\"olicher space  & $\Rightarrow$  & 
Diffeological space
\end{tabular}

\vskip 12pt A deeper analysis of these implications has been given
in \cite{Wa}. The next remark is inspired on this work and on
\cite{Ma}; it is based on \cite[p.26, Boman's theorem]{KM}.

\begin{rem}
    The set of contours $\C$ of a Fr\"olicher space
    $(X,\F,\C)$ \textbf{does not} give us a diffeology, because a diffeology
    needs to be stable under restriction of domains. In the case of paths in
    $\C$ the domain is always $\R.$ However, $\C$ defines a ``minimal 
    diffeology'' $\p_1(\F)$ whose plots are smooth parametrizations which 
    are locally of the type $c \circ g,$ where $g \in \p_\infty(\R)$  and 
    $c \in \C.$ Within this setting,
    we can replace $\p_\infty$ by $\p_1$ in Proposition \ref{fd}.
\end{rem}

We also remark that given an algebraic structure, we can define a
corresponding compatible diffeological structure. For example, a
$\R-$vector space equipped with a diffeology is called a
diffeological vector space if addition and scalar multiplication
are smooth (with respect to the canonical diffeology on $\R$). An
analogous definition holds for Fr\"olicher vector spaces. We 
will also consider diffeological groups, see Section \ref{2.6.}

\begin{rem} \label{comp}
Fr\"olicher, $c^\infty$ (the ``smooth convenient setting" of 
\cite{KM}) and G\^ateaux smoothness are the same notion
if we restrict to a Fr\'echet context, see \cite[Theorem 4.11]{KM}.
Indeed, for a smooth map $f : (F, \p_1(F)) \rightarrow \R$ defined
on a Fr\'echet space with its 1-dimensional diffeology, we have
that $\forall (x,h) \in F^2,$ the map $t \mapsto f(x + th)$ is
smooth as a classical map in $\C^\infty(\R,\R).$ And hence, it is
G\^ateaux smooth. The converse is obvious.
\end{rem}

\begin{Definition}[Smooth Homotopy]\label{d:homotopic}
    Let $X$ and $Y$ be diffeological spaces and let $f_i\colon X\to Y$, $i 
    \in \{0,1\}$ be smooth maps. The maps $f_0$ and $f_1$ are 
    \textbf{(smoothly) homotopic} if there is a smooth map
    $H \colon X \times \R \rightarrow Y$ such that $H(\cdot,0)= f_0$ and 
    $H(\cdot,1) = f_1$.  We call the map $H$ a \textbf{smooth homotopy}.
\end{Definition}

This definition enables us to define straightforwardly smooth
homotopy equivalences, fundamental groups and related objects, see
\cite{Igdiff} for details.

\subsection{Tangent space}
Let $X$ be a diffeological space. There exist two main ways to define the
tangent space at a point $x \in X$.  The \textbf{internal tangent space} at 
$x \in X$ described in \cite{CW}, and the \textbf{external tangent space} 
$^eTX$, defined simply as the set of derivations on $C^\infty(X,\R)$,
see \cite{Igdiff,KM}. It is known that these two constructions coincide in 
the case of finite dimensional manifolds and in other important cases, see 
\cite[Section 28]{KM} and \cite{CW}. For us, it is actually enough
to define \textbf{tangent cones} after \cite{DN2007-1}.
	
	\smallskip
	
	For each $x\in X,$ we consider the set of paths 
	$$
	C_{x}=\{c \in C^\infty(\R,X)| c(0) = x\}
	$$ 
	and take the equivalence relation $\mathcal{R}$ given by 
	$$
	c\mathcal{R}c' \Leftrightarrow 
	\forall f \in C^\infty(X,\R), \quad \partial_t(f \circ c)|_{t = 0} = 
	\partial_t(f \circ c')|_{t = 0}\; .
	$$
	The tangent cone at $x$ is the quotient 
	$$
	^iT_xX = C_x / \mathcal{R}\; .
	$$ 
Equivalence classes of $\mathcal{R}$ are  denoted by 
	$V = \partial_tc(t)|_{t=0} = \partial_t c(0)\in {}^iT_xX$. We also use the notation  
	$$
	Df(V) = \partial_t(f \circ c)|_{t = 0}\; .
	$$
	
	It is shown in \cite[Section 2]{DN2007-1} that there exist examples of diffeological spaces for which 
	the tangent cone at a point $x$ is not a vector space, hence the need for more sophisticated definitions
	as in \cite{CW,Igdiff,KM}. However, in subsection 2.6 we show that this difficulty is absent in the case
	of diffeological groups. 

\subsection{Differential forms}

\begin{Definition} \cite{Sou}
    Let $(X,\p)$ be a diffeological space and let $V$ be a vector space equipped with a 
    differentiable structure. A $V-$valued $n$-differential form $\alpha$ on $X$ 
    (noted $\alpha \in \Omega^n(X,V))$ is a map
    $$ \alpha : \{p:O_p\rightarrow X\} \in \p \mapsto \alpha_p \in \Omega^n(O_p;V)$$
    such that

    $\bullet$ Let $x\in X.$ $\forall (p,p')\in \p^2$ such that $x\in Im(p)\cap Im(p')$,
    the forms $\alpha_p$ and $\alpha_{p'}$ are of the same order $n.$

    $\bullet$ Let $y\in O_p$ and $y'\in O_{p'}$, and assume that $(X_1,...,X_n)$ are $n$
     germs of paths in
    $Im(p)\cap Im(p').$ If there exists two systems of $n-$vectors 
    $(Y_1,...,Y_n)\in (T_yO_p)^n$ and $(Y'_1,...,Y'_n)\in (T_{y'}O_{p'})^n$ such that
    $p_*(Y_1,...,Y_n)=p'_*(Y'_1,...,Y'_n)=(X_1,...,X_n),$ then
    $$ \alpha_p(Y_1,...,Y_n) = \alpha_{p'}(Y'_1,...,Y'_n)\, .$$

    We note by $$\Omega(X;V)=\oplus_{n\in \mathbb{N}} \Omega^n(X,V)$$ the set of 
    $V-$valued differential forms.
\end{Definition}

We make two remarks for the reader:

\smallskip

$\bullet$ If there do not exist $n$ linearly independent vectors
$(Y_1,...,Y_n)$ as in the last point of the definition, then
$\alpha_p = 0$ at $y.$

$\bullet$ Let $(\alpha, p, p') \in \Omega(X,V)\times \p^2.$
If there exists $g \in C^\infty(D(p); D(p'))$ (in the usual sense)
such that $p' \circ g = p,$ then $\alpha_p = g^*\alpha_{p'}.$

\vskip 12pt
\begin{Proposition}
    The set $\p(\Omega^n(X,V))$ of all maps $q:x \mapsto 
    \alpha(x)$ from an open subset $O_q$ of $V$ 
 to $\Omega^n(X,V)$ such that for each $p \in \p,$ 
 $$\{ x \mapsto \alpha_p(x) \} \in C^\infty(O_q,\Omega^n(O_p,V))\; ,$$
    is a diffeology on $\Omega^n(X,V).$
\end{Proposition}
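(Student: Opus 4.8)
The statement asserts that the proposed collection $\p(\Omega^n(X,V))$ of parametrizations on the set $\Omega^n(X,V)$ satisfies the three axioms of a diffeology. The plan is to verify the three axioms directly, exploiting that "being a plot" is tested against the fixed family $\p$ of plots of $X$ and that each test lands in the honest diffeology $\p_\infty$ of the finite-dimensional space $\Omega^n(O_p,V)$ (the space of $\Omega^n(O_p;V)$ is a finite- or at worst reasonable-dimensional vector space, and in any case carries its standard smooth structure, for which $C^\infty$ has its usual meaning). Thus each axiom reduces, after composing with a plot $p\in\p$, to a classical statement about ordinary smooth maps.

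First I would check the \emph{constant axiom}: if $q:O_q\to\Omega^n(X,V)$ is constant with value $\alpha$, then for every $p\in\p$ the map $x\mapsto\alpha_p(x)$ is constant with value $\alpha_p\in\Omega^n(O_p,V)$, hence smooth; so $q\in\p(\Omega^n(X,V))$. Next the \emph{smooth reparametrization axiom}: given $q\in\p(\Omega^n(X,V))$ defined on $O_q\subset\R^k$ and an ordinary smooth $g:O'\to O_q$, then for each $p\in\p$ the composite $x\mapsto (q\circ g)_p(x)=\big(x\mapsto q_p(x)\big)\circ g$ is a composition of a classically smooth map with a classically smooth map, hence smooth; so $q\circ g\in\p(\Omega^n(X,V))$. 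Finally the \emph{sheaf/gluing axiom}: if $\{q_i:O_i\to\Omega^n(X,V)\}_{i\in I}\subset\p(\Omega^n(X,V))$ are compatible and glue to $q:\bigcup_i O_i\to\Omega^n(X,V)$, then for each $p\in\p$ the maps $x\mapsto (q_i)_p(x)$ glue to $x\mapsto q_p(x)$; since smoothness of a map on a union of open sets is a local condition and each piece is smooth, $x\mapsto q_p(x)$ is smooth; hence $q\in\p(\Omega^n(X,V))$.

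The only point requiring a little care—and the one I would flag as the main (though mild) obstacle—is making precise in what sense $x\mapsto q_p(x)$ "is" a map into the differentiable space $\Omega^n(O_p,V)$ and why "$\in C^\infty(O_q,\Omega^n(O_p,V))$" composes well under the operations above. Concretely, a point of $\Omega^n(O_p,V)$ is a $V$-valued $n$-form on the open set $O_p\subset\R^p$, i.e. a smooth section of $\Lambda^n T^*O_p\otimes V$; one should either invoke the standard convenient/Fr\"olicher structure on this space of sections (it is a Fr\'echet space, hence a Fr\"olicher vector space by Remark \ref{comp}), or equivalently observe that a family $x\mapsto q_p(x)$ is smooth iff $(x,y)\mapsto q_p(x)(y)$ is jointly smooth as a section-valued map, and then smoothness is manifestly preserved under the three operations by the exponential law. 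Once this identification is fixed, all three verifications are one-line applications of the corresponding classical facts (constants are smooth, compositions of smooth maps are smooth, smoothness is local), and the compatibility condition in the definition of differential forms ($p'\circ g=p\Rightarrow \alpha_p=g^*\alpha_{p'}$) plays no role beyond ensuring that the assignment $p\mapsto\alpha_p$ is well defined to begin with.
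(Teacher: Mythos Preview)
The paper states this proposition without proof, so there is nothing to compare against. Your direct verification of the three diffeology axioms is correct and is exactly the routine argument the authors are implicitly leaving to the reader; your flagged technical point about the meaning of $C^\infty(O_q,\Omega^n(O_p,V))$ is the only place where any care is needed, and your resolution via the Fr\'echet/Fr\"olicher structure on sections (or equivalently joint smoothness in $(x,y)$) is adequate.
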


Working on plots of the diffeology, we can define the product and the differential of 
differential forms, which have the same properties as the product and the differential of
differential forms, see for instance
\cite{Igdiff}.

\begin{Definition}
    Let $(X,\p)$ be a diffeological space.

    \noindent
    $\bullet$ $(X,\p)$ is \textbf{finite-dimensional} if and only if 
    $$\exists n_0\in\mathbb{N},\quad \forall n\in \mathbb{N}, \quad 
    n\geq n_0 \Rightarrow dim(\Omega^n(X,\mathbb{R}))=0\; .$$
    Then, we set $$dim(X,\p)=max\{n\in \mathbb{N}| dim(\Omega^n(X,\mathbb{R}))>0\}.$$
    \noindent
    $\bullet$ If not, $(X,\p)$ is called \textbf{infinite dimensional}.
\end{Definition}

Let us make a few remarks on this definition. If $X$ is a manifold
with $dim(X)=n,$ the n\'ebuleuse diffeology is such that
$$dim(X,\p_\infty)=n\; .$$ 
On the other hand, if $(X,\F,\C)$ is the natural
Fr\"olicher structure on the finite-dimensional manifold $X,$ and we take 
$\p_1$ defined as before, then it is 
an easy exercise to show that $$dim(X,\p_1)=1\; .$$ Therefore, the
dimension depends on the diffeology considered. Now, we remark
that $\F$ is not only the set of smooth maps $(X,\p_1)\rightarrow
(\mathbb{R},\p_1(\R)),$ but also the set of smooth maps
$(X,\p_1)\rightarrow (\mathbb{R},\p_\infty(\R)),$ and also the set
of smooth maps $(X,\p_\infty)\rightarrow
(\mathbb{R},\p_\infty(\R)).$ These observations follow from
adapting arguments of \cite{Ma} based on Boman's theorem as it
appears in \cite{KM}, see e.g. \cite{Wa}.

We are led to the following definition, since $\p_\infty(\F)$ is
clearly the diffeology with the largest dimension associated to
$(X,\F,\C)$:

\begin{Definition}
    The \textbf{dimension} of a Fr\"olicher space $(X,\F,\C)$ is
    the dimension of the diffeological space $(X,\p_\infty(\F)).$
\end{Definition}

\subsection{Push-forward, quotient and subsets}

We give here only the results that will be used hereafter.

\begin{Proposition} \cite{Sou,Igdiff} Let $(X,\p)$ be a diffeological space,
    and let $X'$ be a set. Let $f:X\rightarrow X'$ be a map.
    We define the \textbf{push-forward diffeology} as the coarsest (i.e. the 
    smallest for inclusion) among the diffologies
    on $X'$, which contains $f \circ \p.$ \end{Proposition}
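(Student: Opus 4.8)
The claim to verify is that the set $f \circ \p = \{\, f \circ p \mid p \in \p \,\}$, suitably completed, actually \emph{is} a diffeology on $X'$, and that among all diffeologies on $X'$ containing $f \circ \p$ there is a coarsest one, so that the phrase ``push-forward diffeology'' is well-defined. The plan is first to observe that the intersection of any family of diffeologies on a fixed set is again a diffeology: this is immediate from the axioms, since each of the three closure conditions (constant parametrizations belong; locality/gluing of compatible families; precomposition with ordinary smooth maps) is preserved under arbitrary intersection. Since the set of \emph{all} parametrizations on $X'$ is trivially a diffeology containing $f \circ \p$, the family of diffeologies on $X'$ containing $f \circ \p$ is nonempty, and its intersection is the desired coarsest diffeology $\p'$. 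This is the clean, structural argument, and it is the one I would present as the main line.

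Next I would give the concrete description of $\p'$, which is what one actually uses in computations: a parametrization $q \colon O \to X'$ belongs to $\p'$ if and only if it is ``locally liftable through $f$ modulo smooth reparametrization'', i.e. for every $r \in O$ there is an open neighbourhood $U \ni r$, a plot $p \in \p$ defined on an open $O'' \subseteq \R^k$, and an ordinary smooth map $g \in C^\infty(U, O'')$ such that $q|_U = f \circ p \circ g$; and in addition every constant parametrization is thrown in. One checks that this family satisfies the three diffeology axioms — constants by fiat, locality because the defining condition is itself local, and stability under precomposition with smooth maps because composing the reparametrizations $g$ with a further smooth map stays smooth — and that it contains $f \circ \p$ (take $U = O$, $g = \mathrm{id}$). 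Then one verifies minimality: any diffeology $\Q$ on $X'$ containing $f\circ\p$ must, by its own closure under precomposition, contain every $f \circ p \circ g$, and by its locality axiom contain every $q$ that is locally of this form; hence $\p' \subseteq \Q$. This shows the explicitly described family coincides with the intersection from the first paragraph.

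The main obstacle — really the only subtlety — is the bookkeeping in the locality/gluing step: one must be careful that the local lifts $f \circ p_i \circ g_i$ agreeing on overlaps genuinely glue to a parametrization in the family, which is where the ``for every point there exists a local lift'' formulation (rather than a single global lift) is essential, since $f$ need not be injective and a global section through $f$ need not exist. Everything else is routine verification of axioms. I would therefore structure the write-up as: (i) intersection of diffeologies is a diffeology, hence a coarsest one exists; (ii) the explicit local-liftability description; (iii) the two inclusions showing (i) and (ii) agree. The reader should note that when $f$ is surjective this is the quotient diffeology used later in the paper, and the same proof applies verbatim.
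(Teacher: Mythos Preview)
Your proposal is correct and follows the standard argument found in the references the paper cites (Souriau \cite{Sou}, Iglesias-Zemmour \cite{Igdiff}). However, the paper itself does not supply a proof for this proposition at all: it is stated with attribution to \cite{Sou,Igdiff} and immediately used to define the quotient diffeology in the following proposition, so there is nothing to compare your argument against. Your two-step treatment --- existence via intersection of diffeologies, then the explicit local-liftability description --- is exactly the textbook route, and the observation that this specializes to the quotient diffeology when $f$ is a surjection is precisely how the paper deploys it next.
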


We have now the tools needed to describe the diffeology on a quotient:

\begin{Proposition} \label{quotient} Let $(X,\p)$ be a diffeological
    space and let $\rel$ be an equivalence relation on $X$. Then, there is
    a natural diffeology on $X/\rel$, noted by $\p/\rel$, defined as
    the push-forward diffeology on $X/\rel$ induced by the quotient 
    projection $X\rightarrow X/\rel$. 
\end{Proposition}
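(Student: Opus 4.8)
The plan is to verify the three defining properties of a diffeology for the candidate set $\p/\rel$ obtained by push-forward along the quotient projection $\pi\colon X \to X/\rel$. The key observation is that the push-forward diffeology was already constructed in the preceding proposition, so what remains is genuinely routine: one only needs to unwind what ``coarsest diffeology containing $\pi \circ \p$'' means and check that such a diffeology exists, i.e. that the intersection of all diffeologies on $X/\rel$ containing $\pi\circ\p$ is itself a diffeology.

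First I would observe that the collection of all diffeologies on a given set $X/\rel$ is closed under arbitrary intersections: if $\{\p_i\}_{i\in I}$ is a family of diffeologies, then $\bigcap_{i\in I}\p_i$ contains every constant parametrization (since each $\p_i$ does), is stable under the gluing/compatibility axiom (a family of plots lying in the intersection lies in each $\p_i$, hence the glued map does too), and is stable under smooth reparametrization (again because each $\p_i$ is). Moreover the discrete-style maximal diffeology consisting of \emph{all} parametrizations into $X/\rel$ trivially contains $\pi\circ\p$, so the family of diffeologies on $X/\rel$ containing $\pi\circ\p$ is nonempty. Therefore its intersection is a well-defined diffeology, and by construction it is the smallest one containing $\pi\circ\p$; this is exactly the push-forward diffeology of Proposition~\ref{quotient}'s predecessor, applied to the map $\pi$. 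Setting $\p/\rel$ equal to this diffeology gives the asserted natural diffeology on $X/\rel$.

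The only mild subtlety — and the step I would flag as the ``main obstacle,'' though it is more a matter of care than of difficulty — is checking that $\pi\circ\p$ itself satisfies the compatibility (sheaf-type) axiom well enough for the construction to produce something nontrivial, and that the resulting diffeology is canonical in the sense that it does not depend on any choices. This is handled by the abstract intersection argument above: one never needs $\pi\circ\p$ to be a diffeology, only that \emph{some} diffeology contains it, and minimality then pins down $\p/\rel$ uniquely. I would close by remarking that naturality (functoriality) is immediate from the universal property: a map $g\colon X/\rel \to Y$ into a diffeological space $(Y,\p_Y)$ is smooth precisely when $g\circ\pi$ is smooth, since the plots of $\p/\rel$ are generated by $\pi\circ\p$, and this is the characterization that will be used when we later pass to quotients such as coset spaces of diffeological groups.
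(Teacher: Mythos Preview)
Your proposal is correct and follows the same approach as the paper: the proposition is an immediate application of the preceding push-forward construction to the quotient projection $\pi$. In fact the paper gives no proof at all for this statement, treating it as a direct consequence of the existence of the push-forward diffeology, so your write-up simply supplies the routine verification (closure of diffeologies under intersection, nonemptiness of the family containing $\pi\circ\p$) that the paper leaves to the reader.
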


Given a subset $X_{0}\subset X$, where $X$ is a Fr\"olicher space
or a diffeological space, we can define a subset structure on
$X_{0}$, induced by $X$:

$\bullet$ If $X$ is equipped with a diffeology $\p$, we can define
a diffeology $\p_{0}$ on $X_{0},$ called the \textbf{subset
diffeology} \cite{Sou,Igdiff} by setting 
\[ 
\p_{0}=\lbrace p\in\p \hbox{ such that the image of }p\hbox{ is a subset of
}X_{0}\rbrace\; .
\]

$\bullet$ If $(X,\F,\C)$ is a Fr\"olicher space, we take as a generating
set of maps $\F_{g}$ on $X_{0}$ the restrictions of the maps $f\in\F$.
In that case, the contours (resp. the induced diffeology) on $X_{0}$
are the contours (resp. the plots) on $X$ whose images are 
subsets of $X_{0}$. 

\subsection{Cartesian products and projective limits}

We review the differentiable structure of products and projective limits since they are 
crucial for our discussion of Fr\"olicher Lie groups of formal pseudo-differential 
operators. We begin with the following important proposition (see \cite{Sou}):

\begin{Proposition} \label{prod1} Let $(X,\p)$ and $(X',\p')$
    be two diffeological spaces. There exists a diffeology $\p\times\p'$ on
    $X\times X'$  made of plots $g:O\rightarrow X\times X'$
    that decompose as $g=f\times f'$, where $f:O\rightarrow X\in\p$
    and $f':O\rightarrow X'\in\p'$. We call it the \textbf{product diffeology}.  
\end{Proposition}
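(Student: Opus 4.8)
The plan is to verify directly that the proposed collection $\p\times\p'$ of "decomposable" parametrizations $g = f\times f'$ (with $f\in\p$, $f'\in\p'$) satisfies the three axioms of a diffeology on $X\times X'$. First I would check the constant axiom: any constant map $O\to X\times X'$ has the form $g = (c,c')$ with $c,c'$ constant, hence $c\in\p$ and $c'\in\p'$ by the constant axiom for $\p$ and $\p'$, so $g\in\p\times\p'$. Next, the smooth-reparametrization axiom: if $g = f\times f'\in\p\times\p'$ with $f\in\p$, $f'\in\p'$, and $h\colon O'\to O$ is smooth in the usual sense, then $g\circ h = (f\circ h)\times(f'\circ h)$, and $f\circ h\in\p$, $f'\circ h\in\p'$ since $\p,\p'$ are diffeologies; hence $g\circ h\in\p\times\p'$.

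The only axiom requiring genuine care is the locality (sheaf/gluing) axiom. Here I would argue as follows. Suppose $g\colon O\to X\times X'$ with $O = \bigcup_{i\in I}O_i$ open, and each restriction $g|_{O_i}$ lies in $\p\times\p'$, say $g|_{O_i} = f_i\times f_i'$ with $f_i\in\p$, $f_i'\in\p'$. The key observation is that composing with the canonical projections $\pi_X\colon X\times X'\to X$ and $\pi_{X'}\colon X\times X'\to X'$ recovers the components: $\pi_X\circ g|_{O_i} = f_i$ and $\pi_{X'}\circ g|_{O_i} = f_i'$. Thus the $f_i$ are pairwise compatible on overlaps (they all agree with $\pi_X\circ g$), and likewise the $f_i'$; they glue to maps $f\colon O\to X$ and $f'\colon O\to X'$ with $f|_{O_i}=f_i$, $f'|_{O_i}=f_i'$. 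By the locality axiom applied to $\p$ and to $\p'$ separately, $f\in\p$ and $f'\in\p'$. Finally $g = f\times f'$ because this identity holds on each $O_i$ and the $O_i$ cover $O$, so $g\in\p\times\p'$.

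Two further points I would record to complete the statement. First, the projections $\pi_X,\pi_{X'}$ are smooth for $\p\times\p'$ (immediate from the definition), and a map $\phi\colon Y\to X\times X'$ from a diffeological space $(Y,\p_Y)$ is smooth if and only if $\pi_X\circ\phi$ and $\pi_{X'}\circ\phi$ are smooth — this is the universal property that characterizes $\p\times\p'$ as \emph{the} product in the category of diffeological spaces, and in particular pins it down uniquely. Second, for the extension to an infinite product $\prod_{i\in I}X_i$ one repeats the same three verifications verbatim, a plot being a map $g$ that decomposes as $g = \prod_i f_i$ with each $f_i = \pi_i\circ g$ a plot of $(X_i,\p_i)$; the locality argument is unchanged since the projections still recover the components and gluing is performed coordinatewise.

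The main obstacle is purely bookkeeping rather than conceptual: one must be careful that "decomposes as $g = f\times f'$" is not an extra hypothesis to be checked for the glued map but a consequence, which is exactly why passing through the projections $\pi_X,\pi_{X'}$ is the right device — it simultaneously produces the candidate components, shows they glue (using locality in each factor), and shows the glued components reassemble to $g$. No analytic input is needed; everything reduces to the corresponding axioms for $\p$ and $\p'$.
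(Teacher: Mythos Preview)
Your proof is correct: the direct verification of the three diffeology axioms, with the locality axiom handled by passing to the projections $\pi_X,\pi_{X'}$ and gluing componentwise, is exactly the standard argument. The paper itself does not give a proof of this proposition---it simply cites \cite{Sou}---so there is no alternative approach to compare against; your argument is the expected one and fills in the omitted details.
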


We apply this result to the case of Fr\"olicher spaces and we derive very easily, 
(compare with e.g. \cite{KM}) the following:

\begin{Proposition} \label{prod2} 
Let $(X,\F,\C)$ and $(X',\F',\C')$ be two Fr\"olicher spaces equipped with 
their natural diffeologies $\p$ and $\p'$ . There is a natural structure of 
Fr\"olicher space on $X\times X'$ whose contours $\C\times\C'$ are the 1-plots of 
$\p\times\p'$. 
\end{Proposition}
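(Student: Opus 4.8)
The plan is to produce the Frölicher structure on $X \times X'$ directly from a generating set of functions, exactly as described in the paragraph following the second Definition of the excerpt, and then to check that its contours are precisely the $1$-plots of the product diffeology $\p \times \p'$. Concretely, I would take as generating set of functions on $X \times X'$ the family
$$
\F_g = \{ f \circ \mathrm{pr}_X : f \in \F \} \cup \{ f' \circ \mathrm{pr}_{X'} : f' \in \F' \},
$$
where $\mathrm{pr}_X, \mathrm{pr}_{X'}$ denote the two canonical projections. By the general construction recalled in the excerpt, $\F_g$ generates a Frölicher structure $(X \times X', \F, \C)$ whose contours are
$$
\C = \{ c : \R \to X \times X' \ \mid\ \F_g \circ c \subset C^\infty(\R,\R) \}.
$$
This is the candidate Frölicher structure; what remains is the identification of $\C$ with the set of $1$-plots of $\p \times \p'$, i.e. with $\C \times \C'$ in the notation of the statement.

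The key step is therefore the following chain of equivalences, for a path $c = (c_1, c_2) : \R \to X \times X'$. First, $c \in \C$ iff $f \circ c_1 \in C^\infty(\R,\R)$ for all $f \in \F$ and $f' \circ c_2 \in C^\infty(\R,\R)$ for all $f' \in \F'$; this is immediate from the definition of $\F_g$ and from $\mathrm{pr}_X \circ c = c_1$, $\mathrm{pr}_{X'} \circ c = c_2$. Second, by the definition of the contours of a Frölicher space, $f \circ c_1 \in C^\infty(\R,\R)$ for all $f \in \F$ is exactly the statement $c_1 \in \C$, and similarly $c_2 \in \C'$. Hence $c \in \C$ iff $c_1 \in \C$ and $c_2 \in \C'$, i.e. iff $c = c_1 \times c_2$ with $c_1 \in \C$, $c_2 \in \C'$. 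Comparing with Proposition \ref{prod1}, a $1$-plot of $\p \times \p'$ is by definition a map $\R \to X \times X'$ of the form $f \times f'$ with $f \in \p$, $f' \in \p'$ defined on $\R$; since the $1$-plots of the natural diffeology of a Frölicher space are exactly its contours (this is the content of the Remark following Proposition \ref{fd}, replacing $\p_\infty$ by $\p_1$), the $1$-plots of $\p \times \p'$ are precisely the paths $c_1 \times c_2$ with $c_1 \in \C$ and $c_2 \in \C'$. This gives $\C = \{1\text{-plots of }\p \times \p'\}$, as claimed.

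The one point requiring a little care — and the main (mild) obstacle — is to confirm that the Frölicher structure generated by $\F_g$ is genuinely the ``natural'' one on the product, i.e. that $\F$ coincides with $\{ f : X \times X' \to \R \mid f \circ \C \subset C^\infty(\R,\R)\}$ and is not strictly larger than the span of $\F_g$; but this is automatic, since by construction $\F$ is defined as exactly that set and $\F_g \subset \F$ always holds. A second routine verification is that the assignment $c \mapsto (c_1, c_2)$ respects the decomposition demanded in Proposition \ref{prod1}, which is trivial because $c = \mathrm{pr}_X \circ c \times \mathrm{pr}_{X'} \circ c$ set-theoretically. Finally, one should note that the argument extends verbatim to an arbitrary (possibly infinite) product, using the generating family $\bigcup_{i} \{ f \circ \mathrm{pr}_i : f \in \F_i \}$ and the infinite-product version of Proposition \ref{prod1}; no new idea is needed, only indexing bookkeeping. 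I would close by remarking that this is the expected compatibility result and that it makes the natural Frölicher structure on products functorial with respect to smooth maps, which is what will be used for the Lie groups of formal pseudo-differential operators in Section 3.
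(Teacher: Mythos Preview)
Your proof is correct and fills in precisely the details the paper omits: the paper itself does not prove Proposition \ref{prod2} but simply remarks that it is derived ``very easily'' from Proposition \ref{prod1} (with a pointer to \cite{KM}), and your argument via the generating family $\F_g = (\F \circ \mathrm{pr}_X) \cup (\F' \circ \mathrm{pr}_{X'})$ is the natural way to carry this out. One small clarification: the identification of contours with $1$-plots of the natural diffeology follows directly from the definition of $\p_\infty(\F)$ (a map $c:\R \to X$ satisfies $\F \circ c \subset C^\infty(\R,\R)$ iff $c \in \C$), rather than from the Remark on $\p_1$ you cite, but this does not affect the validity of your argument.
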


We can even state the results above for infinite products: 
we simply take cartesian products of the plots or of the contours.
Let us now consider projective limits of Fr\"olicher and diffeological spaces.

\begin{Proposition} \label{froproj} 
Let $\Lambda$ be an infinite set of indexes.

   $\bullet$ Let $\lbrace(X_{\alpha},\p_{\alpha})\rbrace_{\alpha\in\Lambda}$
    be a family of diffeological spaces indexed by $\Lambda$ and totally
    ordered by inclusion. We write $\lbrace i_{\beta,\alpha}:        
X_\alpha \rightarrow X_\beta\rbrace_{(\alpha,\beta)\in\Lambda^{2}}$ for 
    the corresponding family of inclusion maps, and we assume that they are smooth. 
    If $X=\bigcap_{\alpha\in\Lambda}X_{\alpha},$ then
    $X$ carries the \textbf{projective diffeology} $\p$, which is the
    pull-back of the diffeologies $\p_{\alpha}$ of each $X_{\alpha}$
    via the family of inclusion maps $\lbrace f_{\alpha}:X \rightarrow X_
    \alpha\rbrace_{\alpha\in\Lambda}.$ The diffeology
    $\p$ is made of all plots $g:O\rightarrow X$ such that for 
    each $\alpha\in\Lambda,$
    \[
    f_{\alpha}\circ g\in\p_{\alpha}\; .\]
    This is the largest diffeology for which the maps $f_{\alpha}$ are
    smooth.

    $\bullet$ Let 
    $\lbrace(X_{\alpha},\F_{\alpha},\C_{\alpha})\rbrace_{\alpha\in\Lambda}$
    be a family of Fr\"olicher spaces indexed by $\Lambda$ and totally ordered by inclusion.  
    There is a natural structure of Fr\"olicher space on $X=\bigcap_{\alpha\in\Lambda}X_{\alpha}$.
    A generating set of functions for this Fr\"olicher structure is the set
    of maps of the type:
    $$
    \bigcup_{\alpha \in \Lambda} \F_{\alpha} \circ f_{\alpha}\; .$$
\end{Proposition}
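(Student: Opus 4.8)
The plan is to treat the two bullets separately, in each case reducing every verification to the corresponding property already available for the individual spaces $X_\alpha$ and then transporting it through the inclusion maps. \textbf{The diffeological statement.} First I would check that, with $X=\bigcap_{\alpha\in\Lambda}X_\alpha$, the announced collection
\[
\p \;=\; \{\, g\colon O\to X \mid \forall\,\alpha\in\Lambda,\ f_\alpha\circ g\in\p_\alpha \,\}
\]
is a diffeology on $X$, by running through the three axioms of the definition of a diffeology one at a time. Constant parametrisations $c\colon\R^{p}\to X$ satisfy $f_\alpha\circ c=\textrm{const}\in\p_\alpha$ for all $\alpha$, so $c\in\p$; if $\{g_i\colon O_i\to X\}_{i\in I}\subset\p$ are compatible and glue to $g\colon\bigcup_iO_i\to X$, then for each $\alpha$ the family $\{f_\alpha\circ g_i\}_{i\in I}$ lies in $\p_\alpha$ and glues to $f_\alpha\circ g$, whence $f_\alpha\circ g\in\p_\alpha$ by the gluing axiom for $\p_\alpha$, so $g\in\p$; and if $g\in\p$ and $h$ is an ordinary smooth map into $D(g)$ then $f_\alpha\circ(g\circ h)=(f_\alpha\circ g)\circ h\in\p_\alpha$ by stability of $\p_\alpha$ under smooth reparametrisation, so $g\circ h\in\p$. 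None of these arguments cares about the cardinality of $\Lambda$. Once $\p$ is known to be a diffeology it is, by its very description, the largest diffeology on $X$ for which every inclusion $f_\alpha$ is smooth, and hence it equals the pull-back diffeology $\bigcap_\alpha f_\alpha^{*}\p_\alpha$ with $f_\alpha^{*}\p_\alpha=\{g\mid f_\alpha\circ g\in\p_\alpha\}$. I would also record that the smoothness assumed for the inclusions $i_{\beta,\alpha}\colon X_\alpha\to X_\beta$ (when $X_\alpha\subseteq X_\beta$) gives $f_\beta=i_{\beta,\alpha}\circ f_\alpha$, so that the conditions indexed by larger $\beta$ are implied by those indexed by smaller $\alpha$; this makes the projective system coherent but is not otherwise needed. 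If one wants it, the universal property --- a map $Y\to X$ out of any diffeological space is smooth iff all its composites with the $f_\alpha$ are --- drops out immediately from the description of $\p$.

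\textbf{The Fr\"olicher statement.} Here I would simply \emph{define} the Fr\"olicher structure on $X$ to be the one generated, via the generation procedure recalled right after the definition of a Fr\"olicher space, by the family of functions $\F_g=\bigcup_{\alpha\in\Lambda}\F_\alpha\circ f_\alpha$; this automatically yields a Fr\"olicher structure $(X,\F,\C)$ with $\F_g\subset\F$, and it makes the statement ``a generating set of functions is $\bigcup_\alpha\F_\alpha\circ f_\alpha$'' true by construction. The one thing left to identify is $\C$: by definition of the generated contours, $c\in\C$ iff $\psi\circ c\in C^\infty(\R,\R)$ for every $\psi\in\F_g$, i.e.\ iff $\phi\circ(f_\alpha\circ c)\in C^\infty(\R,\R)$ for every $\alpha$ and every $\phi\in\F_\alpha$, i.e.\ iff $f_\alpha\circ c\in\C_\alpha$ for every $\alpha$ (the last step being precisely the defining characterisation of the contours of $(X_\alpha,\F_\alpha,\C_\alpha)$). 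Reading $\bigcap_\alpha\C_\alpha$ as the set of paths into $X$ that become contours of every $X_\alpha$ after composition with the inclusion $f_\alpha$, this is exactly $\C=\bigcap_\alpha\C_\alpha$. Finally, for $c\in\C$ each $f_\alpha\circ c$ is a contour of $X_\alpha$ and hence a $1$-plot of $\p_\alpha=\p_\infty(\F_\alpha)$ straight from the definition of $\p_\infty$ (a contour $d$ satisfies $\F_\alpha\circ d\subset C^\infty(\R,\R)$, which is precisely the membership condition for $\p_\infty(\F_\alpha)$); therefore $c$ is a $1$-plot of the projective diffeology $\p=\bigcap_\alpha\p_\alpha$ built in the first bullet, which is the assertion that the contours are some of the $1$-plots of $\p$.

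\textbf{Where the work is.} The proof is essentially bookkeeping, since each axiom on the diffeological side and each identity on the Fr\"olicher side is a statement quantified universally over $\alpha\in\Lambda$ that reduces verbatim to the same property of a single $\p_\alpha$ or $\C_\alpha$. The two points that genuinely require care --- not length --- will be: first, fixing once and for all the convention that plots, contours and functions on the $X_\alpha$ are pushed to and pulled back from $X$ through the inclusions $f_\alpha$, with the $i_{\beta,\alpha}$ making these choices compatible, so that $\bigcap_\alpha\p_\alpha$, $\bigcap_\alpha\C_\alpha$ and the phrase ``pull-back diffeology'' are unambiguous; and second, noting that --- unlike in many limit constructions --- nothing here forces $\Lambda$ to be countable, because the gluing axiom of a diffeology already tolerates an arbitrary index set and every other condition is a plain universal quantification over $\alpha$. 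I would also flag the tacit standing assumption $X=\bigcap_\alpha X_\alpha\neq\emptyset$, under which everything above is literally meaningful.
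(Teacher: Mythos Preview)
The paper states this proposition without proof, so there is nothing to compare against; your argument is correct and is exactly the natural verification one would expect --- checking the three diffeology axioms pointwise in $\alpha$, then generating the Fr\"olicher structure from $\bigcup_\alpha \F_\alpha\circ f_\alpha$ and unwinding the definition of contours.
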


\subsection{Regular Fr\"olicher Lie groups} \label{2.6.}

This subsection is mostly inspired in \cite{Ma2013}, see also  \cite{Ma2015}.

\begin{Definition}
    Let $G$ be a group equipped with a diffeology $\p.$ We call $G$ a 
    \textbf{diffeological group} if multiplication and inversion are 
    smooth maps.
\end{Definition}

\noindent An analogous definition holds for Fr\"olicher groups.

\smallskip

After Iglesias-Zemmour, see \cite{Igdiff}, we note that an arbitrary 
diffeological group does not necessarily possess a Lie algebra. Now we give 
conditions for the existence of a tangent space at the identity element $e$ 
(of course, $e$ is precisely the identity mapping if    
the group $G$ is a group of transformations).

\begin{Definition}
The diffeological group $G$ is a \textbf{diffeological Lie group} if and only 
if $^iT_eG$ is a diffeological vector space and the derivative of the Adjoint action of $G$ 
on $^iT_eG$ defines a Lie bracket. In this case, we call $^iT_eG$ the Lie 
algebra of $G$, and we denote it generically by $\mathfrak{g}.$
\end{Definition}


\begin{rem}
    Let $G$ be a diffeological Lie group with Lie algebra
$$\mathfrak{g} = \{ \partial_t c(0) : c \in C_e \hbox{ and } c(0)=e \}\; . $$
We note that this definition coincides with the classical definition of the Lie 
algebra of a finite dimensional Lie group via germs of paths at $e.$ 
    We have: 
\begin{itemize}
\item Let $(X,Y) \in \mathfrak{g}^2$. Then, $X+Y = \partial_t(c.d)(0)$  where 
    $c,d \in C_e ^2,$ $c(0) = d(0) =e ,$ $X = \partial_t c(0)$ and 
    $Y = \partial_t d(0).$
\item Let $(X,g) \in \mathfrak{g}\times G$. Then, 
$Ad_g(X) = \partial_t(g c g^{-1})(0)$  
      where $c \in C_e ,$ $c(0) =e ,$ and $X = \partial_t c(0).$
\item Let $(X,Y) \in \mathfrak{g}^2$. Then, $[X,Y] = \partial_t( Ad_{c(t)}Y)$   
where $c \in C_e ,$ $c(0) =e ,$ and $X = \partial_t c(0).$
\end{itemize}
All these operations are smooth (and thus well-defined) in the context of 
Fr\"olicher Lie groups and Fr\"olicher Lie algebras as well.
\end{rem}

We now present a complete proof of \cite[Proposition 1.6]{Les} which states that if we start 
with a diffeological group $G$, we only need to check that $^iT_eG$ admits a Lie bracket in 
order for $G$ to be a diffeological Lie group with Lie algebra $^iT_eG$.

\begin{Proposition} \label{leslie}
    Let $G$ be a diffeological group. Then the tangent cone at the identity 
    element, $^iT_eG$, is a diffeological vector space.
\end{Proposition}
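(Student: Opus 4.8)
\medskip
\noindent
Here $T_eG$ denotes the internal tangent cone ${}^iT_eG=C_e/\rel$ of the Definition above, where $C_e=\{c\in C^\infty(\R,G):c(0)=e_G\}$, equipped with its natural diffeology: the push-forward under $\pi\colon c\mapsto\d_t c(0)$ of the functional diffeology on $C_e$. Thus $p\colon U\to T_eG$ is a plot precisely when, locally around every point of $U$, one has $p(u)=\d_t\gamma(u,t)|_{t=0}$ for some smooth $\gamma\colon U'\times\R\to G$ with $\gamma(\cdot,0)\equiv e_G$. For $X=\pi(c)\in T_eG$ and $f\in C^\infty(G,\R)$ write $Df(X)=\d_t(f\circ c)|_{t=0}$; by the definition of $\rel$ this depends only on $X$, and the family $\{Df\}_{f\in C^\infty(G,\R)}$ separates the points of $T_eG$. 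The plan has two parts: (i) transport every vector-space axiom through the injective map $X\mapsto(f\mapsto Df(X))$, turning it into an instance of the ordinary chain rule; (ii) deduce smoothness of addition and scalar multiplication by lifting plots of $T_eG$ to smooth families of paths and pushing the group operations through them.

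For (i) the single computation needed is a chain-rule lemma: for a smooth $m\colon G\times G\to G$ and $c,d\in C_e$, the map $(s,t)\mapsto f(m(c(s),d(t)))$ lies in $C^\infty(\R^2,\R)$, since $(s,t)\mapsto(c(s),d(t))$ is a plot of the product diffeology on $G\times G$ by Proposition \ref{prod1} and $f\circ m$ is smooth; the classical chain rule then gives
\[
\d_t\,f\!\left(m(c(t),d(t))\right)\big|_{t=0}
=\d_s\,f\!\left(m(c(s),e_G)\right)\big|_{s=0}+\d_t\,f\!\left(m(e_G,d(t))\right)\big|_{t=0}.
\]
Applied to the multiplication $m$, this turns the definition $X+Y=\d_t(c\cdot d)(0)$ of the preceding Remark into $Df(X+Y)=Df(X)+Df(Y)$, which at once yields both the independence of $X+Y$ from the representatives and commutativity, the right-hand side being symmetric. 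Associativity follows from the three-variable analogue, the class of the constant path $e_G$ is the zero vector, and applying the lemma to $t\mapsto c(t)c(t)^{-1}\equiv e_G$ together with smoothness of inversion $\iota$ gives $Df(X)+D(f\circ\iota)(X)=0$, so that $\pi(c^{-1})$ is an additive inverse of $\pi(c)$; hence $T_eG$ is an abelian group. For scalars I would use that reparametrising a path by $\phi\in C^\infty(\R,\R)$ with $\phi(0)=0$ multiplies the derivative at $0$ by $\phi'(0)$, so $\lambda\cdot X:=\pi(t\mapsto c(\lambda t))$ satisfies $Df(\lambda X)=\lambda\,Df(X)$; the remaining module axioms are then immediate from separation of points.

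For (ii), given plots $p,q\colon U\to T_eG$ with common domain $U$, around each point of $U$ I pick a neighbourhood $U'$ on which both lift to smooth families $\gamma,\delta\colon U'\times\R\to G$ with $\gamma(\cdot,0)=\delta(\cdot,0)\equiv e_G$, and note that $(u,t)\mapsto\gamma(u,t)\delta(u,t)$ is smooth (the plot $(\gamma,\delta)$ of $G\times G$ composed with multiplication) and equals $e_G$ at $t=0$; hence $u\mapsto\pi\!\left(\gamma(u,\cdot)\delta(u,\cdot)\right)=p(u)+q(u)$ is a plot, so $+$ is smooth. Similarly, for a plot $(\lambda,p)$ of $\R\times T_eG$ with local lift $\gamma$ of $p$, the map $(u,t)\mapsto\gamma(u,\lambda(u)t)$ is a smooth family in $C_e$ projecting to $\lambda(u)\cdot p(u)$, so scalar multiplication is smooth. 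Therefore $(T_eG,\p,+,\cdot)$ is a diffeological vector space. (If the associated Fr\"olicher structure is wanted too, it is the one generated by the functions $X\mapsto Df(X)$, $f\in C^\infty(G,\R)$.)

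The step I expect to be the genuine obstacle is the chain-rule lemma, i.e.\ the additive splitting of the differential of multiplication at $(e_G,e_G)$: its whole content is that a smooth real-valued function composed with a plot of $G\times G$ is an honest smooth map of several real variables --- so that the multivariable chain rule applies verbatim --- and this is exactly where Proposition \ref{prod1} (plots of a product are products of plots) enters. After that the abelian-group and $\R$-module axioms are bookkeeping, and the smoothness of the operations is mere transport through local lifts; the only mild nuisance is arranging simultaneous local lifts of two plots over a common domain. I would also include a short remark recording why ``cone'' is not a redundancy in general: without a group law there is no canonical way to form $X+Y$, and \cite{DN2007-1} exhibits diffeological spaces where ${}^iT_xX$ fails to be a vector space --- it is precisely the group multiplication of $G$ that repairs this.
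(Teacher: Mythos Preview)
Your proposal is correct, and it shares the paper's core mechanism: compose smooth paths with the group multiplication, note that $f$ evaluated on this gives an honest smooth function of several real variables, and apply the classical chain rule to obtain $Df(X+Y)=Df(X)+Df(Y)$. The injectivity of $X\mapsto(Df(X))_f$ then transfers all the vector-space axioms from $\R$ to $T_eG$.

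Where you differ from the paper is in economy and scope. You invoke Proposition~\ref{prod1} to see immediately that $(s,t)\mapsto(c(s),d(t))$ is a plot of $G\times G$ for the given diffeology $\p$, so $(s,t)\mapsto f(c(s)d(t))$ is classically smooth and the two-variable chain rule gives $Df(X+Y)=Df(X)+Df(Y)$ in one stroke; commutativity is then automatic from the symmetry of the right-hand side. The paper instead first passes to the n\'ebuleuse diffeology $\p_\infty$, builds three- and four-variable auxiliary maps $\Phi_3,\Phi_4$, proves $\partial_t x^{-1}(0)=-X$ via $\Phi_3$, then extracts commutativity from the commutator path $x(t)y(t)x^{-1}(t)y^{-1}(t)$ via $\Phi_4$, and finally relaxes back to an arbitrary diffeology $\p$ using $\p_1\subset\p\subset\p_\infty$ and Boman's theorem. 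Your route bypasses this detour entirely: since $c\circ\pi_1$ is a plot of $(G,\p)$ for any diffeology (axiom of reparametrisation), the product argument already works at the level of $\p$. You also supply part~(ii), the smoothness of $+$ and scalar multiplication for the quotient diffeology on $T_eG$, which the paper's proof does not address explicitly even though the statement asks for a \emph{diffeological} vector space. In short, same engine, but your argument is both shorter and more complete.
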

\begin{proof}
    Let $(X,Y)\in {}^iT_eG,$ $X = \d_t x(t)|_{t=0}$ and $Y = \d_t y(t)|_{t=0}
    $ From \cite{DN2007-1} and \cite{Les}, see e.g. \cite{Igdiff}, we know that 
    $^iT_eG$ is a cone, and we only have to complete the proof of \cite{Les} 
    to show that the smooth operation 
    $$X+Y = \partial_t (x(t).y(t))|_{t=0}$$ transforms $^iT_eG$ into a vector 
    space (smoothness of the operation $+$ follows from \cite[Lemma 1.2]{Les}). 
    
    First, we remark that 
$$
\forall \lambda \in \R, \quad \lambda X + \lambda Y = 
\partial_t(x(\lambda t).y(\lambda t))|_{t=0} = \lambda \partial_t(x( t).y(t))|_{t=0}
= \lambda(X+Y)\; .
$$
    Thus, we only have to prove that $X+Y = Y + X,$ that is 
    $$\partial_t (x(t).y(t))|_{t=0} = \partial_t (y(t).x(t))|_{t=0}\; .$$
    For this, we assume that $G$ is equipped with its nebuleuse diffeology 
    $\p_\infty$. We recall that this diffeology consists of all smooth maps
    $p : D(p) \subset \R^d \rightarrow G$ where $D(p)$ is equipped with the 
    diffeology $\p_1(D(p))$ generated by
    $$\left\{\gamma \in C^\infty(]a,b[,D(p)) \hbox{ (in the usual sense) } 
      |\, (a,b)\in \R^2 \wedge a<b \right\}\cup \{\{0\} \rightarrow G\}.$$
For the diffeology $\p_\infty,$ the group multiplication and inversion are also 
smooth, since they are based on smooth paths. This fact comes from a 
difficult theorem, \cite[Boman's theorem]{KM}, which implies that 
multiplication and inversion are smooth with respect to $\p_\infty$ if an only 
if they are smooth with respect to $\p_1(G),$ using the obvious notations. This 
change of diffeology is necessary at this stage in order to avoid longer 
arguments in this already very long proof; we will relax this assumption 
momentarily.
\vskip 12pt
\noindent
\underline{\it First step: $\d_tx^{-1}(t)|_{t=0} = (-1).\d_tx(t)|_{t=0}\, .$}

    If $y(t) = (x(t))^{-1},$ then $x(t).y(t)= e,$ $X + Y=0,$ and we get in 
    the same way that $Y + X = 0.$ Let $(-1).X$ be denoted by $-X$. For any 
    $f \in C^\infty(G,\R)$ we have 
 {  $$Df((-X)+ X+Y)=\partial_tf(x(-t).x(t).y(t))|_{t=0}\; .$$}
    Let $a,b,c$ be three smooth paths such that $a(0) = b(0) = c(0) = e.$
    Then the map $$\Phi_3: (t_1,t_2,t_3)\mapsto f(a(t_1).b(t_2).c(t_3))\; ,$$
    defined on an open neighbourhood of $(0,0,0)\subset \R^3,$ is a 
    (classical) smooth map, and 
    $$D_{(0,0,0)}\Phi_3(v_1,v_2,v_3) = \d_tf\circ a(t)|_{t=0}.v_1 + 
    \d_tf\circ b(t)|_{t=0}.v_2 +\d_t f \circ c(t)|_{t=0}.v_3\; .$$
    Let us apply this to $a(t)=x(-t),$ $b(t) = x(t)$ and $c(t) = y(t) = (x(t))^{-1},$ 
    with $v_1=v_2=v_3=1.$ We obtain
    \begin{eqnarray*}
    Df(-X)& = &Df(-X) + Df(X+Y)\\
    & = & Df(-X) + Df(\partial_t (x(t).y(t))|_{t=0}) \\
    & = & Df(-X) + \partial_tf(x(t).y(t))|_{t=0}  \\
    & = &\d_tf\circ x(-t)|_{t=0} + \d_tf\circ x(t)|_{t=0} +\d_t f \circ y(t)|_{t=0}\\
    & = &Df(-X+X) + Df(Y)\\& =& Df(Y)\; .
    \end{eqnarray*}
    Thus $Y =-X$ when $G$ is equipped with $\p_\infty.$
    \vskip 12pt
    \noindent
    \underline{\it second step: $X+Y = Y + X.$}

    Here, $y$ is no longer equal to $x^{-1}.$
    Let $f\in C^\infty(G,\R)$ and let us define $$
    \Phi_4(t_1,t_2,t_3,t_4)=f(a(t_1)b(t_2)c(t_3)d(t_4))$$
    on an open neighbourhood of $(0,0,0,0)\subset \R^4,$ where $a,b,c,d$ are 
    smooth paths such that $a(0)=b(0)=c(0)=d(0)=e.$
    Then, mimicking the arguments of the first step, $\Phi_4$ is smooth, and
     \begin{eqnarray*}
     D_{(0,0,0,0)}\Phi_4(v_1,v_2,v_3,v_4)& = & \d_tf\circ a(t)|_{t=0}.v_1 + 
     \d_tf\circ b(t)|_{t=0}.v_2 \\ && +\d_t f \circ c(t)|_{t=0}.v_3 + \d_t f 
     \circ d(t)|_{t=0}.v_4.\end{eqnarray*}
    Setting $v_1 = v_2 = v_3 = v_4 = 1,$ with $a=x, b=y, c=x^{-1}, d=y^{-1},$ 
    we get
     \begin{eqnarray*}
   { \d_tf(x(t).y(t).x^{-1}(t).y^{-1}(t)) }& = & Df(X)+ Df(Y) + Df(-X) + Df(-Y) \\
     & = & Df(X+Y) + \left(-Df(X+Y)\right) \\
     & = & 0
     \end{eqnarray*}
     on the one hand. And on the other hand,
     \begin{eqnarray*}
        \d_tf(x(t).y(t).x^{-1}(t).y^{-1}(t)) & = & Df(X)+ Df(Y) + Df(-X) + Df(-Y) \\
        & = & \d_tf(x(t)y(t)) + \d_tf\left(x^{-1}(t)y^{-1}(t)\right) \\
        & = & \d_tf(x(t)y(t)) + \d_tf\left((y(t)x(t))^{-1}\right) \\
        & = & Df\left(\,(X+Y)+ ( - (Y+X))\,\right)
     \end{eqnarray*}
     Thus in $^iTeG,$ $$X+Y = Y+X,$$
     and hence $^iT_eG$ is a vector space if $G$ is equipped with the 
     $\p_\infty$ diffeology. 

     \vskip 12pt
     \noindent
     \underline{\it Third step: relaxing the n\'ebuleuse diffeology}
     Let $\p_1(G)$ be the diffeology generated by $0-$ and $1-$dimensional 
     plots of $\p$, and let $\p_\infty(G)$ defined as before.
     The first two steps of the foregoing proof are based on the fact that 
     the maps
     $$(t_1,t_2,t_3) \mapsto a(t_1)b(t_2)c(t_3)$$
     and
     $$ (t_1,t_2,t_3,t_4) \mapsto a(t_1)b(t_2)c(t_3)d(t_4)$$
     are plots of the nebuleuse diffeology. But they are also smooth maps for 
     the $\p_1(G)$ diffeology, and hence if 
     $f \in C^\infty((G,\p_1(G)), \R),$ the maps $\Phi_3 $ and $\Phi_4$ 
     remain smooth, and hence the first and the second step remain valid.
     Let us now consider $\p$ a diffeology on $G$ such that $(G,\p)$
     is a diffeological group.
     Then $$\p_1(G) \subset \p \subset \p_\infty(G)$$
     and $$ C^\infty((G,\p_1(G)),\R) \supset   C^\infty((G,\p),\R) \supset  
     C^\infty((G,\p_\infty(G)),\R) .$$

     We have, $\forall f \in C^\infty((G,\p_1(G)),\R),$ and for all smooth 
     paths $ x(t)$ and $y(t)$ in $\p$ and hence in $\p_1(G)$ as above, 
     $$\d_tx^{-1}(t)|_{t=0} = (-1).\d_tx(t)|_{t=0} $$ and  
     $$Df(X+Y) = Df(Y+X).$$
     This is then true for each $f\in C^\infty((G,\p),\R),$ and this fact
      completes the proof.
\end{proof}

Now we go back to the problem of equipping $^iT_eG$ with a Lie algebra structure. 
We remark that, actually, the condition on the existence of a Lie
bracket obtained from  differentiation of the adjoint action of
the group $G$ on ${}^iT_eG$ cannot be relaxed: it has to be
assumed in order to give a structure of Lie algebra to ${}^iT_eG.$
However, if both $G$ and $^iT_eG$ are smoothly embedded into a
diffeological algebra $A,$ and if group multiplication and
$^{i}T_eG-$addition are the pull-back of the operations on $A,$
then the Lie bracket exists and is defined by the standard
relation 
\begin{equation} \label{bra}
[u,v]=uv-vu
\end{equation}
 as long as $^iT_eG$ is stable under (\ref{bra}). We can
 ask the following natural question, somewhat reminiscent
of Hilbert's classical problem on the relation between topological
groups and Lie groups:

\vskip 12pt \centerline{\it Does there exist a diffeological (or
Fr\"olicher) group}

\centerline{\it which is not a diffeological (or Fr\"olicher) Lie group ?}
\vskip 12pt

Let us now concentrate on diffeological and Fr\"olicher Lie groups. We write
 $\mathfrak{g}={}^iT_eG.$
The basic properties of adjoint and coadjoint actions, and of Lie brackets, remain 
globally the same as in the case of finite-dimensional Lie groups, and the proofs are 
similar: we only need to replace charts by plots of the underlying diffeologies (see e.g. 
\cite{Les} for further details, and \cite{BN2005} for the case of Fr\"olicher Lie groups), 
as soon as we have checked that the Lie algebra $\mathfrak{g}$ is a diffeological Lie 
algebra, i.e. a diffeological vector space equipped with a smooth Lie bracket.

\begin{Definition} \label{reg1} 
    A Fr\"olicher Lie group $G$ with Lie algebra $\mathfrak{g}$
    is called \textbf{regular} if and only if there is a smooth map \[
    Exp:C^{\infty}([0,1],\mathfrak{g})\rightarrow C^{\infty}([0,1],G)\]
    such that $g(t)=Exp(v(t))$ is the unique solution
    of the differential equation 
\begin{equation}
    \label{loga}
    \left\{ \begin{array}{rcl}
    g(0) & = & e\\  \displaystyle  
    \frac{dg(t)}{dt}\,g(t)^{-1} & = & v(t)\; ,\quad \quad
    v(t) \in C^{\infty}([0,1],\mathfrak{g}) \; .  \end{array}\right. 
\end{equation}
    We define the exponential function as follows:
    \begin{eqnarray*}
        exp:\mathfrak{g} & \rightarrow & G\\
        v & \mapsto & exp(v)=g(1) \; ,
    \end{eqnarray*}
    where $g$ is the image by $Exp$ of the constant path $v.$ \end{Definition}

We can also define the Riemann integral of smooth $\mathfrak{g}-$valued functions, 
see \cite{Ma2015}.

\begin{Definition} \label{reg2}
    Let $(V,\F, \C)$ be a Fr\"olicher vector space, i.e. a vector space $V$ 
    equipped with a Fr\"olicher structure compatible
    with  vector space addition and  scalar multiplication. The space $(V,\F, 
    \C)$ is \textbf{regular} if there is a smooth map
    $$ \int_0^{(.)} : C^\infty([0;1];V) \rightarrow C^\infty([0;1],V)$$ such 
    that $\int_0^{(.)}v = u$ if and only if $u$ is the unique solution of
    the differential equation
    \[
    \left\{ \begin{array}{l}
    u(0)=0\\
    u'(t)=v(t)\; . \end{array}\right. \]
\end{Definition}

\noindent This definition applies, for instance, if $V$ is a complete locally convex 
topological vector space, equipped with its natural Fr\"olicher structure 
given by the Fr\"olicher completion of its n\'ebuleuse diffeology, see 
\cite{Igdiff,Ma,Ma2013}.

\begin{Definition}
    Let $G$ be a Fr\"olicher Lie group with Lie algebra $\mathfrak{g}.$ Then, 
    $G$ is \textbf{fully regular}
    if both $G$ and $\mathfrak{g}$ are regular in the sense of definitions 
    \ref{reg1} and \ref{reg2} respectively.
\end{Definition}

For completeness, let us mention that ---following terminology 
used in the early investigations on infinite dimensional Lie theory 
(\cite{Om}; see also \cite{Neeb2007})--- a regular 
Lie algebra $\mathfrak{g}$ is said to be \textbf{enlargeable}
if there exists a (not necessarily regular) Fr\"olicher Lie group $ G$
with Lie algebra   $\mathfrak{g}.$

\begin{Theorem} \cite{Ma2013}
    \label{Lie}
    Let $G$ be a regular Fr\"olicher Lie group with Lie algebra $\mathfrak{g}.$
    Let $\mathfrak{g}_1$ be a Lie subalgebra of $\mathfrak{g}$, and
    set $G_1 = Exp(C^\infty([0;1];\mathfrak{g}_1))(1).$
    If $Ad_{G_1 \cup G_1^{-1}}(\mathfrak{g_1})=\mathfrak{ g}_1,$
    $$i.e. \quad \forall g \in Exp(C^\infty([0;1];\mathfrak{g}_1))(1),\quad  
    \forall v \in \mathfrak{g}_1, \quad Ad_gv \in \mathfrak{g}_1 \hbox{ and } 
    Ad_{g^{-1}}v \in \mathfrak{g}_1\; ,$$ then $G_1$ is a Fr\"olicher 
    subgroup of $G.$
\end{Theorem}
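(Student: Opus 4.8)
The plan is to verify the three defining properties of a Fr\"olicher subgroup for $G_1 = \mathrm{Exp}(C^\infty([0;1];\mathfrak{g}_1))(1)$: that it is a subgroup of $G$, that it carries a natural Fr\"olicher structure making it a Fr\"olicher Lie group, and that the inclusion $G_1 \hookrightarrow G$ is smooth with $^iT_{e}G_1 = \mathfrak{g}_1$. The key technical input is the $\mathrm{Ad}$-invariance hypothesis $\mathrm{Ad}_{G_1 \cup G_1^{-1}}(\mathfrak{g}_1) = \mathfrak{g}_1$, which is exactly what is needed to keep the relevant logarithmic derivatives inside $\mathfrak{g}_1$.

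First I would show $G_1$ is closed under multiplication. Given $g = \mathrm{exp}$-type elements $g(1), h(1)$ arising from paths $g(t) = \mathrm{Exp}(v(t))$, $h(t) = \mathrm{Exp}(w(t))$ with $v, w \in C^\infty([0;1];\mathfrak{g}_1)$, one forms the concatenated path that first runs through $g$ and then left-translates $h$; its logarithmic derivative $\frac{d}{dt}(gh)(t)\,(gh)(t)^{-1}$ is, on the second portion, of the form $\mathrm{Ad}_{g(t)}(w(\cdot))$, which lies in $\mathfrak{g}_1$ precisely by the hypothesis $\mathrm{Ad}_{G_1}(\mathfrak{g}_1) \subseteq \mathfrak{g}_1$ (after reparametrizing so the concatenation is smooth, using a smooth monotone reparametrization that is constant near the endpoints — a standard device). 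By regularity of $G$ (Definition \ref{reg1}), the resulting path is $\mathrm{Exp}$ of a $C^\infty([0;1];\mathfrak{g}_1)$-valued curve, so $g(1)h(1) \in G_1$. For inverses, if $g(t) = \mathrm{Exp}(v(t))$ then the logarithmic derivative of $t \mapsto g(t)^{-1}$ is $-\mathrm{Ad}_{g(t)^{-1}}(v(t))$, which lies in $\mathfrak{g}_1$ by the hypothesis $\mathrm{Ad}_{G_1^{-1}}(\mathfrak{g}_1) \subseteq \mathfrak{g}_1$; hence $g(1)^{-1} \in G_1$. The neutral element is $\mathrm{exp}(0)$, so $G_1$ is a subgroup.

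Next I would equip $G_1$ with the subset Fr\"olicher structure inherited from $G$ (described in Subsection 2.4), so that the inclusion is automatically smooth and $G_1$ becomes a diffeological, indeed Fr\"olicher, group: multiplication and inversion on $G_1$ are restrictions of the smooth operations on $G$ and, by the previous step, land in $G_1$, hence are smooth for the subset structure. It then remains to identify the Lie algebra: $^iT_eG_1$ consists of germs at $0$ of contours in $G_1$, and these are exactly germs of curves of the form $\mathrm{Exp}(v(\cdot))$ with $v(0) \in \mathfrak{g}_1$, so differentiating at $t=0$ gives $^iT_eG_1 = \mathfrak{g}_1$ as a subspace of $\mathfrak{g}$; the bracket is inherited, and since $\mathfrak{g}_1$ was assumed a Lie subalgebra it is closed under it. One should remark that regularity of $G_1$ itself (in the sense of Definition \ref{reg1}) follows from the regularity of $G$ together with the $\mathrm{Ad}$-invariance, since the $\mathrm{Exp}$ map of $G$ restricts to one on $C^\infty([0;1];\mathfrak{g}_1)$ with values in $C^\infty([0;1];G_1)$.

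The main obstacle I anticipate is the smoothness of the concatenation/reparametrization step: one must ensure that gluing two $\mathrm{Exp}$-curves and solving the logarithmic equation produces a genuinely smooth path in $G$ whose driving curve is smooth into $\mathfrak{g}_1$, and that the uniqueness clause in Definition \ref{reg1} is correctly invoked so that $G_1$ does not depend on the choice of representing paths. This is where the hypothesis $Ad_{G_1 \cup G_1^{-1}}(\mathfrak{g}_1) = \mathfrak{g}_1$ does the real work, guaranteeing closure of the relevant logarithmic derivatives under the group operations; the remaining verifications (subset structure, identification of the tangent cone) are routine given the machinery of Section 2.
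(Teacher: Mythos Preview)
The paper does not prove this theorem; it is quoted from \cite{Ma2013} and stated without argument, so there is no in-paper proof to compare against. Your outline follows the standard route one would expect for such a result and is essentially sound: the $\mathrm{Ad}$-invariance hypothesis is used exactly where it must be, namely to show that the right logarithmic derivative of a product path (via $\mathrm{Ad}_{g}$) and of an inverse path (via $-\mathrm{Ad}_{g^{-1}}$) stay in $\mathfrak{g}_1$, so that regularity of $G$ forces the endpoints back into $G_1$. The subset Fr\"olicher structure and the identification $^iT_eG_1=\mathfrak{g}_1$ are handled correctly.

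One point to tighten: in your closure-under-multiplication step you write the logarithmic derivative on the second portion as $\mathrm{Ad}_{g(t)}(w(\cdot))$, but after concatenation $g$ is frozen at $g(1)$ on that portion, so the relevant expression is $\mathrm{Ad}_{g(1)}(w(\cdot))$; this still lies in $\mathfrak{g}_1$ by hypothesis, but the distinction matters for correctness. Also be careful that the statement only claims $G_1$ is a Fr\"olicher \emph{subgroup}, not a regular Fr\"olicher Lie group, so your final remark about regularity of $G_1$ goes beyond what is asserted (though it is a natural and correct addendum).
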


We finish this subsection stating two further key results from \cite{Ma2013,Ma2015} which 
we need in Section \ref{S3}.

\begin{Theorem} \label{regulardeformation}
    Let $(A_n)_{n \in \N^*} $ be a sequence of  (Fr\"olicher)
    vector spaces which are regular,
    equipped with a graded smooth multiplication operation
    on $ \bigoplus_{n \in \N^*} A_n ,$ i.e. a multiplication such that for each $n,m
    \in \N^*$,
    $A_n .A_m \subset A_{n+m}$ is smooth with respect to the corresponding Fr\"olicher 
    structures.
    Let us define the (non unital) algebra of formal series
    $$\A= \left\{ \sum_{n \in \N^*} a_n \,|\, \forall n \in \N^* , a_n \in A_n \right\},$$
    equipped with the Fr\"olicher structure of an infinite product. 
    Then, the set
$$
1 + \A = \left\{ 1 + \sum_{n \in \N^*} a_n | \forall n \in \N^* , a_n \in A_n \right\} 
$$
    is a Fr\"olicher Lie group with regular  Fr\"olicher Lie algebra $\A.$ 
    Moreover, the exponential map defines a smooth bijection $\A \rightarrow 1+\A.$
\end{Theorem}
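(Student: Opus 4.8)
The plan is to construct everything explicitly from the formal-series structure and then check the required smoothness, regularity and bijectivity properties term by term, using the fact that each finite truncation involves only finitely many of the $A_n$, so that all arguments reduce to the known regularity of the individual $A_n$ together with the projective-limit machinery of Proposition \ref{froproj} and the product machinery of Propositions \ref{prod1} and \ref{prod2}.

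\medskip

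\noindent\textbf{Step 1: Group structure on $1+\A$.} First I would observe that $1+\A$ is a group under the multiplication induced by the graded multiplication on $\A$: given $1+a$ and $1+b$ with $a=\sum_{n\geq 1}a_n$, $b=\sum_{n\geq 1}b_n$, the product is $(1+a)(1+b)=1+a+b+ab$, and $ab=\sum_{n\geq 2}\big(\sum_{p+q=n}a_pb_q\big)\in\A$ because the grading forces only finitely many terms to contribute in each degree $n$. For the inverse of $1+a$, I would exhibit the explicit Neumann-type series $(1+a)^{-1}=1+\sum_{k\geq 1}(-1)^k a^k$, noting again that the coefficient of degree $n$ in $\sum_k(-a)^k$ is a \emph{finite} sum (only $k\leq n$ contribute, since $a\in\A$ has no degree-$0$ part), so the expression lies in $1+\A$; one then checks directly that it is a two-sided inverse.

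\medskip

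\noindent\textbf{Step 2: Fr\"olicher structure and smoothness of group operations.} The Fr\"olicher structure on $\A$ is that of the infinite product $\prod_{n\geq 1}A_n$, for which the projection $\pi_n:\A\to A_n$ is smooth; correspondingly $1+\A$ carries the trace structure. To prove multiplication $1+\A\times 1+\A\to 1+\A$ is smooth, by the defining property of the product Fr\"olicher structure (Proposition \ref{prod2} extended to infinite products) it suffices to show that for each $n$, the map $(a,b)\mapsto\pi_n$ of the product is smooth; but $\pi_n$ of the product depends only on $a_1,\dots,a_{n-1},b_1,\dots,b_{n-1}$ through finitely many applications of the bilinear graded multiplications $A_p\times A_q\to A_{p+q}$, each of which is smooth by hypothesis, and smoothness is preserved by finite compositions, sums and products in the Fr\"olicher category. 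The same argument handles inversion: the degree-$n$ component of $(1+a)^{-1}$ is a fixed polynomial expression in $a_1,\dots,a_n$ built from the graded multiplications, hence smooth. Thus $1+\A$ is a Fr\"olicher Lie group.

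\medskip

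\noindent\textbf{Step 3: Lie algebra, its regularity, and the exponential.} The Lie algebra: a smooth path $c(t)=1+\sum_n a_n(t)$ with $c(0)=1$ has $\partial_t c(0)=\sum_n a_n'(0)\in\A$, and conversely every element of $\A$ arises this way, so $\,^iT_{1}(1+\A)=\A$ as a vector space; the bracket is $[u,v]=uv-vu$ via \eqref{bra}, which lands in $\A$ since the grading is respected, and is smooth by the Step 2 argument. Regularity of the Lie algebra $\A$: since each $A_n$ is regular, the componentwise integral $\big(\int_0^{(\cdot)}v\big)_n=\int_0^{(\cdot)}v_n$ defines a smooth map $C^\infty([0;1];\A)\to C^\infty([0;1];\A)$ solving $u'=v$, $u(0)=0$ uniquely; smoothness into the product follows because each component is the known smooth integral on $A_n$. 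For $G$-regularity, I would solve $g'(t)g(t)^{-1}=v(t)$, $g(0)=1$ degree by degree: writing $g(t)=1+\sum_n g_n(t)$, the degree-$n$ equation reads $g_n'(t)=v_n(t)+\big(\text{polynomial in }g_1,\dots,g_{n-1},v_1,\dots,v_{n-1}\text{ and their relation to }v\big)$, i.e. an \emph{inductively solvable} linear ODE whose right-hand side is already determined by lower degrees; integrating (using regularity of $A_n$) gives a unique $g_n$, and the resulting $Exp$ is smooth since each component is built from finitely many smooth integrations and graded products. Finally, for the exponential bijection $\A\to 1+\A$: $exp(v)=g(1)$ where $g$ solves the above with constant path $v$; I would show injectivity and surjectivity by constructing the inverse $\log(1+a)=\sum_{k\geq 1}\frac{(-1)^{k+1}}{k}a^k$ (again a finite sum in each degree), check $exp$ and $\log$ are mutually inverse degree by degree by reducing to the classical formal identity $\exp\log=\mathrm{id}$ in the free setting, and note both are smooth by the same finite-expression argument.

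\medskip

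\noindent\textbf{Expected main obstacle.} The routine parts are the algebraic identities; the genuinely delicate point is the smoothness of $Exp:C^\infty([0;1];\A)\to C^\infty([0;1];1+\A)$, because one must verify that the degree-by-degree solution assembles into a \emph{jointly smooth} map on the path spaces, i.e. that smoothness in the Fr\"olicher sense is inherited through the infinite inductive construction. This is where one invokes carefully that the Fr\"olicher structure on the product is determined componentwise (Proposition \ref{froproj}), that $C^\infty([0;1];-)$ commutes suitably with the product, and that at each finite stage only the regularity of finitely many $A_n$ and the smoothness of finitely many graded multiplications intervene, so no genuine infinite-dimensional analytic difficulty ever appears in a single component. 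Uniqueness of the solution of \eqref{loga} likewise follows from the uniqueness at each degree.
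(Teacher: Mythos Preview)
The paper does not actually prove Theorem~\ref{regulardeformation}: it is introduced with the sentence ``Now we state two further key results from \cite{Ma2013,Ma2015}'' and no proof is supplied, so there is no in-paper argument to compare against. Your proposal therefore has to be judged on its own.

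Your approach is correct and is precisely the expected one for this kind of graded/filtered formal-series group: every algebraic identity (inverse, $\exp$, $\log$, the logarithmic-derivative ODE) truncates to a finite polynomial expression in each fixed degree $n$, and smoothness into the product Fr\"olicher structure is checked componentwise. The inductive solution of $g'g^{-1}=v$ you sketch is exactly right: writing $g'=vg$ gives $g_n'=v_n+\sum_{p+q=n,\;p,q\geq 1}v_p\,g_q$, so at stage $n$ the right-hand side is a known smooth $A_n$-valued path and regularity of $A_n$ provides $g_n$ uniquely. Your identification of the one genuine issue --- that joint smoothness of $Exp$ on path spaces must be argued via the componentwise characterisation of the product Fr\"olicher structure (so that it suffices to check smoothness of each $v\mapsto g_n$, which is a finite composition of smooth integrations and graded multiplications) --- is accurate, and your proposed resolution is the correct one. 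The $\exp/\log$ bijection via the classical formal identities, with convergence in each degree guaranteed by the grading, is likewise the standard and valid argument.

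In short: the paper outsources this proof, and what you have written is essentially the proof one finds in the cited sources.
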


\noindent
\textbf{Notation:} for each $u \in \A,$ we note by $[u]_n$ the $A_n$-component of $u.$

\begin{Theorem}\label{exactsequence}
Let
$$ 
1 \longrightarrow K \stackrel{i}{\longrightarrow} G \stackrel{p}{\longrightarrow}  H 
\longrightarrow 1 
$$
be an exact sequence of Fr\"olicher Lie groups, such that there is a smooth section 
$s : H \rightarrow G,$ and such that the subset diffeology  from $G$ on $i(K)$ coincides 
with the push-forward diffeology from $K$ to $i(K).$    We let 
$$ 
0 \longrightarrow \mathfrak{k} \stackrel{i'}{\longrightarrow} \mathfrak{g} 
\stackrel{p}{\longrightarrow}  \mathfrak{h} \longrightarrow 0  
$$
be the corresponding sequence of Lie algebras. Then,
\begin{itemize}
\item The Lie algebras $\mathfrak{k}$ and $\mathfrak{h}$ are regular if and only if the
Lie algebra $\mathfrak{g}$ is regular;
\item The Fr\"olicher Lie groups $K$ and $H$ are regular if and only if the Fr\"olicher 
Lie group $G$ is regular.
\end{itemize}
\end{Theorem}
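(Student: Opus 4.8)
The plan is to prove the two "if and only if" statements by transferring the defining integral/ODE problems back and forth through the exact sequence, using the smooth section $s$ and the splitting it induces at the level of diffeological spaces. The key structural observation is that since $s:H\to G$ is a smooth section of $p$, the map $K\times H\to G$, $(k,h)\mapsto i(k)\cdot s(h)$ is a diffeomorphism of Fr\"olicher spaces onto $G$ (using the hypothesis that the trace diffeology on $i(K)$ agrees with the push-forward diffeology from $K$), so $G\cong i(K)\rtimes_{\text{set}} s(H)$ as Fr\"olicher spaces even though the section need not be a group homomorphism; correspondingly $\mathfrak g\cong\mathfrak k\oplus\mathfrak h$ as Fr\"olicher vector spaces via $i'$ and $ds_e$. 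This identification is what lets us pass solutions of differential equations through the sequence.

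First I would treat the Lie algebra statement, which is the easier half. If $\mathfrak g$ is regular, then for a smooth curve $v:[0,1]\to\mathfrak k$, compose with $i'$ to get a curve in $\mathfrak g$, integrate it there, and observe that the resulting primitive lands in $i'(\mathfrak k)$ because $i'$ is a closed embedding of Fr\"olicher vector spaces (its image is a sub-vector-space cut out by the vanishing of $p$, and $p\circ\int_0^{(\cdot)} i'(v)$ solves the ODE with zero source and zero initial value, hence vanishes); uniqueness in $\mathfrak g$ forces this to be the unique primitive in $\mathfrak k$, and smoothness of $\int_0^{(\cdot)}$ on $C^\infty([0,1];\mathfrak k)$ follows from smoothness on $C^\infty([0,1];\mathfrak g)$ restricted to the (closed, hence Fr\"olicher-embedded) subspace. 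For $\mathfrak h$, use the section: given $w:[0,1]\to\mathfrak h$, lift pointwise via $ds_e$ (or any smooth linear splitting) to a curve in $\mathfrak g$, integrate there, and push down by $p$; $p$ is linear and smooth, so this commutes with integration and with differentiation, giving the required smooth primitive. Conversely, if $\mathfrak k$ and $\mathfrak h$ are both regular, decompose a curve $v:[0,1]\to\mathfrak g$ as $v = i'(v_{\mathfrak k}) + (\text{lift of } p(v))$ using the Fr\"olicher splitting $\mathfrak g\cong\mathfrak k\oplus\mathfrak h$, integrate each piece in its own regular space, and add; the sum solves the ODE in $\mathfrak g$ by linearity of all maps involved, and is unique because any two solutions differ by a constant which must be $0$. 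Smoothness is the composition of the (smooth) splitting, the two integrations, the two inclusions/lifts, and addition.

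For the group statement the strategy is the same but the bookkeeping is heavier because multiplication is non-abelian and the section is not a homomorphism. If $G$ is regular, then for $v:[0,1]\to\mathfrak h$ lift to $\tilde v:[0,1]\to\mathfrak g$ via $ds_e$, solve the logarithmic ODE $\tilde g'\tilde g^{-1}=\tilde v$, $\tilde g(0)=e$ in $G$ via $\mathrm{Exp}$, and set $g=p\circ\tilde g$; since $p$ is a smooth homomorphism, $g'g^{-1}=dp(\tilde g'\tilde g^{-1})=dp(\tilde v)=v$ and $g(0)=e_H$, and uniqueness in $H$ follows from uniqueness in $G$ together with the fact that any solution in $H$ can be lifted through $s$ to a curve whose logarithmic derivative is a lift of $v$ — here one must argue that two lifts with the same image under $p$ differ by a curve in $K$ which, by the hypothesis relating the trace and push-forward diffeologies, is again controlled, so the argument closes. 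For $K$: given $v:[0,1]\to\mathfrak k$, solve in $G$ for $\tilde g$ with $\tilde g'\tilde g^{-1}=i'(v)$; then $p(\tilde g)$ solves the logarithmic ODE in $H$ with zero source and $p(\tilde g)(0)=e_H$, so by uniqueness $p(\tilde g)\equiv e_H$, i.e. $\tilde g$ takes values in $i(K)$; because the trace diffeology on $i(K)$ is the push-forward diffeology from $K$, the curve $g:=i^{-1}\circ\tilde g$ is smooth into $K$ and is the required unique solution, and smoothness of $\mathrm{Exp}$ for $K$ is inherited. Conversely, if $K$ and $H$ are regular, the hard part — and the step I expect to be the main obstacle — is to \emph{reconstruct} a solution in $G$ from solutions in $K$ and $H$ when the section is not a homomorphism: given $v:[0,1]\to\mathfrak g$, write $g(t)=i(k(t))\cdot s(h(t))$ and compute $g'g^{-1}$; differentiating the product produces $\tfrac{dk}{dt}k^{-1}$ conjugated by $s(h)$, plus the term coming from $s(h)'s(h)^{-1}$, and equating to $v$ gives a \emph{coupled} system: first solve the $H$-part $h'h^{-1}=p(v)$ (possible since $H$ is regular), then the curve $h$ and the section $s$ are fixed and known smoothly, and the remaining equation for $k$ becomes a logarithmic ODE in $K$ with a source that depends smoothly on $t$ through $v$, $h$, $s$, and the adjoint action of $s(h)$ and of the element $s(h)'s(h)^{-1}$ projected into $\mathfrak k$ — this source lies in $\mathfrak k$ precisely because its image under $dp$ cancels by construction. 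Then regularity of $K$ solves for $k$, and $g=i(k)\cdot s(h)$ is the desired solution; uniqueness follows from uniqueness of each sub-problem, and smoothness of the whole $\mathrm{Exp}$ from smoothness of the two sub-exponentials composed with the smooth maps $s$, $p$, $i$, the adjoint action, and multiplication in $G$. The delicate points throughout are (i) checking that at each stage the relevant source curve genuinely lands in the relevant Lie subalgebra, which is where one repeatedly uses that $p$ is a homomorphism and that $i'$ is a kernel, and (ii) the smooth-diffeological identity between the trace diffeology and the push-forward diffeology on $i(K)$, without which one cannot conclude that a curve valued in $i(K)$ is smooth \emph{as a curve into $K$}.
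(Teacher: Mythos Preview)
The paper does not actually prove this theorem: it is stated without proof and attributed to \cite{Ma2013,Ma2015} (see the sentence ``Now we state two further key results from \cite{Ma2013,Ma2015}'' immediately preceding Theorems \ref{regulardeformation} and \ref{exactsequence}). There is therefore no proof in the paper to compare your attempt against.

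That said, your outline is the standard one and matches in spirit what the cited references do: use the smooth section to realize $G$ diffeologically as $i(K)\cdot s(H)$, split $\mathfrak g\cong\mathfrak k\oplus\mathfrak h$ linearly, and for the group direction solve first in $H$ via $p$, then feed the resulting $h(t)$ into a $\mathfrak k$-valued logarithmic equation for $k(t)$. Your identification of the two delicate points --- that the residual source after projecting out the $H$-part genuinely lands in $\mathfrak k$ (because its image under $dp$ vanishes), and that the hypothesis equating the trace and push-forward diffeologies on $i(K)$ is exactly what is needed to pull a $G$-smooth curve in $i(K)$ back to a $K$-smooth curve --- is correct and is precisely where those hypotheses are used in the original proof. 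One small point worth tightening: in the ``$G$ regular $\Rightarrow$ $H$ regular'' direction, uniqueness in $H$ follows more directly than you suggest, simply because any two solutions $g_1,g_2$ of $g'g^{-1}=v$ in $H$ satisfy $(g_2^{-1}g_1)'=0$ with $(g_2^{-1}g_1)(0)=e_H$, so no lifting argument is needed there.
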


\subsection{Existence of non regular Lie groups}
It is shown in \cite{Ma2018-2} that the Fr\"olicher
Lie group $Diff_+(]0;1[)$ of all increasing (Fr\"olicher) diffeomorphisms of the open interval, 
is non regular in the sense of definition \ref{reg1}.
The main feature of the proof is that the unit vector field $1_{ ]0;1[}$
is in the Lie algebra of $Diff_+(]0;1[)$, and that it generates a local flow
acting by translations. This local flow cannot be global.

After this example, we consider a question raised in
\cite{KM} on the existence of non-regular Lie groups. There are two
main definitions of regular Lie groups. The first one was given by
Omori \cite{Om,OMYK2} in the context of "generalized Lie groups'', which are 
topological groups without atlas but with additional structures. Omori's definition 
assumes that equation (\ref{loga}) can be solved by a convergent Euler method (details 
are given below, see Definition \ref{omo}). The second one was given by
\cite{KM}; it does not assume convergence of the Euler numerical scheme,  
the only assumption is the {\em existence} of a solution to (\ref{loga}), 
and the groups considered are ``convenient" $c^\infty-$Lie groups. Definition \ref{reg1} used for Fr\"olicher
Lie groups is an adaptation of the one given in \cite{KM}.

\vskip 12pt
\centerline{\it Are there non-regular Lie groups in the sense of Omori ?}
\vskip 12pt

Let us comment more on this question. There are actually many groups for which we cannot prove existence 
of the exponential map, even in the extended sense of \cite{KM} or Definition \ref{reg1}. 
For example, the Fr\"olicher Lie group $Diff_+(\,]0,1[\,),$ which is not a Lie group 
because there is no known atlas on it, is not regular. On the other hand, there are 
interesting groups which are regular, for example the group of  units of a (``nice'') 
algebra, modelled on a complete, Mackey complete, locally convex topological vector space, 
see \cite{Glo}: the ``minimal reasonable setting'' for infinite 
dimensional Lie groups is indeed difficult to determine \cite{Neeb2007}.  

Let us now prove that there exists a Lie group modelled on a locally convex 
topological vector space, which is not Omori-regular. We consider Omori's 
definition of a regular Lie group in a specialized setting. 

\begin{Definition} \cite{Om} \label{omo}
    Let $\A$ be a topological algebra, for which $\A^*$ is open in $\A$ and 
    it is a Lie group modelled on $\A .$
    Then $\A^*$ is Omori-regular if and only if, for any $v \in C^
    \infty([0,1],\A) ,$ there is a unique smooth path $g \in C^\infty([0,1],
    \A^*) ,$
    such that
    \begin{enumerate}
        \item $g^{-1}dg = v$ (equation for the right logarithmic derivative)
        \item $g(0)=1$ (initial condition)
        \item $g(t) = \lim_{n \rightarrow +\infty}\prod_{i=0}^{n-1}\left(1 + 
        \frac{t}{n}v\left(\frac{i}{n}\right)\right)$ (convergence of the 
        Euler method)
    \end{enumerate}
\end{Definition}

\smallskip

Let us consider 
$$
\mathbb{R}((X)) = 
\left\{ \sum_{n \in \mathbb{Z}} a_n X^n | 
\quad \exists N \in \mathbb{Z}, \quad \forall n < N, a_n = 0 \right\}\; .
$$
The space $\R((X))$ is a vector subspace of $\R^\Z,$ where the latter is equipped with 
its (product) Fr\'echet topology. $\R((X))$ is a well-known (commutative) field, and hence 
$\R((X))^* = \R((X))-\{0\}.$


\begin{Proposition} \label{series1}
The set of invertible elements $\mathbb{R}((X))^*$ is an open subset of $\R((X)), $ and 
multiplication and inversion are smooth. As a consequence, $\mathbb{R}((X))^*$ is a Lie 
group modelled on a locally convex topological vector space.
\end{Proposition}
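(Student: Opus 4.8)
The plan is to establish three things: that $\R((X))^*$ is open in $\R((X))$, that multiplication is smooth, and that inversion is smooth; the final conclusion about being a Lie group modelled on a locally convex topological vector space then follows formally, since $\R((X))$ is itself a locally convex topological vector space (a vector subspace of $\R^\Z$ with the product topology) and an open subset of a topological vector space is a trivial example of a manifold modelled on that space, on which smooth multiplication and inversion make it a Lie group.

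First I would describe $\R((X))$ concretely. An element $f = \sum_{n \geq N} a_n X^n$ has a well-defined \emph{order} $\operatorname{ord}(f) = \min\{n : a_n \neq 0\}$ (with $\operatorname{ord}(0) = +\infty$), and the key algebraic fact is that $f$ is invertible in the field $\R((X))$ if and only if $f \neq 0$. For openness, I would note that $\R((X))^* = \R((X)) \setminus \{0\}$, and $\{0\}$ is closed in the product topology (each coordinate projection $f \mapsto a_n$ is continuous, so $\{0\} = \bigcap_n \{a_n = 0\}$ is closed); hence the complement is open. This is the quick part.

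The smoothness statements are where the real content lies, and the main subtlety — which I expect to be the chief obstacle — is handling the \emph{order-shifting} that occurs when dividing: the formula for the coefficients of $f^{-1}$ or of a quotient $f/g$ involves $\operatorname{ord}(g)$, which jumps discontinuously as $g$ varies, so one cannot hope for a single uniform power-series formula valid on all of $\R((X))^*$. The remedy is to work locally. For multiplication, this issue does not arise: the coefficients of $fg$ are finite sums, $[fg]_k = \sum_{i+j=k} a_i b_j$ with the sum ranging over $i \geq \operatorname{ord}(f)$, $j \geq \operatorname{ord}(g)$ — but since on any plot / any bounded region the orders of $f$ and $g$ are bounded below uniformly (because low-order coefficients vanish identically on a neighborhood, or rather: fix the domain and the coordinates involved are finitely many polynomial expressions), each coordinate $[fg]_k$ is a polynomial in finitely many of the coordinates $a_i, b_j$, hence smooth; thus multiplication $\R((X)) \times \R((X)) \to \R((X))$ is smooth as a map into the product space, and restricts to a smooth map on $\R((X))^* \times \R((X))^*$. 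For inversion, I would fix $g_0 \in \R((X))^*$ with $m = \operatorname{ord}(g_0)$, write any nearby $g$ as $g = g_0 + h$; on a small enough neighborhood one still has $\operatorname{ord}(g) = m$ (the coefficient $a_m$ stays near $a_m(g_0) \neq 0$, so $\{g : [g]_m \neq 0\}$ is an open neighborhood, and on it no lower-order term appears because... actually one must be slightly careful: near $g_0$ the coefficients $[g]_n$ for $n < m$ are near $0$ but need not be exactly $0$, so $\operatorname{ord}(g)$ could \emph{drop}). The clean fix is to stratify: on the open set $U_m = \{g \in \R((X))^* : [g]_n = 0 \text{ for } n<m, \ [g]_m \neq 0\}$, write $g = X^m(a_m + a_{m+1}X + \cdots)$ with $a_m \neq 0$, so $g^{-1} = X^{-m}(a_m + \cdots)^{-1}$, and the inverse of the unit $a_m + a_{m+1}X + \cdots$ is given by the classical recursive formula $[u^{-1}]_0 = a_m^{-1}$, $[u^{-1}]_k = -a_m^{-1}\sum_{j=1}^k a_{m+j}[u^{-1}]_{k-j}$, each coordinate being a rational function of $a_m, \dots, a_{m+k}$ with nonvanishing denominator $a_m^{k+1}$, hence smooth on $U_m$; since $\R((X))^* = \bigsqcup_{m \in \Z} U_m$ is a disjoint union of open sets (each $U_m$ open, and they partition the space by order — here one \emph{does} use that if $[g]_n = 0$ for all $n < m$ fails then $g \in U_{m'}$ for some $m' < m$), inversion is smooth on each piece, hence globally smooth. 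Assembling these, $\R((X))^*$ is a diffeological (indeed Fréchet, via Remark~\ref{comp}) Lie group modelled on the locally convex space $\R((X))$, which completes the proof.
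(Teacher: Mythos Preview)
Your approach mirrors the paper's --- openness from $\R((X))^* = \R((X)) \setminus \{0\}$, smoothness of multiplication from the finite coefficient formula $c_n = \sum_{i+j=n} a_i b_j$, and smoothness of inversion from the recursion $b_{-N+p} = -a_N^{-1}\sum_{k<p} a_{N+p-k}\, b_{-N+k}$ --- and in fact you go beyond the paper in flagging a genuine difficulty: that recursion depends on $N = \operatorname{ord}(g)$, which is not locally constant. The paper's proof simply writes the recursion for fixed $N$ and declares inversion smooth without addressing this.

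Your repair, however, does not work. The strata $U_m = \{g : [g]_n = 0 \text{ for } n<m,\ [g]_m \neq 0\}$ are \emph{not} open: the vanishing condition on the infinitely many coordinates $n<m$ is closed, not open. For instance $1 \in U_0$, yet every neighbourhood of $1$ meets $U_{-1}$ via $1 + \epsilon X^{-1}$ with $\epsilon \neq 0$. (You essentially see this yourself when you note mid-paragraph that the order ``could drop'', but the stratification argument then proceeds as though it cannot.) Worse, this is not a patchable gap: take the smooth curve $c(t) = 1 + tX^{-1}$ in $\R((X))^*$ (each coordinate is polynomial in $t$, and $c(t)\neq 0$ since $[c(t)]_0=1$). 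For $t \neq 0$ one computes $c(t)^{-1} = \sum_{j \geq 1} (-1)^{j-1} t^{-j} X^j$, so $[c(t)^{-1}]_0 = 0$ and $[c(t)^{-1}]_1 = t^{-1}$, whereas $c(0)^{-1} = 1$. Thus $t \mapsto c(t)^{-1}$ is not even continuous for the structure inherited from $\R^\Z$, and the smoothness claim for inversion cannot hold as stated. Your instinct to worry about the jumping order was exactly right; neither your argument nor the paper's survives it.
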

\begin{proof}
Since $\mathbb{R}((X))^* = \mathbb{R}((X))-\{0\},$ it is open in $\mathbb{R}((X)),$ and 
hence it is a smooth submanifold of $\mathbb{R}((X)).$ Let us now consider multiplication. 
Let $$\left(\sum_{n \in \Z} a_n X^n; \sum_{n \in \Z} b_n X^n \right)\in \R((X))^2\; ,$$ 
and let $\sum_{n \in \Z} c_n X^n$ be the product Laurent series. We have 
$$
\forall n \in \Z, \quad c_n = \sum_{i+j=n} a_i b_j\; ,
$$
and this sum is well-defined since it has only a finite number of non-zero terms. 
Moreover, for  fixed indexes $i$ and $j$, the map $(a_i,b_j)\mapsto a_ib_j$ is smooth. So 
that, for the product Fr\"olicher structure, multiplication is smooth.

Let us now consider inversion. Let $\sum_{n \geq N} a_n X^n \in \R((X))$, where 
$$
N = \min \{ n \in \Z | a_n \neq 0\}\; .
$$
Let $\sum_{n \geq -N} b_nX^n$ the inverse series. The coefficients of the inverse series 
can be calculated by induction with the formula:
    $$ \sum_{i+j=n} a_i b_j = \delta_{n,0} \quad \hbox{ (Kronecker symbol)\; . }$$
    First, we get $b_{-N} = a_{N}^{-1}$
    and by induction on $p\in \N^*,$
    $$ b_{-N+p} = -a_{N}^{-1}\sum_{k = 0}^{p-1} a_{N+p-k}b_{-N+k},$$
    which shows that inversion is smooth by the same arguments as above.
\end{proof}

\begin{Lemma} \label{series2}
    The sequence $\left(1 + \frac{X^{-1}}{n}\right)^n$ has no limit in 
    $\R((X)).$
\end{Lemma}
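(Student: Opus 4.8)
The plan is to compute the partial products explicitly, expose the ``bad'' coefficient, and show that it diverges in $\R^\Z$ so no limit can exist in $\R((X))\subset\R^\Z$. First I would expand, by the binomial theorem,
\[
\left(1+\frac{X^{-1}}{n}\right)^n=\sum_{k=0}^{n}\binom{n}{k}\frac{1}{n^k}X^{-k}.
\]
Recalling that $\R((X))$ is topologized as a subspace of $\R^\Z$ with the product (Fréchet) topology, convergence of a sequence in $\R((X))$ forces, for each fixed power $X^{-k}$, convergence of the corresponding scalar coefficients. For $k$ fixed the coefficient of $X^{-k}$ is $\binom{n}{k}n^{-k}=\frac{1}{k!}\cdot\frac{n(n-1)\cdots(n-k+1)}{n^k}\to\frac{1}{k!}$ as $n\to\infty$, so coefficientwise the sequence looks like it ``wants'' to converge to $\sum_{k\ge 0}\frac{1}{k!}X^{-k}$.

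The key observation, and the heart of the argument, is that this putative limit $\sum_{k\ge 0}\frac{1}{k!}X^{-k}$ is \emph{not} an element of $\R((X))$: it has infinitely many nonzero coefficients in negative degrees, i.e. there is no $N\in\Z$ below which all coefficients vanish. Hence, even though the sequence converges in the ambient space $\R^\Z$, its limit lies outside the subspace $\R((X))$. Since $\R^\Z$ is Hausdorff, a limit in $\R((X))$ (if it existed) would have to coincide with the limit in $\R^\Z$, namely $\sum_{k\ge 0}\frac{1}{k!}X^{-k}\notin\R((X))$ --- a contradiction. Therefore $\left(1+\frac{X^{-1}}{n}\right)^n$ has no limit in $\R((X))$.

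The main obstacle, and the only subtlety worth spelling out carefully, is justifying that the relevant notion of convergence on $\R((X))$ is exactly coefficientwise convergence: one must invoke that $\R((X))$ carries the subspace topology from $\R^\Z$ (as stated just before the lemma) rather than, say, the finer $X$-adic topology, under which the statement would be false. Once that is pinned down, the computation of the limiting coefficients $1/k!$ is routine, and the contradiction with the defining finiteness condition ``$\exists N,\ \forall n<N,\ a_n=0$'' of $\R((X))$ is immediate. I would also remark in passing that this is precisely the obstruction to Omori-regularity: the Euler scheme for the constant vector field $v=X^{-1}$ produces exactly these partial products, whose ``limit'' $\exp(X^{-1})$ escapes $\R((X))$, so $\R((X))^*$ cannot be Omori-regular --- which is the point of introducing this lemma.
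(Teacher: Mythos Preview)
Your proof is correct and follows essentially the same route as the paper's: both expand via the binomial theorem, observe that the coefficient of $X^{-k}$ tends to $1/k!$, and conclude that the coefficientwise limit $\sum_{k\ge 0}\frac{1}{k!}X^{-k}$ lies in $\R^\Z$ but not in $\R((X))$. Your Hausdorff/uniqueness-of-limits argument makes explicit what the paper leaves implicit; one small quibble is your parenthetical that the statement ``would be false'' under the $X$-adic topology --- in fact the sequence still fails to converge there (its valuation tends to $-\infty$), so that remark is misleading, though it does not affect the main argument.
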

\begin{proof}
The Fr\"olicher structure of $\R((X))$ makes all the index-wise projections 
$[.]_m$ on the $m-$th coefficient of the Laurent series smooth. 
We remark that for any $m>0$, 
$$\left[\left(1 + \frac{X^{-1}}{n}\right)^n\right]_m = 0$$
    and that, for $m \in \N, $
    $$\lim_{n \rightarrow + \infty} 
    \left[\left(1 + \frac{X^{-1}}{n}\right)^n\right]_{-m} = 
    \lim_{n \rightarrow +\infty} 
    \frac{1}{n^m}\left(\begin{array}{c} n \\ m \end{array} \right) = 
    \frac{1}{m!} \neq 0.$$
    Thus this sequence converges in $\R^\Z,$ but not in $\R((X)).$
\end{proof}

\noindent These results imply the following theorem:

\begin{Theorem} \label{series3}
    $\mathbb{R}((X))^*$ is not regular in the sense of Omori.
\end{Theorem}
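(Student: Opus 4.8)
The plan is to derive Theorem \ref{series3} as an immediate consequence of Lemma \ref{series2} together with Proposition \ref{series1}. Proposition \ref{series1} already establishes that $\R((X))^*$ is an open subset of the locally convex space $\R((X))$ and that multiplication and inversion are smooth, so it is a Lie group modelled on a locally convex topological vector space; it therefore makes sense to ask whether it is Omori-regular in the sense of the definition preceding Proposition \ref{series1}. To show it is \emph{not}, I would exhibit a single smooth path $v \in C^\infty([0;1],\R((X)))$ for which the Euler scheme in condition (3) of that definition fails to converge in $\R((X))$, and hence cannot produce the required smooth path $g$.

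The natural choice is the constant path $v(t) \equiv X^{-1}$. First I would record that for the constant path $v$, the Euler product $\prod_{i=0}^{n-1}\bigl(1 + \tfrac{t}{n} v(\tfrac{i}{n})\bigr)$ collapses to $\bigl(1 + \tfrac{t}{n} X^{-1}\bigr)^n$. Evaluating at $t = 1$ gives exactly the sequence $\bigl(1 + \tfrac{X^{-1}}{n}\bigr)^n$ of Lemma \ref{series2}, which has no limit in $\R((X))$. Thus condition (3) of Omori-regularity cannot be satisfied for this $v$ at $t=1$: there is no smooth path $g \in C^\infty([0;1],\R((X))^*)$ with $g(t) = \lim_{n\to\infty}\prod_{i=0}^{n-1}(1+\tfrac{t}{n}v(\tfrac{i}{n}))$, since the right-hand side is not even defined as an element of $\R((X))$ at $t=1$. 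I would be slightly careful to phrase this as: the pointwise limit of the Euler polygons, if it existed in $C^\infty([0;1],\R((X))^*)$, would in particular have a value at $t=1$ equal to $\lim_n (1+\tfrac{X^{-1}}{n})^n$, contradicting Lemma \ref{series2}.

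There is essentially no real obstacle here; the only point requiring a line of care is to make sure the definition of Omori-regularity is being negated correctly --- namely that regularity requires the Euler scheme to converge for \emph{every} $v \in C^\infty([0;1],\A)$, so producing one $v$ where it fails suffices. One might also remark, for the reader's comfort, that $v(t)\equiv X^{-1}$ really is a smooth (indeed constant) path into $\R((X))$, so it is an admissible input. I would also note in passing that this does not contradict the fact (mentioned in the surrounding text, following \cite{Glo}) that groups of units of suitably complete algebras are regular: the field $\R((X))$, viewed as a subspace of $\R^{\Z}$, is not Mackey complete in the relevant sense, and Lemma \ref{series2} is precisely a manifestation of this incompleteness --- the Euler polygons form a sequence that converges in the ambient $\R^{\Z}$ but whose limit escapes $\R((X))$.

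\begin{proof}
By Proposition \ref{series1}, $\R((X))^*$ is a Lie group modelled on the locally convex topological vector space $\R((X))$, so Omori-regularity is meaningful for it. Suppose, for contradiction, that $\R((X))^*$ were Omori-regular. Apply the definition to the constant (hence smooth) path $v \in C^\infty([0;1],\R((X)))$ given by $v(t) = X^{-1}$. Then there would exist a smooth path $g \in C^\infty([0;1],\R((X))^*)$ satisfying condition (3), i.e.
$$ g(t) = \lim_{n \rightarrow +\infty}\prod_{i=0}^{n-1}\left(1 + \frac{t}{n}\,v\left(\frac{i}{n}\right)\right) = \lim_{n \rightarrow +\infty}\left(1 + \frac{t}{n}X^{-1}\right)^n $$
for all $t \in [0;1]$, the limit being taken in $\R((X))$. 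In particular, at $t = 1$ this forces the sequence $\left(1 + \frac{X^{-1}}{n}\right)^n$ to converge in $\R((X))$ to $g(1)$. This contradicts Lemma \ref{series2}. Hence $\R((X))^*$ is not regular in the sense of Omori.
\end{proof}
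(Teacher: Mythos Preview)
Your proof is correct and follows essentially the same approach as the paper: both take the constant path $v(t)=X^{-1}$, observe that the Euler product reduces to $\bigl(1+\tfrac{t}{n}X^{-1}\bigr)^n$, and invoke Lemma \ref{series2} at $t=1$ to conclude. Your version is slightly more explicit in framing the argument as a contradiction and in noting that Proposition \ref{series1} ensures Omori-regularity is meaningful, but the underlying idea is identical.
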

\begin{proof}
    We have that $X^{-1} \in \R((X))$ so that the constant path
    $$ t \in [0,1] \mapsto v(t)=X^{-1} \in C^\infty\left(\,[0,1],\R((X))\,
    \right).$$
    The Euler method gives:
    $$\prod_{i=0}^{n-1}\left(1 + \frac{t}{n}v\left(\frac{i}{n}\right)\right) 
    = \left(1 + \frac{t X^{-1}}{n}\right)^n.$$
    By Lemma \ref{series2}, this sequence does not converge for $t=1,$ and 
    hence $\R((X))$ is not regular in the sense of Omori.
\end{proof}

\subsection{Principal bundles, connections and Ambrose-Singer theorem}

Let $P$ be a diffeological space and  let $G$ be a Fr\"olicher Lie
group, with a smooth right-action $P\times G \rightarrow P,$ such
that $\forall (p,p',g)\in P\times P \times G,$ we have $p.g=p'.g
\Rightarrow p=p'$. Let $M=P/G,$ equipped with the quotient
diffeology. $P$ is called a \textbf{principal $G-$bundle} with base
$M.$ In \cite[Article 8.32]{Igdiff} Iglesias-Zemmour gives a
definition of a connection on a principal $G$-bundle in terms of
paths on the total space $P,$ generalising the classical notion 
of path lifting for principal bundles with finite-dimensional
Lie groups as structure groups.

\begin{Definition}\label{d:iz-connection}
    Let $G$ be a diffeological group, and let
    $\pi\colon P\to X$ be a principal $G$-bundle.
    Denote by $\pathloc(P)$ the diffeological space of
    \textbf{local paths} (see \cite[Article 1.63]{Igdiff}), and by 
    $tpath(P)$ the \textbf{tautological bundle of local paths} 
    $$tpath(P):=\{(\gamma,t)\in\pathloc(P)\times\R\mid t\in D(\gamma)\}.$$ 
    A \textbf{diffeological connection} is a smooth map 
 $H \colon tpath(P)\to\pathloc(P)$ satisfying the following properties
     for any $(\gamma,t_0)\in tpath(P)$:
    \begin{enumerate}
        \item the domain of $\gamma$ equals the domain of $H(\gamma,t_0)$,
        \item $\pi\circ\gamma=\pi\circ H(\gamma,t_0)$,
        \item $H(\gamma,t_0)(t_0)=\gamma(t_0)$,
        \item $H(\gamma\cdot g,t_0)=H(\gamma,t_0)\cdot g$ for all $g\in G$,
        \item $H(\gamma\circ f,s)=H(\gamma,f(s))\circ f$ for any smooth map
         $f$ from an open subset of $\R$ into $D(\gamma)$,
        \item $H(H(\gamma,t_0),t_0)=H(\gamma,t_0)$.
    \end{enumerate}
\end{Definition}

Another formulation of this definition can be found in \cite{Ma2013} under the terminology 
of \textbf{path-lifting}.

\begin{rem}\label{r:iz-connection}
    Diffeological connections satisfy many of the usual properties that 
    classical connections on a principal $G$-bundle (where $G$ is a finite-
    dimensional Lie group) enjoy; in particular, they admit unique horizontal 
    lifts of paths in $\pathloc(M)$ 
    \cite[Article 8.32]{Igdiff}, and they pull back by smooth maps 
    \cite[Article 8.33]{Igdiff}.
\end{rem}

\begin{Proposition}
  Let $V$ be a vector space. Then, $G$ acts smoothly from the right on 
    the space $\Omega(P,V)$ of $V$-valued differential forms on $P$ by setting
$$ 
\forall (g,\alpha) \in  \Omega^n(P,V) \times G, \forall p\in \p(P), \quad 
(g_*\alpha)_{g.p} = \alpha_p \circ (dg^{-1})^n \; .
$$
\end{Proposition}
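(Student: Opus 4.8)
The plan is to verify the two things asserted in the statement: first, that the formula genuinely defines a right action of $G$ on each $\Omega^n(P,V)$, and second, that this action is smooth with respect to the relevant diffeological structures. I would begin by unpacking notation. For $g \in G$, the right-translation map $R_g : P \to P$, $p \mapsto p\cdot g$, is a diffeomorphism of $P$ (its inverse is $R_{g^{-1}}$, and both are smooth because the action is smooth). Hence $R_g$ acts on plots by post-composition, sending $p \in \p(P)$ to $R_g \circ p = g.p \in \p(P)$. The expression $(dg^{-1})^n$ should be read as the $n$-th exterior (tensor) power of the differential of $R_{g^{-1}}$ acting on the plot domains; equivalently, the proposed definition is nothing but the pullback $R_g^* \alpha$ of the diffeological differential form $\alpha$ under the diffeomorphism $R_g$, where pullback of diffeological forms is the standard operation $(f^*\alpha)_p := \alpha_{f\circ p}$ defined earlier in the section.

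With this identification the group-action axioms become formal. First, $R_e = \mathrm{id}_P$, so $e_*\alpha = \alpha$. Second, for $g,h \in G$ one has $R_{gh} = R_h \circ R_g$ (this is the definition of a right action: $p\cdot(gh) = (p\cdot g)\cdot h$), hence on the level of plots $(gh).p = h.(g.p)$, and therefore $((gh)_*\alpha)_p = \alpha_{(gh).p} \circ (\text{appropriate tangent map}) = \alpha_{h.(g.p)}\circ\cdots = (h_*(g_*\alpha))_p$; so $(gh)_* = h_* \circ g_*$, which is exactly the right-action identity. One must also check the form axiom itself: that $g_*\alpha$, defined plotwise, really satisfies the compatibility condition from the Definition of a differential form (equal values on plots sharing a point, via their pushed-forward tangent vectors). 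This is automatic because $g_*\alpha = R_g^*\alpha$ and pullback of a diffeological form is again a diffeological form — this is one of the standard facts recorded implicitly in the second ``remark for the reader'' after the Definition of differential forms, applied to the smooth map $R_g$.

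The remaining — and only slightly substantive — point is smoothness of the map $G \times \Omega^n(P,V) \to \Omega^n(P,V)$, $(g,\alpha)\mapsto g_*\alpha$, with respect to the product Fr\"olicher/diffeological structure on the source and the diffeology $\p(\Omega^n(P,V))$ described in the Proposition preceding the Definition of finite-dimensionality. By the definition of that diffeology, it suffices to take a plot $q : O_q \to G \times \Omega^n(P,V)$, write it as $x \mapsto (g(x),\alpha(x))$ with $g$ a plot of $G$ and $\alpha$ a plot of $\Omega^n(P,V)$, fix a plot $p : O_p \to P$, and show that $x \mapsto \bigl(g(x)_*\alpha(x)\bigr)_p$ is a smooth map $O_q \to \Omega^n(O_p,V)$ in the ordinary sense. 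Now $\bigl(g(x)_*\alpha(x)\bigr)_p = \bigl(\alpha(x)\bigr)_{g(x).p}\circ \Lambda^n(dR_{g(x)^{-1}})$. The plot $(x,u) \mapsto g(x).p(u)$ of $P$ is smooth because the action $P\times G\to P$ is smooth and $g$, $p$ are plots; feeding this into the defining property of $\alpha$ being a plot of $\Omega^n(P,V)$ gives smoothness of $x \mapsto \bigl(\alpha(x)\bigr)_{g(x).p}$ in the Euclidean sense, and the Jacobian factor $\Lambda^n(dR_{g(x)^{-1}})$ depends smoothly on $x$ for the same reason (derivatives of the smooth action map). Composing these smooth dependencies yields the claim.

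The main obstacle, to the extent there is one, is purely bookkeeping: one has to be careful about what $(dg^{-1})^n$ means on a plot domain versus on $P$ itself (since diffeological forms are only ever evaluated through plots, the differential of $R_{g^{-1}}$ enters via its effect on $dp$ for plots $p$ of $P$), and one has to invoke the right instances of ``smoothness is checked on plots'' together with smoothness of the group action to push everything through. No genuinely hard analysis is needed — the statement is, in effect, the diffeological incarnation of the elementary fact that a smooth right action pulls back differential forms smoothly, and I would present it as such, citing Proposition on the diffeology of $\Omega^n(X,V)$ and the smoothness of the $G$-action as the only inputs.
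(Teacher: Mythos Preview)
Your proposal is correct and follows essentially the same approach as the paper, which merely observes that since $G$ acts smoothly on $P$ one has $g.p \in \p(P)$ whenever $p \in \p(P)$, declares the action well-defined, and dismisses smoothness as ``trivial.'' You have simply unpacked this two-sentence argument in full detail (verifying the group-action axioms, the form compatibility, and the plotwise smoothness), which is exactly what the paper leaves implicit.
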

\begin{proof}
$G$ acts smoothly on $P$ so that, if $p \in \p(P),$ $g.p \in \p(P)$.
The right action is now well-defined, and smoothness is trivial.
\end{proof}

\begin{Definition}
    Let $\alpha \in \Omega(P;\mathfrak{g}).$ The differential form $\alpha$ is 
    \textbf{right-invariant} if and only if, for each $p \in \p(P),$ and for each 
    $g \in G,$
    $$\alpha_{g.p} = Ad_{g^{-1}} \circ g_*\alpha_p \; .$$
\end{Definition}

Now, let us turn to connections and holonomy. Let $p \in P$ and
let $\gamma$ be a smooth path in $P$ starting at $p.$

\begin{Definition}
    A \textbf{connection} on $P$ is a $\mathfrak{g}-$valued right-invariant 
    $1$-form $\theta,$ , such that, for each $ v \in \mathfrak{g},$ for any path 
    $c : \R \rightarrow G$ such that 
$$
\left\{\begin{array}{ccr} c(0)& = & e_G\\ \d_tc(t)|_{t=0}&= & v \; \; ,\end{array} \right.
$$ 
and for each $p \in P$ we have: 
    $$\theta(\d_t(p.c(t))_{t = 0})=v \; .$$
\end{Definition}

Now, let $p \in P$ and
$\gamma$ a smooth path in $P$ starting at $p,$ defined on $[0,1].$
Let $H_\theta \gamma (t) = \gamma(t)g(t)$, where $g(t) \in C^\infty([0,1];\mathfrak{g})$ 
is a path satisfying the differential equation:
$$
\left\{ \begin{array}{c} \theta \left( \d_t H_\theta\gamma(t) \right) = 0  \\ 
H_\theta\gamma(0)=\gamma(0) \end{array} \right.
$$
The first line of this equation is equivalent to the differential
equation $$g^{-1}(t)\d_tg(t) = -\theta(\d_t\gamma  (t))$$ which is
integrable, and the second line is equivalent to the initial condition $g(0)=e_G.$
This shows that horizontal lifts are well-defined, as in the standard case
of finite-dimensional manifolds. Moreover, the map $H_\theta(.)$ defines trivially a
diffeological connection. This enables us to consider the holonomy
group of the connection. Notice that a straightforward adaptation
of the arguments of \cite{Ma} shows that the holonomy group does not depend (up to 
conjugation and up to the choice of connected component of $M$) on the choice of the base 
point $p.$ Now we assume that $dim(M)\geq 2$ and we fix a connection
$\theta$ on $P.$

\begin{Definition}
Let $\alpha \in \Omega(P;\mathfrak{g})$ be a $G-$invariant $1$-form. Let 
$\nabla \alpha = d\alpha - {\frac{1}{ 2}}[\theta,\alpha]$ be the horizontal derivative 
of $\alpha.$ The curvature $2$-form induced by $\theta$ is 
 $$ \Omega = \nabla \theta \; .$$ 
\end{Definition}

This definition allows us to consider reductions of the structure group.

 \begin{Theorem} \label{Courbure} \cite{Ma2013}
    We assume that $G_1$ and $G$ are regular Fr\"olicher groups
    with regular Lie algebras $\mathfrak{g}_1$ and $\mathfrak{g}.$
    Let $\rho: G_1 \mapsto G$ be an injective morphism of Lie groups.
    If there exists a connection $\theta$ on $P$, with curvature $\Omega$, 
    such that for any smooth $1$-parameter family $H_\theta c_t$ of 
    horizontal paths starting at $p$, and for any smooth vector fields $X,Y$ 
    in $M$, the map
    \begin{eqnarray} 
    s, t \in [0,1]^2 & \rightarrow & \Omega_{Hc_t(s)}(X,Y)  \label{g1}
    \end{eqnarray}
    is a smooth $\mathfrak g_1$-valued map (for the $\mathfrak g _1 -$ 
    diffeology),
    \noindent
    and if $M$ is simply connected, then the structure group $G$ of $P$ reduces to 
    $G_1,$ and the connection $\theta$ also reduces.
 \end{Theorem}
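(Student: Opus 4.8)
The plan is to run the classical proof of the reduction theorem (holonomy bundle together with Ambrose--Singer), replacing every ingredient by its diffeological counterpart from Section~2 and from \cite{Ma2013}, and using the extra hypothesis \eqref{g1} to force the holonomy group inside $\rho(G_1)$. Throughout, $\rho_*\colon\mathfrak g_1\to\mathfrak g$ denotes the induced morphism of Lie algebras.

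First I would fix $p\in P$ over a point $x_0\in M$ and set up the holonomy data of $\theta$. Since a diffeological connection admits unique horizontal lifts of paths into the base (Remark~\ref{r:iz-connection}), the set $P_1$ of all endpoints $H\gamma(1)$ of horizontal lifts, starting at $p$, of smooth paths $\gamma$ in $M$ issued from $x_0$ is well defined, and so is the holonomy group $\Hol_p\subset G$ of elements $g$ for which $p\cdot g\in P_1$ along a horizontal loop; as recalled in the text, $\Hol_p$ is independent of $p$ up to conjugation. Next, because $\dim M\ge 2$ and $M$ is simply connected in the sense of Definition~\ref{d:homotopic}, every smooth loop at $x_0$ is smoothly homotopic to the constant loop; lifting such a smooth homotopy horizontally and recording endpoints produces, for each $g\in\Hol_p$, a smooth path $t\mapsto g(t)$ in $G$ with $g(0)=e$, $g(1)=g$ and $g(t)\in\Hol_p$ for all $t$. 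Hence $\Hol_p$ is path-connected by smooth paths, i.e. restricted and full holonomy coincide.

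The core of the argument is the diffeological Ambrose--Singer theorem of \cite{Ma2013}: through the first-variation-of-holonomy formula it identifies the right logarithmic derivative $g^{-1}(t)\,\partial_t g(t)$ of the paths constructed above with integrals/limits of curvature values $\Omega_{Hc_t(s)}(X,Y)$ along the horizontal variation, for horizontal families $Hc_t$ issued from $p$ and smooth vector fields $X,Y$ on $M$. By hypothesis \eqref{g1} all these values lie in $\rho_*\mathfrak g_1$ and depend smoothly on the parameters for the $\mathfrak g_1$-diffeology, so $g^{-1}\partial_t g\in C^\infty([0;1];\rho_*\mathfrak g_1)$. Now $\rho_*\mathfrak g_1$ is a Lie subalgebra of $\mathfrak g$, and since $\rho$ is a morphism of Fr\"olicher Lie groups we have $\rho\circ Exp_{G_1}(v)=Exp_G(\rho_*\!\circ v)$ for $v\in C^\infty([0;1];\mathfrak g_1)$ and $Ad_{\rho(G_1)}\rho_*\mathfrak g_1=\rho_*\mathfrak g_1$; hence the subgroup $Exp_G\big(C^\infty([0;1];\rho_*\mathfrak g_1)\big)(1)\subseteq\rho(G_1)$ is a Fr\"olicher subgroup of $G$ by Theorem~\ref{Lie}, and it contains $\Hol_p$. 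Solving \eqref{loga} inside the regular group $G$ with the $\rho_*\mathfrak g_1$-valued datum $g^{-1}\partial_t g$, the uniqueness clause of Definition~\ref{reg1} together with the intertwining of exponentials forces the solution to remain in $\rho(G_1)$; evaluating at $t=1$ gives $\Hol_p\subseteq\rho(G_1)$.

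It remains to assemble the reduced bundle and connection. I would put $Q:=P_1\cdot\rho(G_1)\subset P$ with the subset diffeology and check that $Q\to M$ is a diffeological principal $\rho(G_1)$-bundle: the right $\rho(G_1)$-action preserves $Q$, is free (inherited from $P$), and is transitive on each fibre of $Q\to M$ because any two horizontal lifts over the same base path differ by an element of $\Hol_p\subseteq\rho(G_1)$; one then verifies that $Q\to M$ is a subduction, so $Q/\rho(G_1)=M$ with the quotient diffeology, and that $P\cong Q\times_{\rho(G_1)}G$, i.e. the structure group reduces to $G_1$. Finally $\theta$ reduces: horizontal lifts of base paths starting in $P_1$ stay in $P_1\subset Q$, and right-invariance of $\theta$ makes the horizontal distribution $\rho(G_1)$-equivariant, hence tangent to $Q$ everywhere on $Q$; the restriction of $\theta$ is $\rho_*\mathfrak g_1$-valued on the vertical vectors of $Q$ and defines a connection on $Q$ with the same horizontal paths, whose extension recovers $\theta$. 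The main obstacle is the diffeological Ambrose--Singer step together with the structural verification that $Q$, with its subset diffeology, is genuinely a principal $\rho(G_1)$-bundle with subduction projection onto $M$ -- facts which are not automatic in the diffeological category but are supplied by \cite{Ma2013}; the only genuinely new point is the passage, using \eqref{g1}, regularity and simple connectedness, from ``the holonomy Lie algebra lies in $\mathfrak g_1$'' to ``the holonomy group lies in $\rho(G_1)$''.
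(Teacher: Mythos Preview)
The paper does not actually contain a proof of this theorem: it is quoted verbatim from \cite{Ma2013} and used as a black box, so there is no argument here to compare your proposal against. What you have written is a plausible reconstruction of the classical holonomy-bundle/Ambrose--Singer proof, transplanted into the diffeological setting, and you have correctly flagged that the two genuinely delicate steps --- the diffeological first-variation formula expressing $g^{-1}\partial_t g$ as curvature integrals, and the verification that $Q=P_1\cdot\rho(G_1)$ with the subset diffeology is a principal $\rho(G_1)$-bundle with subduction to $M$ --- are imported from \cite{Ma2013} rather than reproved.

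A small caution: your passage from ``curvature values lie in $\rho_*\mathfrak g_1$ smoothly'' to ``$g^{-1}\partial_t g\in C^\infty([0;1];\rho_*\mathfrak g_1)$'' silently uses that $\mathfrak g_1$ is regular as a Fr\"olicher vector space, so that the path-ordered integrals of curvature along the homotopy make sense and stay in $\rho_*\mathfrak g_1$; this is part of the hypothesis ``regular Lie algebra'' but is worth making explicit. Otherwise your outline matches what one expects the proof in \cite{Ma2013} to do, and your use of Theorem~\ref{Lie} together with the regularity of $G$ and the intertwining $\rho\circ Exp_{G_1}=Exp_G\circ\rho_*$ to trap $\Hol_p$ inside $\rho(G_1)$ is the right mechanism.
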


We can now state the announced Ambrose-Singer theorem, using the terminology of \cite{Rob} 
for the classification of groups via properties of the exponential map:

 \begin{Theorem}
    \label{Ambrose-Singer} \cite{Ma2013}
Let $P$ be a principal bundle whose  structure group is a fully regular Fr\"olicher 
Lie group $G$. Let $\theta$ be a connection on $P$ and $H_\theta$ the associated 
diffeological connection.
    \begin{enumerate}
        \item For each $p \in P,$ the holonomy group $\Hol_p^L$ is a
        diffeological subgroup of $G$, which does not depend on the choice of
        $p$ up to conjugation.

        \item There exists a second holonomy group $H^{red},$ $\Hol \subset H^{red},$
        which is the smallest structure group for which there is a subbundle $P'$ to
        which $\theta$ reduces. Its Lie algebra is spanned by the curvature elements, i.e.
        it is the smallest integrable Lie algebra which contains the 
        curvature elements.

        \item If $G$ is a Lie group (in the classical sense) of type I or II,
        there is a (minimal) closed Lie subgroup $\bar{H}^{red}$ (in the 
        classical sense) such that $H^{red}\subset \bar{H}^{red},$
        whose Lie algebra is the closure in $\mathfrak{g}$ of the Lie algebra 
        of $H^{red}.$ $\bar{H}^{red}$
        is the smallest closed Lie subgroup of $G$ among the structure groups
        of closed sub-bundles $\bar{P}'$ of $P$ to which $\theta$ reduces.
    \end{enumerate}
 \end{Theorem}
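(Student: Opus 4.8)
The plan is to follow the classical strategy of Ambrose--Singer, but with charts replaced everywhere by plots and with the two integration results of this subsection, Theorem \ref{Lie} and Theorem \ref{Courbure}, playing the role that integrability of Lie subalgebras plays in the finite-dimensional setting. For part (1), I would first record that the path-lifting $H(\cdot)$ constructed above produces, for every piecewise-smooth loop $\gamma$ based at $p$, a unique element $h_\gamma\in G$ determined by $H\gamma(1)=\gamma(1)\cdot h_\gamma$, since the right-logarithmic-derivative equation $g^{-1}\d_t g=-\theta(\d_t\gamma)$ is solvable by regularity of $G$. Concatenation and reversal of loops, together with the constant loop, show that $\Hol_p^L=\set{h_\gamma}$ is a subgroup of $G$; equipping it with the subset diffeology inherited from $G$, smoothness of multiplication and inversion follows from the smooth dependence of the solution of that ODE on $\gamma$ (again a consequence of regularity of $G$). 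Base-point independence up to conjugation is the usual argument: a smooth path $\delta$ from $p$ to $p'$ has a horizontal lift whose endpoint differs from $p'$ by some $a\in G$, and then $\Hol^L_{p'}=a^{-1}\Hol^L_p a$; this is precisely the straightforward adaptation of \cite{Ma} already alluded to before the statement.

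For part (2), let $\mathcal{K}\subset\lie$ be the set of curvature elements $\Omega_{Hc_t(s)}(X,Y)$ appearing in $(\ref{g1})$, and consider the family of Lie subalgebras of $\lie$ which contain $\mathcal{K}$, are stable under $Ad$ along horizontal transport, and are enlargeable in the sense made precise before Theorem \ref{Lie}. This family is nonempty since $\lie$ itself belongs to it; using the projective-limit construction of Proposition \ref{froproj}, so that intersections of Fr\"olicher subgroups are again Fr\"olicher subgroups, together with Theorem \ref{exactsequence} to preserve regularity, I would show the family has a smallest member $\mathfrak{h}^{red}$, and set $H^{red}=Exp(C^\infty([0;1];\mathfrak{h}^{red}))(1)$. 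Since horizontal transport carries curvature elements to curvature elements, $Ad_{H^{red}\cup (H^{red})^{-1}}(\mathfrak{h}^{red})=\mathfrak{h}^{red}$, so Theorem \ref{Lie} makes $H^{red}$ a Fr\"olicher Lie subgroup of $G$ with Lie algebra $\mathfrak{h}^{red}$. Because $\mathfrak{h}^{red}$ contains all curvature elements, Theorem \ref{Courbure} — applied on the universal cover of $M$ when $M$ is not simply connected, and then pushed down since the deck monodromy also lies in $H^{red}$ — yields a subbundle $P'$ to which $\theta$ reduces with structure group $H^{red}$. Conversely, any reduction has $\lie_1$-valued curvature with $\lie_1$ enlargeable and $Ad$-stable, hence $\lie_1\supseteq\mathfrak{h}^{red}$, which gives minimality; and $\Hol_p^L\subset H^{red}$ because, as in the classical case, holonomy elements are generated by horizontal transports of exponentials of curvature elements.

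For part (3), assume $G$ is a classical Lie group of type I or II in the sense of \cite{Rob}, and let $\bar{\mathfrak{h}}^{red}$ be the closure of $\mathfrak{h}^{red}$ in $\lie$ for the Fr\"olicher topology. Continuity of the bracket makes $\bar{\mathfrak{h}}^{red}$ a Lie subalgebra, and the type I/II hypothesis is exactly what guarantees that a closed Lie subalgebra integrates to a closed Lie subgroup $\bar H^{red}$ with $\mathrm{Lie}(\bar H^{red})=\bar{\mathfrak{h}}^{red}$. Then $H^{red}\subset\bar H^{red}$, the reduced bundle $P'$ of part (2) sits inside a closed sub-bundle $\bar P'$ with structure group $\bar H^{red}$ to which $\theta$ reduces, and minimality among closed sub-bundles follows by rerunning the minimality argument of part (2) with ``enlargeable'' replaced by ``integrable to a closed Lie subgroup''.

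The step I expect to be the main obstacle is the existence of the smallest enlargeable, $Ad$-stable Lie subalgebra $\mathfrak{h}^{red}$ containing the curvature elements in part (2): in infinite dimensions an arbitrary intersection of enlargeable Lie subalgebras is not visibly enlargeable, so one must organize the intersection as a projective limit of Fr\"olicher Lie groups and invoke Proposition \ref{froproj} and Theorem \ref{exactsequence} with some care. The secondary difficulties are the descent in part (2) from the universal cover back to $M$ and, in part (3), the correct invocation of the type I/II classification of \cite{Rob} to integrate the closed Lie subalgebra $\bar{\mathfrak{h}}^{red}$.
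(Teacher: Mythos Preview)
The paper does not contain a proof of this theorem: it is stated as a citation from \cite{Ma2013}, with no argument given in the present text. There is therefore nothing here to compare your proposal against; the authors simply import the result and use its consequences (Proposition \ref{0-courbure} and Theorem \ref{Hslice}) as black boxes.

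That said, a few remarks on your sketch. Your approach to part (1) is the natural one and matches what the surrounding discussion in the paper already supplies (smoothness of path-lifting from regularity, conjugation-independence as in \cite{Ma}). For part (2), you have correctly identified the genuine difficulty: the existence of a \emph{smallest} enlargeable $Ad$-stable Lie subalgebra containing the curvature elements is not a formality in infinite dimensions, and your proposed route through Proposition \ref{froproj} and Theorem \ref{exactsequence} is not obviously sufficient, since Theorem \ref{exactsequence} concerns short exact sequences with smooth sections, not arbitrary projective limits of regular Fr\"olicher Lie groups. The actual argument in \cite{Ma2013} constructs $H^{red}$ more directly from the curvature distribution rather than by intersecting all candidates, precisely to avoid this issue. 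Your descent from the universal cover in part (2) also needs care: Theorem \ref{Courbure} as stated requires $M$ simply connected, and the monodromy contribution must be shown to land in $H^{red}$, which is not automatic from what you have written. Part (3) is essentially right in spirit; the type I/II hypothesis of \cite{Rob} is exactly what is needed to pass from the closed subalgebra to a closed subgroup.
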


 From \cite{Ma2013} again, we have the following result:

 \begin{Proposition} \label{0-courbure}
    If the connection $\theta$ is flat and $M$ is connected and simply 
    connected, then for any path $\gamma$
    starting at $p \in P,$ the map $$\gamma \mapsto H_\theta\gamma(1)$$ 
    depends only on $\pi(\gamma(1))\in M$, and it defines
    a global smooth section $M \rightarrow P.$ Therefore, $P = M \times G.$
 \end{Proposition}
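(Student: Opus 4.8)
The plan is to deduce the statement from the Ambrose--Singer machinery already in place, specialized to the case of vanishing curvature. Since $\theta$ is flat, the curvature $\Omega = \nabla\theta$ is identically zero, so the hypothesis of Theorem~\ref{Courbure} (condition \eqref{g1}) is trivially satisfied with respect to any regular Fr\"olicher subgroup $G_1$ of $G$; in particular, taking $G_1$ to be the trivial group, the second holonomy group $H^{red}$ of Theorem~\ref{Ambrose-Singer}(2) is trivial, and hence the ordinary holonomy group $\Hol_p^L \subset H^{red}$ is trivial for every $p\in P$. The first step is therefore to record this, and to translate triviality of holonomy into the statement that the horizontal lift $H\gamma(1)$ of a path $\gamma$ depends only on its endpoint $\pi(\gamma(1))\in M$: if $\gamma_0,\gamma_1$ are two paths in $P$ starting at $p$ with $\pi(\gamma_0(1)) = \pi(\gamma_1(1))$, then concatenating the horizontal lift of $\gamma_0$ with the reverse horizontal lift of $\gamma_1$ produces, after projecting, a loop in $M$; since $M$ is connected and simply connected this loop is null-homotopic, and flatness (zero holonomy around null-homotopic loops, which follows from the path-lifting formalism of Remark~\ref{r:iz-connection} and the invariance of holonomy up to conjugation) forces the two horizontal endpoints to agree.

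Next I would construct the section. Fix a basepoint $p_0 \in P$ with $x_0 = \pi(p_0)$. For $x\in M$, choose any smooth path $\gamma$ in $P$ with $\gamma(0) = p_0$ and $\pi(\gamma(1)) = x$ — such a path exists because $M$ is connected and $\pi$ admits local smooth sections (or: lift a path in $M$ from $x_0$ to $x$ horizontally using Remark~\ref{r:iz-connection}) — and set $s(x) := H\gamma(1)$. By the previous paragraph this is independent of the choice of $\gamma$, so $s\colon M\to P$ is well-defined, and by construction $\pi\circ s = \mathrm{id}_M$ and $s(x_0)=p_0$. Smoothness of $s$ is checked plot by plot: given a plot $q\colon O\to M$ of the quotient diffeology, one lifts $q$ locally to a plot $\tilde q\colon O'\to P$ (using that the quotient diffeology on $M=P/G$ is, locally, the pushforward of $\p(P)$), horizontally transports along the resulting family of paths — here one uses that the diffeological connection $H$ is a \emph{smooth} map $tpath(P)\to \pathloc(P)$, so horizontal lift of a smoothly varying family of paths varies smoothly — and observes that the endpoint map is smooth; since $s\circ q$ agrees locally with such a smooth composite and smoothness is a local condition for diffeologies, $s$ is smooth.

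Finally, from the existence of a global smooth section one gets the trivialization $P \cong M\times G$ in the standard way: the map $M\times G \to P$, $(x,g)\mapsto s(x)\cdot g$, is smooth (the $G$-action is smooth and $s$ is smooth), is a bijection (injectivity from the freeness hypothesis $p.g = p'.g \Rightarrow p=p'$ together with $\pi$ being $G$-invariant; surjectivity since every $p\in P$ lies in the orbit of $s(\pi(p))$), and its inverse $p\mapsto (\pi(p), \text{the unique }g\text{ with }p = s(\pi(p))\cdot g)$ is smooth because $\pi$ is smooth and the map recovering $g$ from $p$ and $s(\pi(p))$ is smooth (it is a plot-wise division in the principal bundle, which is smooth by the standard argument that the "division map" of a diffeological principal bundle is smooth). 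I expect the main obstacle to be the smoothness of $s$: one must be careful that lifting a plot $q\colon O\to M$ to a plot of $P$ and then applying horizontal transport is genuinely smooth in the parameter from $O$, which requires invoking the smoothness of $\theta$ as a map on $tpath(P)$ and the local structure of the quotient diffeology rather than any atlas; everything else is a routine translation of the finite-dimensional argument into the diffeological language.
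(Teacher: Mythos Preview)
The paper does not give its own proof of this proposition; it simply attributes the result to \cite{Ma2013}. Your proposal is a reasonable reconstruction along the expected lines: trivial curvature $\Rightarrow$ trivial reduced holonomy via Theorem~\ref{Ambrose-Singer}(2), simple connectedness of $M$ $\Rightarrow$ independence of $H\gamma(1)$ from the path, then build a section and trivialize. That is almost certainly the strategy of \cite{Ma2013} as well, since Theorems~\ref{Courbure} and~\ref{Ambrose-Singer} are precisely the tools the paper imports from that reference for this purpose.

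One point deserves more care. In your final step you appeal to a smooth ``division map'' $(p,s(\pi(p)))\mapsto g$ to get smoothness of the inverse of $M\times G\to P$. With the very minimal definition of principal bundle used in this paper (merely a smooth free action with quotient $M=P/G$), the existence and smoothness of such a division map is \emph{not} automatic; in Iglesias-Zemmour's framework this is exactly the extra axiom distinguishing a genuine diffeological principal bundle from a mere free action. You should either invoke that stronger definition explicitly, or argue directly that the connection $\theta$ itself furnishes the smooth inverse: for $p\in P$, the element $g$ with $p=s(\pi(p))\cdot g$ is recovered as $g = \left(Exp\bigl(-\theta(\partial_t\gamma)\bigr)\right)(1)$ along any path $\gamma$ from $p_0$ to $p$, and this is smooth in $p$ by smoothness of $\theta$ and regularity of $G$. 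Everything else in your sketch is sound.
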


 Let us precise a little bit more this result (see \cite[section 40.2]{KM} 
 for an analogous statement in the $c^\infty$-setting):

\begin{Theorem} \label{Hslice}
    Let $(G,\mathfrak{g})$ be a regular Lie group with regular Lie algebra 
    and let $X$ be a simply connected
    Fr\"olicher space. Let $\alpha \in \Omega^1(M,\mathfrak{g})$ such that 
    \begin{equation} \label{beta1}
    d\alpha + [\alpha,\alpha]=0\; .
    \end{equation}
    Then there exists a smooth map $$f : X \rightarrow G $$
    such that $$df.f^{-1} = \alpha.$$ Moreover, we move from one solution 
    $f$ to another by applying the
    Adjoint action of $G$, pointwise in $x \in X$.
\end{Theorem}

We remark that the theorem also holds if we consider the equation
$$d\alpha - [\alpha,\alpha]=0$$ instead of (\ref{beta1}); we only need to 
change left logarithmic derivatives for right logarithmic derivatives,
and Adjoint action for Coadjoint action. The correspondence
between solutions is given by the inverse map $f \mapsto f^{-1}$
on the group $C^\infty(X,G).$

\section{On groups and algebras of formal series and pseudo-differential 
operators } \label{S3}

\subsection{The algebra of formal pseudo-differential operators}\label{3.1}

We let $A$ be a commutative $\K-$algebra with unit $1$, in which $\K$ is any topologically
complete field of characteristic zero. Let $A^*$ be the group of units of $A,$ {\em i.e.},
the group of invertible elements of $A.$ We assume that $A$ is equipped with a derivation,
that is, with a $\K$-linear map $\d : A \rightarrow A$
satisfying the Leibnitz rule $\d (f \cdot g) = (\d f) \cdot g + f \cdot (\d g)$
for all $f,g \in A$. 

Let $\xi$ be a formal variable not in $A$. The {\em algebra of
    symbols} over $A$ is the vector space
\[
\Psi_{\xi}(A) = \left \{ P_{\xi} = \sum_{\nu \in {\bf Z}} a_{\nu} \, \xi^{\nu} : a_{\nu}
\in A \, , \;  a_{\nu} = 0 \mbox{ for } \nu \gg 0 \right \} \;
\]
equipped with the associative multiplication $\circ\,$ given by
\begin{equation}  \label{alfa}
P_{\xi} \circ Q_{\xi} = \sum_{k \geq 0} \frac{1}{k !} \, \frac{\partial^{k}P_{\xi}}
{\partial \xi^{k}} \, \d^{k} Q_{\xi} \; ,
\end{equation}
with the prescription that multiplication on the right hand
side of (\ref{alfa}) is standard multiplication of Laurent series in $\xi$
with coefficients in $A$, see \cite{D,O}. The algebra $A$ is
included in $\Psi_\xi(A)$. Operation (\ref{alfa}) mirrors the extension of the Leibnitz 
rule to negative powers of the derivative $\partial$ as explained, for example, in 
\cite{O}.

The {\em algebra of formal pseudo-differential operators} over $A$ is the vector space
\[
\Psi (A) = \left \{ P = \sum_{\nu \in \mathbb{Z}} a_{\nu} \,
\partial^{\nu} : a_{\nu} \in A \, , \; a_{\nu} = 0 \mbox{ for } \nu \gg 0 \right \}
\]
equipped with the unique multiplication which makes the map
$\sum_{\nu \in \mathbb{Z}} a_{\nu} \, \xi^{\nu} \; \mapsto \;
\sum_{\nu \in \mathbb{Z}} a_{\nu} \, \partial^{\nu}$
an algebra homomorphism. The
algebra $\Psi (A)$ is associative but not commutative. It becomes
a Lie algebra over $K$ if we define, as usual,
\begin{equation}
[ P , Q ] = P \, Q - Q \, P \; ,  \label{liebrack}
\end{equation}
and it is also an example of an infinite dimensional non-commutative
Poisson algebra in the sense of Kubo, see \cite{Ku}.

The {\em order} of $P \neq 0 \in \Psi (A)$, $P = \sum_{\nu \in \mathbb{Z}} a_{\nu} \,
\partial^{\nu}$, is $N$ if $a_{N} \neq
0$ and $a_{\nu} = 0$ for all $\nu > N$. If $P$ is of order $N$, the coefficient
$a_{N}$ is called the {\em leading term} or {\em principal symbol} of $P$. We note by 
$\Psi^N(A)$ the vector space of pseudo-differential operators $P$ as above satisfying
$k > N \Rightarrow a_k = 0,$  that is,
$$
\Psi^N(A)=\left\{P \in \Psi(A)\,|\, {\rm order}(P) \leq N \right\}\; .
$$ 
The vector space $\Psi^0(A)$ is of particular interest, since direct computations show 
that it is an algebra. We note by $\Psi^{0,*}(A)$ its group of units, i.e. the group of 
invertible elements of $\Psi^0(A).$ We collect some classical properties of $\Psi(A)$.

\begin{Lemma} \label{lieprop}

~

\begin{enumerate}
\item If $P$ and $Q$ are formal pseudo-differential operators over $A$, and the leading 
terms of $P$ and $Q$ are not divisors of zero in $A$, then 
$$
{\rm order}(P \, Q) = {\rm order}(P) + {\rm order}(Q)\; ,
$$ 
and
$$ 
{\rm order}([P,Q]) \leq {\rm order}(P) \, + {\rm order}(Q) -1\; .
$$ 
\item Every non-zero formal pseudo-differential operator $P$ for which its leading term
      is invertible has an inverse in $\Psi (A)$.
    \end{enumerate}
\end{Lemma}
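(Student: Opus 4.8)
The plan is to work directly with the expansion \eqref{alfa} for the product of symbols, transported to $\Psi(A)$ via the algebra isomorphism $\xi \mapsto \partial$. First I would handle part (1). Write $P = \sum_{\nu} a_\nu \partial^\nu$ of order $N$ and $Q = \sum_\mu b_\mu \partial^\mu$ of order $M$, so $a_N \neq 0$, $b_M \neq 0$. Using the multiplication rule, the coefficient of $\partial^{N+M}$ in $P\,Q$ is exactly $a_N b_M$ (the $k=0$ term $\tfrac{1}{k!}\partial_\xi^k P \cdot \partial^k Q$ with top powers; all other contributions land in strictly lower order, because each application of $\partial$ to a coefficient does not raise the power of $\xi$, while differentiating $\xi^\nu$ lowers the exponent). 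Since $a_N$ is not a zero divisor, $a_N b_M \neq 0$ — here one uses that the leading term of $P$ is not a divisor of zero and that $b_M \neq 0$ — so $\mathrm{order}(P\,Q) = N+M$. Symmetrically $\mathrm{order}(Q\,P) = N+M$ with leading term $b_M a_N$, and since $A$ is commutative $a_N b_M = b_M a_N$, so the order-$(N+M)$ terms of $P\,Q$ and $Q\,P$ cancel in the commutator, giving $\mathrm{order}([P,Q]) \leq N+M-1$. The ``integral domain'' consequence is immediate: if $A$ has no zero divisors, then no nonzero leading term is a divisor of zero, so the order is additive and the product of two nonzero operators is nonzero.

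For part (2), suppose $P$ has order $N$ with invertible leading term $a_N \in A^*$. I would construct an inverse by successive approximation on the order. Normalizing, it suffices to invert $P = a_N \partial^N(1 + R)$ where $R$ has order $\leq -1$; after factoring out the obvious unit $a_N \partial^N$ (whose inverse is $\partial^{-N} a_N^{-1}$, an element of $\Psi(A)$ since $a_N \in A^*$), the task reduces to inverting $1 + R$ with $\mathrm{order}(R) \leq -1$. I would then build $S = \sum_{j \geq 0} S_j$ with $\mathrm{order}(S_j) \leq -j$ recursively so that $(1+R)\,S = 1$: set $S_0 = 1$, and having chosen $S_0,\dots,S_{j-1}$ so that $(1+R)(S_0 + \cdots + S_{j-1}) = 1 + (\text{terms of order} \leq -j)$, pick $S_j$ to be the negative of the order-$\leq -j$ remainder (legitimate since $R\,S_{j-1}$ has order $\leq -j$ and each coefficient equation is solved in $A$ with no division needed beyond the one already used). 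The key point is that for each fixed power $\partial^{-m}$, only finitely many of the $S_j$ contribute, so the formal sum is a bona fide element of $\Psi(A)$; this is exactly the condition $a_\nu = 0$ for $\nu \gg 0$ in the definition of $\Psi(A)$. Finally I would check that this right inverse is also a left inverse by the standard argument (if $S$ is a right inverse and $S'$ a left inverse, then $S' = S'(PS) = (S'P)S = S$; a left inverse is constructed symmetrically, or one notes the right-inverse construction applied to the opposite multiplication yields a left inverse).

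The main obstacle is bookkeeping rather than conceptual: one must verify carefully that in the product expansion \eqref{alfa} every term other than the leading one genuinely drops the order by at least one, so that the order filtration behaves as claimed and the recursion in part (2) terminates coefficient-by-coefficient. This is where the precise form of the Leibnitz-type expansion (differentiation in $\xi$ strictly lowers the $\xi$-degree, while $\partial$ acting on coefficients in $A$ preserves it) does all the work; once that is pinned down, both parts follow by routine induction on the order.
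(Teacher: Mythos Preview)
Your argument is correct and is the standard one. The paper does not actually prove this lemma; immediately after stating it the authors write ``This lemma is proven for instance in \cite{O}'' and move on. So there is nothing to compare against beyond noting that your sketch is essentially the textbook proof (as in Olver or Dickey): compute the top coefficient of $P\,Q$ from the $k=0$ term of \eqref{alfa}, use commutativity of $A$ for the commutator bound, and invert by reducing to $1+R$ with $\mathrm{order}(R)\le -1$ and summing a formal geometric series whose partial sums stabilize coefficientwise.
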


This lemma is proven for instance in \cite{O}. After \cite{Ma2015}, we assume now, and 
till the end of subsection \ref{3.1}, that the algebra $A$ is a \textbf{Fr\"olicher 
algebra}. Thus, addition, scalar multiplication and multiplication are smooth, and 
inversion is a smooth operation on $A^*$, in which $A^*$ is equipped with the subset 
Fr\"olicher structure. We also assume that $\partial$ is smooth.
Then, identifying a formal pseudo-differential operator 
$P \in \Psi(A)$ with its sequence of partial symbols, we obtain that  
$\Psi(A),$ as a linear subspace of $A^\Z,$ carries a natural Fr\"olicher structure. 
We obtain the following special case of \cite[Proposition 3.25]{Ma2015}:

\begin{Proposition} \label{Psi-F}
    $\Psi(A)$ is a Fr\"olicher algebra.
\end{Proposition}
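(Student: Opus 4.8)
The plan is to realise $\Psi(A)$ as a linear subspace of the countable product $A^{\Z}=\prod_{\nu\in\Z}A$, identifying an operator $P=\sum_\nu a_\nu\partial^\nu$ with the sequence $(a_\nu)_\nu$ of its partial symbols, and then to run everything coordinatewise, using the infinite-product versions of Propositions \ref{prod1} and \ref{prod2}. Concretely, for this structure a map $g\colon O\to\Psi(A)$ from an open subset $O$ of a Euclidean space is a plot iff each $a_\nu\circ g$ is a plot of $A$, and a generating family of functions for $\Psi(A)$ is obtained by composing the coordinate projections $P\mapsto a_\nu$ with a generating family for $A$. Vector-space addition and scalar multiplication on $\Psi(A)$ are performed coordinatewise through the (smooth, by the standing hypotheses on $A$) operations of $A$, so their smoothness is immediate; what is left is the smoothness of the multiplication $\circ$ of (\ref{alfa}) and of inversion on the group of units, and I would check both exactly as in the proof of Proposition \ref{series1}.

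Expanding (\ref{alfa}) coefficient by coefficient, for $P=\sum_\mu a_\mu\partial^\mu$ and $Q=\sum_\nu b_\nu\partial^\nu$ the $\partial^n$-component of $P\circ Q$ is
\begin{equation}\label{prodcoeff}
(P\circ Q)_n=\sum_{\substack{k\ge 0\\ \mu+\nu-k=n}}\binom{\mu}{k}\,a_\mu\,\partial^{k}b_\nu ,
\end{equation}
with $\binom{\mu}{k}$ an integer generalised binomial coefficient; this sum is finite because $a_\mu=0$ for $\mu>\operatorname{order}(P)$ and $b_\nu=0$ for $\nu>\operatorname{order}(Q)$, and on the set of pairs with $\operatorname{order}(P)\le N_1$, $\operatorname{order}(Q)\le N_2$ its index set is contained in the fixed finite set $\{(\mu,\nu,k):k=\mu+\nu-n\ge0,\ n-N_2\le\mu\le N_1\}$. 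Each summand of (\ref{prodcoeff}) is obtained from the smooth coordinate projections $P\mapsto a_\mu$ and $Q\mapsto b_\nu$, the smooth derivation $\partial$ iterated $k$ times, multiplication by the constant $\binom{\mu}{k}$, and the smooth multiplication of $A$; hence $(P,Q)\mapsto(P\circ Q)_n$ is smooth for every $n$, and therefore $\circ$ is smooth by the infinite-product form of Proposition \ref{prod2}. For inversion, by Lemma \ref{lieprop}(2) any $P=\sum_{\nu\le N}a_\nu\partial^\nu$ with $a_N\in A^*$ is a unit; writing $P^{-1}=\sum_{\nu\le -N}c_\nu\partial^\nu$ and reading off $P\circ P^{-1}=1$ through (\ref{prodcoeff}) yields $c_{-N}=a_N^{-1}$ together with a recursion expressing each further $c_{-N+p}$ as $-a_N^{-1}$ times a finite sum of terms $a_j\,\partial^{k}c_l$ with $l<-N+p$. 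The only inversion in $A$ that occurs is that of $a_N\in A^*$, which is smooth by hypothesis, so an induction on $p$ shows that every coefficient of $P^{-1}$ is a smooth function of $P$, and inversion is smooth. Hence $\Psi(A)$ is a Fr\"olicher algebra.

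The step I would be most careful about — the only genuine subtlety — is the finiteness of the sums in (\ref{prodcoeff}) and in the inversion recursion, which amounts to controlling the orders of the operators appearing in a plot: a family in $\Psi(A)$ need not have uniformly bounded order. This is why one reads the ``natural Fr\"olicher structure'' on $\Psi(A)=\bigcup_N\Psi^N(A)$ as the one for which plots are locally of bounded order (each $\Psi^N(A)=\prod_{\nu\le N}A$ being a genuine Fr\"olicher subspace of $A^{\Z}$), so that smoothness of $\circ$ and of inversion may be tested on plots factoring through a fixed $\Psi^{N_1}(A)\times\Psi^{N_2}(A)$, respectively through the units of some $\Psi^N(A)$; on these pieces the index sets above are the indicated fixed finite sets and the argument is the routine bookkeeping sketched above. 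I do not expect any other obstruction, since the product structure on $A^{\Z}$, the smoothness of the algebra operations and of $\partial$, and the behaviour of generating families under products and subspaces are all available from Section 2 and from the hypotheses on $A$.
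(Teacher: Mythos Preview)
Your approach is essentially the paper's: identify $\Psi(A)$ with its sequence of partial symbols inside $A^{\Z}$ and use the product Fr\"olicher structure to test smoothness coordinatewise; the paper in fact gives no proof at all beyond this identification, deferring to \cite[Proposition~3.25]{Ma2015}. Your explicit verification via the coefficient formula (\ref{prodcoeff}) and the inversion recursion, patterned on Proposition~\ref{series1}, is exactly the computation one expects, and your flagging of the order-boundedness issue for plots (resolving it by reading the ``natural'' structure through the filtration $\Psi(A)=\bigcup_N\Psi^N(A)$) is a genuine point that the paper's one-line reference glosses over.
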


\begin{rem}
In more classical settings, $\Psi(A)$ can have a quite complicated structure. For example, 
if $A$ is a locally convex algebra, so is $\Psi(A)$, but this space is not
complete for the product topology even if $A$ is complete, see \cite[Section 5]{ARS1}.
\end{rem}

We further assume, until the end of subsection \ref{3.1}, that $A^*$ is a 
Fr\"olicher Lie group with Lie algebra $\mathfrak{g}_A.$

\smallskip

\smallskip

We notice that if $A$ is a complete locally convex topological algebra equipped with its 
natural Fr\"olicher structure, and $A^*$ is open in $A,$ then $\mathfrak{g}_A = A.$
For the needs of integration, we also have to assume (till the end of subsection 
\ref{3.1}) that $A$ is regular as a Fr\"olicher vector space.
We now extend a result proved in \cite{Glo} in the case when $A$ is a Fr\'echet algebra 
and the group of units is open in $A$. In that paper, Gl\"ockner was able to adapt the 
standard proof of the Lie group structure of $GL(H)$ when $H$ is a Hilbert space to his
Fr\'echet algebra context; since in our setting $A$ can be more general, our proof needs 
to be different. It is based on Theorem \ref{regulardeformation} and Theorem 
\ref{exactsequence}.

\begin{Lemma} \label{Psi-1}
The group $1+ \Psi^{-1}(A)$ is a fully regular Fr\"olicher Lie group with regular Lie 
algebra $\Psi^{-1}(A).$
\end{Lemma}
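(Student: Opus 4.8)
The plan is to realize $1+\Psi^{-1}(A)$ as the group $1+\A$ of Theorem \ref{regulardeformation} for a suitably chosen graded family of regular Fréchet (Frölicher) vector spaces. First I would set $A_n = A\cdot\partial^{-n}$ for $n\in\N^*$; since $A$ is, by hypothesis, a regular Frölicher vector space, each $A_n$ is a copy of $A$ and hence regular. The key point to check is that the composition product on formal pseudo-differential operators restricts to a \emph{graded} smooth multiplication $A_n\cdot A_m\subset A_{n+m}$. This follows from the Leibniz-type formula (\ref{alfa}): for $P=a\partial^{-n}$ and $Q=b\partial^{-m}$ one has $P\circ Q=\sum_{k\ge 0}\binom{-n}{k} a\,(\partial^k b)\,\partial^{-n-m-k}$, so every term lands in $A_{n+m+k}\subset\bigoplus_{\ell\ge n+m}A_\ell$, and it genuinely raises the negative order by at least one unit — this is exactly the grading property. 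Smoothness of each component map $(a,b)\mapsto \binom{-n}{k}a\,\partial^k b$ follows from smoothness of multiplication and of $\partial$ on the Frölicher algebra $A$ (Proposition \ref{Psi-F}).

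Next I would identify the algebra of formal series $\A=\{\sum_{n\in\N^*}a_n : a_n\in A_n\}$ from Theorem \ref{regulardeformation} with $\Psi^{-1}(A)$, carrying the Frölicher structure of the infinite product; under the identification of an operator with its sequence of partial symbols this is literally the Frölicher structure $\Psi^{-1}(A)$ inherits as a linear subspace of $A^{\Z}$. Then $1+\A$ is exactly $1+\Psi^{-1}(A)$ as a set and as a group: the product on $1+\A$ induced by $\A$ agrees with operator composition, and inversion of $1+a$ is given by the geometric-type series $\sum_{k\ge 0}(-a)^{\circ k}$, which converges in the product topology precisely because multiplication is graded (the $A_\ell$-component of the sum is a finite sum). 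Theorem \ref{regulardeformation} then immediately gives that $1+\Psi^{-1}(A)$ is a Frölicher Lie group with regular Frölicher Lie algebra $\A=\Psi^{-1}(A)$, and that the exponential map is a smooth bijection $\Psi^{-1}(A)\to 1+\Psi^{-1}(A)$.

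The main obstacle I anticipate is not any single hard estimate but rather the bookkeeping needed to verify \emph{cleanly} that operator composition, when written in the partial-symbol coordinates $a=\sum_n a_n\partial^{-n}$, decomposes into the graded smooth multiplication required by Theorem \ref{regulardeformation}: one must confirm that for each fixed output degree $N\in\N^*$ the $A_N$-component of $P\circ Q$ depends on only finitely many of the input components $a_n,b_m$ (those with $n+m+k=N$, $k\ge 0$, hence $n+m\le N$), and that on each such finite piece it is a finite sum of maps of the form $(a_n,b_m)\mapsto c_{n,m,k}\,a_n\,(\partial^k b_m)$ with $c_{n,m,k}=\binom{-n}{k}$ a scalar. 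Once this finiteness-plus-smoothness is in place, everything else is a direct citation of Theorem \ref{regulardeformation}. A secondary point worth a sentence is the compatibility of the Frölicher structure on $1+\Psi^{-1}(A)$ obtained this way with the subset Frölicher structure inherited from $\Psi^0(A)$ (or $\Psi(A)$), which follows from Proposition \ref{Psi-F} together with the fact that the affine translation $a\mapsto 1+a$ is a Frölicher diffeomorphism onto its image.
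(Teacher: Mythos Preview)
Your proposal is correct and follows essentially the same approach as the paper: set $\A_n=\{a\,\partial^{-n}:a\in A\}$, observe that $\Psi^{-1}(A)=\prod_{n\ge 1}\A_n$ with the product Fr\"olicher structure, and invoke Theorem~\ref{regulardeformation}. The paper's proof is just two lines (and even cites Theorem~\ref{exactsequence} where Theorem~\ref{regulardeformation} is clearly intended); your more careful discussion of why each output component $[P\circ Q]_N$ depends smoothly on only finitely many input components --- which is what actually drives the argument, since the product is only \emph{filtered}, $A_n\cdot A_m\subset\prod_{\ell\ge n+m}A_\ell$, rather than strictly graded --- is a genuine improvement in rigor over the paper's bare assertion.
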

\begin{proof}
By integration componentwise, we already know that $ \Psi^{-1}(A)$ is regular.  We now 
remark that $\Psi^{-1}(A)$ is graded by the order. Then, setting 
$$
\A_n = \{a_{-n}\d^{-n}| a_{-n}\in A\}\; ,
$$ 
we obtain the hypotheses needed in order to apply Theorem \ref{exactsequence}.
\end{proof}

\begin{Theorem}
There exists a short exact sequence of groups 
$$ 
1 \longrightarrow 1 + \Psi^{-1}(A) \longrightarrow \Psi^{0,*}(A) \longrightarrow 
A^* \longrightarrow 1
$$
such that:
\begin{enumerate}
\item The injection $1 + \Psi^{-1}(A) \rightarrow \Psi^{0,*}(A)$ is smooth
\item The principal symbol map $ \sigma_0:\Psi^{0,*}(A) \rightarrow A^*$ is smooth and it 
has a global section which is the restriction to $A^*$ of the canonical inclusion 
$A \rightarrow \Psi^{0}(A).$
\end{enumerate}
As a consequence, $A^*$ is a fully regular Fr\"olicher Lie group if and only if  
$\Psi^{0,*}(A)$ is a fully regular Fr\"olicher Lie group with regular Lie algebra 
$\mathfrak{g}_A \oplus \Psi^{-1}(A).$
\end{Theorem}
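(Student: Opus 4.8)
The plan is to realise the principal symbol map as the surjection in an extension $1 \to 1+\Psi^{-1}(A) \to \Psi^{0,*}(A) \to A^* \to 1$ and then to transfer regularity through it by Theorem \ref{exactsequence}. First I would introduce $\sigma \colon \Psi^{0,*}(A) \to A$, $\sigma(P) = a_0$, the coefficient of $\partial^0$. Inspecting the multiplication (\ref{alfa}) one sees that every contribution to the order-$0$ part of $P\circ Q$ other than $a_0 b_0$ lowers the order, so $\sigma(P\circ Q) = \sigma(P)\sigma(Q)$; moreover, writing $P^{-1} = \sum_{\nu\le 0} b_\nu\partial^\nu$ and matching order-$0$ parts in $P\circ P^{-1}=1$ gives $a_0 b_0 = 1$, so in fact $\sigma$ takes values in $A^*$ and is a group homomorphism. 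For a section, given $a\in A^*$ the constant operator $a\in\Psi^0(A)$ has order $0$ with invertible leading term, so by Lemma \ref{lieprop} it is invertible in $\Psi(A)$ with inverse again of order $0$; hence the canonical inclusion $A\hookrightarrow\Psi^0(A)$ restricts to a homomorphic section $s\colon A^*\to\Psi^{0,*}(A)$, $s(a)=a$, which in particular shows $\sigma$ surjective. Finally $\ker\sigma = \{P : a_0 = 1\} = 1+\Psi^{-1}(A)$, which is a group by Lemma \ref{lieprop}; so the sequence is exact.

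Next I would check all the smoothness claims. Since $\Psi(A)$ is a Fr\"olicher algebra (Proposition \ref{Psi-F}) carried by the product structure on $A^{\Z}$, the inclusion $1+\Psi^{-1}(A)\hookrightarrow\Psi^{0,*}(A)$, the map $\sigma$, and the section $s$ are smooth, being a coordinate inclusion, a coordinate projection, and a coordinate inclusion respectively. I must also verify that $\Psi^{0,*}(A)$ is itself a Fr\"olicher Lie group with Lie algebra $\mathfrak{g}_A\oplus\Psi^{-1}(A)$: multiplication is smooth because each coefficient of $P\circ Q$ is, by (\ref{alfa}), a finite sum of terms $c\,a_\nu\,\partial^k b_\mu$ and $\partial$ and multiplication on $A$ are smooth; inversion is smooth because the coefficients $b_\nu$ of $P^{-1}$ are obtained by the recursion $b_0 = a_0^{-1}$, $b_n = a_0^{-1}\bigl(\delta_{n,0} - (\text{a polynomial expression in the } a_\nu \text{ and the } \partial^k b_\mu,\ \mu>n)\bigr)$, each step a smooth operation on $A$ using smoothness of inversion on $A^*$. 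The internal tangent cone at $1$ is then $\mathfrak{g}_A\oplus\Psi^{-1}(A)$: for a contour $P(t)$ with $P(0)=1$, the $\partial^0$-component is a contour of $A^*$ through $1_A$, contributing $\mathfrak{g}_A$, while the lower-order coefficients are unconstrained, contributing $\Psi^{-1}(A)$; the bracket is the commutator inherited from the Fr\"olicher algebra $\Psi^0(A)$ (hence smooth), and $\mathfrak{g}_A\oplus\Psi^{-1}(A)$ is stable under it because $A$ is commutative inside $\Psi^0(A)$ and $[\mathfrak{g}_A,\Psi^{-1}(A)],[\Psi^{-1}(A),\Psi^{-1}(A)]\subseteq\Psi^{-1}(A)$.

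It then remains to feed the data into Theorem \ref{exactsequence}. The smooth section is $s$; and the trace Fr\"olicher structure from $\Psi^{0,*}(A)$ on $1+\Psi^{-1}(A)$ coincides with the push-forward of its intrinsic structure because $1+\Psi^{-1}(A)$ is a split coordinate affine subspace of $A^{\Z}$, so both structures are just the restriction of the product structure. By Lemma \ref{Psi-1}, $1+\Psi^{-1}(A)$ is a regular Fr\"olicher Lie group with regular Lie algebra $\Psi^{-1}(A)$. Hence Theorem \ref{exactsequence} yields: $\Psi^{0,*}(A)$ is a regular Fr\"olicher Lie group if and only if $A^*$ is, and $\mathfrak{g}_A\oplus\Psi^{-1}(A)$ is a regular Lie algebra if and only if $\mathfrak{g}_A$ is; combining the two equivalences gives the stated biconditional.

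I expect the main obstacle to be the step that $\Psi^{0,*}(A)$ is a Fr\"olicher Lie group with exactly the Lie algebra $\mathfrak{g}_A\oplus\Psi^{-1}(A)$ --- i.e. the smoothness of inversion via the coefficient recursion together with the correct computation of the internal tangent cone at the identity, and the check that it is genuinely the whole Lie algebra with the commutator bracket well defined on it. Everything else is bookkeeping with split coordinate subspaces of $A^{\Z}$ and a direct application of Lemma \ref{Psi-1} and Theorem \ref{exactsequence}.
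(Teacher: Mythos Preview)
Your proposal is correct and follows essentially the same route as the paper: establish that the principal symbol is a smooth group homomorphism onto $A^*$ with kernel $1+\Psi^{-1}(A)$ and smooth section given by the inclusion $A^*\hookrightarrow\Psi^{0,*}(A)$, then invoke Lemma \ref{Psi-1} and Theorem \ref{exactsequence}. You supply considerably more detail than the paper does---the explicit check that $\sigma$ is multiplicative via (\ref{alfa}), the recursive smoothness of inversion, the identification of the internal tangent cone, and the verification of the trace-diffeology hypothesis of Theorem \ref{exactsequence}---all of which the paper leaves implicit or asserts in a single clause, but the architecture is identical.
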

\begin{proof}
    Let $a,b \in (\Psi^0(A))^2.$ Then, $\sigma_0(ab) = \sigma_0(a) \sigma_0(b).$ If $a$ 
    is invertible, setting $b = a^{-1},$ we get $\sigma_0(a)\sigma_0(a^{-1}) = 1_A$. 
    Thus, $\sigma_0(a) \in A^*.$ Thus, the map $a \mapsto \sigma_0(a)$ is a
    morphism of diffeological groups. Moreover, the section map
    $A^* \rightarrow \Psi^{0,*}(A)$ is a smooth morphism of Fr\"olicher Lie groups, and 
    the inclusion $1+ \Psi^{-1}(A) \rightarrow \Psi^{0,*}(A)$ is obviously a morphism of 
    Fr\"olicher Lie groups. Lemma \ref{Psi-1} implies
    that we can apply Theorem \ref{exactsequence}, thereby completing the proof.
\end{proof}

\vskip 8pt

As explained in Remark \ref{comp}, the results stated above
remain valid for classical G\^ateaux differentiability if $A$ is a
Fr\'echet space.

\smallskip

\smallskip

Let us now prove the following: 

\begin{Theorem}
There is no Fr\"olicher subalgebra $\A$ of $\Psi(A)$ whose group of units is a 
Fr\"olicher Lie subgroup of $\Psi(A)^*$ with Lie algebra $\mathfrak{g}$ and 
$\d \in \mathfrak{g},$ which is Omori-regular. In particular $\Psi(A)^*$ is not 
Omori-regular.
\end{Theorem}
\begin{proof}
    Substituting $X = \d^{-1}$ in Lemma \ref{series2} we have that the sequence 
    $$u_n = \left(1 + \frac{\d}{n}\right)^n$$
    converges in $\R^\Z$ but not in $\R((\d^{-1})) = \R^\Z \cap \Psi(A).$
\end{proof}

The fact that the Euler method fails to define an exponential map motivates us, after 
\cite{M1,M3}, to regularize $\Psi(A)$ into some ``deformed'' algebra, see 
Definitions 3.8, 3.17 and 4.3 below.

\subsection{Mulase formal Lie group and Fr\"olicher structures}

Following Bourbaki \cite[Algebra, Chapters 1-3, p. 454-457]{B}, we take $T$ as the
additive monoid of all sequences of natural numbers $t=(n_i)_{i\in \N^*}$ such that 
$n_{i}=0$ except for a finite number of indices, and $R$ as a commutative ring equipped 
with a derivation $\partial$. A {\em formal power series} is a function $u$ from $T$ to 
$R$, $u=(u_t)_{t\in T}$. Consider an infinite number of formal variables 
$\tau_i$, $i \in I$, and set $\tau=(\tau_1,\tau_2,\cdots)$. Then, the formal power series
$u$ can be written as $u=\sum_{t\in T}u_t\tau^t$ in which 
$\tau^t = \tau_1^{n_1} \tau_2^{n_2}\cdots$. We say that $u_t\in R$ is a
coefficient and that $u_t\tau^t$ is a term. 

Operations on the set $R[[\tau]] := R[[\tau_1,\tau_2, \cdots]]$ are defined 
in an usual manner: If $u=(u_t)_{t\in T},$ $v=(v_t)_{t\in T},$ then
$$
u+v=(u_t+v_t)_{t\in T} \quad \mbox{ and } \quad u\,v=w\; ,
$$
in which $w=(w_t)_{t\in T}$ and
$$
w_t=\sum_{\genfrac{}{}{0pt}{1}{r,s\in T}{r+s=t}} u_r\, v_s\; .
$$
It is shown in \cite[p. 455]{B} that this multiplication is well defined, and that 
$A_t = R[[\tau]]$ equipped with these two operations is a commutative 
algebra with unit. The derivation $\partial$ on $R$ extends to a derivation
on $A_t$ via 
$$
\d u=\sum_{t\in T}( \d u_t)\tau^t\; .
$$

We define the \textit{valuation} of a power series following \cite{M1}, p. 59. Let 
$u\in A_t,$ $u\neq 0.$ We write $u=\sum_{t\in T}u_t\tau^t$, and if 
$t = (n_i)_{i \in \N}$, we set $\mid t\mid= \sum i n_i$, which is equivalent to Mulase's 
assumption $ord (\tau_i) = i.$ The terms $u_t\tau^t$ such that $\mid t\mid=p$ are called 
{\em monomials of valuation} $p.$ The formal power series $u_p$ whose terms of
valuation $p$ are those of $u,$ and whose other terms are zero,
    is called {\em the homogeneous part of $u$ of valuation $p.$} The
    series $u_0,$ the homogeneous part of $u$ of valuation $0$, is
    identified with an element of $R$ called the constant term of $u.$
    For a formal series $u\neq0,$ the least integer $p\geqslant0$ such
    that $u_p\neq0$ is called the {\em valuation} of $u,$ and it is
    denoted by $val_t(u)$. We extend this definition to the case
    $u=0$ by setting $val_t(0) =+ \infty$ (see \cite[p. 457]{B}). The
    following properties hold: if $u,v$ are formal power series then
\begin{equation} \label{val1} 
val_t(u+v)\geqslant inf(val_t(u),val_t(v))\; , \quad \mbox{ if } u+ v \neq 0 \; , 
\end{equation}
\begin{equation} \label{val2} 
val_t(u+v)=inf(val_t(u),val_t(v))\; , \quad \mbox{ if } val_t(u)\neq val_t(v) \; ,
\end{equation}
\begin{equation} \label{val3}
val_t(uv)\geqslant val_t(u)+val_t(v)\; , \quad \mbox{ if } u v \neq 0\; .
\end{equation}

\smallskip

\begin{rem}
Other definitions of $val_t$ are possible, see \cite{D1,D2} and \cite{ER2013}.
\end{rem}

\begin{Definition}[Mulase, \cite{M3}]
We define spaces of formal pseudo-differential and differential  operators of infinite 
order, $\widehat{\Psi}(A_t)$ and $\widehat{\D}_{A_t}$ respectively, as
    \smallskip
\begin{equation} \label{aldef}
\widehat{\Psi}(A_t) = 
     \left\{ \sum_{\alpha \in {\mathbb{Z}}}
    a_{\alpha}\,\d^{\alpha} : a_{\alpha} \in A_t \mbox { and } \exists
     (C,N)\in\R \times \mathbb{N} \mbox{ such that }
    val_t(a_\alpha) > C\alpha - N \ \forall\ \alpha \gg 0 \right\}
\end{equation}
and
\begin{equation} \label{aldef1}
    \widehat{\D}_{A_t} = \left\{ P= \sum_{\alpha \in \mathbb{Z}}
    a_{\alpha}\,\d^{\alpha} :
     P \in\widehat{\Psi}(A_t) \mbox{ and }
    a_\alpha=0 \mbox { for } \alpha <0 \right\} \; .
\end{equation}
\end{Definition}

\noindent In addition, we define

$$
\mathcal{I}_{A_t} = 
\Psi^{-1}(A_t)=\left\{P \in\widehat{\Psi}(A_t) : \forall \alpha > 0, a_\alpha = 0\right\}
$$
and $$G_{A_t}= 1 + \mathcal{I}(A_t)\; .$$
Notice that
$A_t \subset \widehat{\Psi}(A_t)$ as the 0-order term,  
$$ G_{A_t} \subset \Psi(A_t) \subset\widehat{\Psi}(A_t)\; ,$$ and 
$\widehat{\D}_{A_t} \not \subset \Psi(A_t).$ The operations on $\widehat{\Psi}(A_t)$ are
extensions of the operations on $\Psi(A_t)$. More precisely we have (\cite{M3}, see also 
an explicit proof in \cite{ER2013}):

\begin{Lemma}
    The space $\widehat{\Psi}(A_t)$ has an algebra structure, and
    $\widehat{\D}_{A_t}$ is a subalgebra of $\widehat{\Psi}(A_t)$.  
\end{Lemma}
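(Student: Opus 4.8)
The plan is to define the product on $\widehat{\Psi}(A_t)$ by the same composition rule that defines $\circ$ on $\Psi(A_t)$ in \S\ref{3.1}, and then to check, by valuation estimates, that it is well defined and closed. Recall the operator form of the symbol product (\ref{alfa}): for $\alpha\in\Z$ and $b\in A_t$ one has $\d^{\alpha}\circ b=\sum_{k\geq 0}\binom{\alpha}{k}(\d^{k}b)\,\d^{\alpha-k}$. So for $P=\sum_{\alpha}a_{\alpha}\d^{\alpha}$ and $Q=\sum_{\beta}b_{\beta}\d^{\beta}$ in $\widehat{\Psi}(A_t)$ I would set $P\circ Q=\sum_{\gamma}c_{\gamma}\d^{\gamma}$ with
\[
c_{\gamma}=\sum_{\alpha+\beta-k=\gamma,\ k\geq 0}\binom{\alpha}{k}\,a_{\alpha}\,\d^{k}b_{\beta}\ ,
\]
and the point is to make sense of this a priori infinite sum and control it. The two tools are: $A_t=R[[t]]$ is complete for the filtration $F^{p}A_t=\{u:val_t(u)\geq p\}$ (each graded piece being finite over $R$ since $\mid t\mid = p$ bounds all the $n_i$), and $val_t$ is superadditive with $val_t(\d u)\geq val_t(u)$; also, absorbing finitely many exceptional indices into the constants, the defining growth condition may be taken to hold at all indices, say $val_t(a_{\alpha})>C_{1}\alpha-N_{1}$ and $val_t(b_{\beta})>C_{2}\beta-N_{2}$ with $C_{1},C_{2}>0$.

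For well-definedness of $c_{\gamma}$, each summand has valuation $\geq val_t(a_{\alpha})+val_t(b_{\beta})$, and if this is $\leq T$ then $\alpha<(T+N_{1})/C_{1}$ and $\beta<(T+N_{2})/C_{2}$, while $\alpha+\beta=\gamma+k\geq\gamma$ then forces $\alpha,\beta$, hence $k$, into a finite range depending only on $T$ and $\gamma$. So only finitely many summands lie outside $F^{T}A_t$ for each $T$; the partial sums are Cauchy, so $c_{\gamma}\in A_t$. The same computation, optimized over the index set, gives
\[
val_t(c_{\gamma})\ \geq\ \inf\{\,val_t(a_{\alpha})+val_t(b_{\beta}):\alpha+\beta\geq\gamma\,\}\ >\ C'\gamma-N'
\]
with $C'=C_{1}C_{2}/(C_{1}+C_{2})>0$ and $N'$ built from $C_{1},C_{2},N_{1},N_{2}$: a contributing pair with $val_t(a_{\alpha})+val_t(b_{\beta})\leq T$ satisfies $\gamma\leq\alpha+\beta<(T+N_{1})/C_{1}+(T+N_{2})/C_{2}$, which rearranges to the linear lower bound $T>C'\gamma-N'$. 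Hence $P\circ Q\in\widehat{\Psi}(A_t)$, with $1\in A_t$ as two-sided unit, and addition is obviously internal.

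For associativity I would avoid rearranging triple sums and argue by reduction: $F^{p}A_t$ is a $\d$-stable ideal, so by the construction of \S\ref{3.1} the quotient $A_t/F^{p}A_t$ carries an induced derivation and $\Psi(A_t/F^{p}A_t)$ is an associative algebra. Reducing coefficients modulo $F^{p}$ annihilates $a_{\alpha}$ for $\alpha$ large (by the growth condition), so there is a well-defined map $\widehat{\Psi}(A_t)\to\Psi(A_t/F^{p}A_t)$; by the finiteness just established it carries $c_{\gamma}$ to the corresponding product coefficient in the quotient, i.e. it is an algebra homomorphism. Since $\bigcap_{p}F^{p}A_t=0$, these homomorphisms are jointly injective, so $\widehat{\Psi}(A_t)$ embeds as a subring of $\prod_{p}\Psi(A_t/F^{p}A_t)$ and inherits associativity.

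Finally, for $\widehat{\D}_{A_t}$: it contains $1$ and is closed under addition, and if $a_{\alpha}=b_{\beta}=0$ for all negative indices and $\gamma<0$, then every contributing triple has $k=\alpha+\beta-\gamma>\alpha+\beta\geq\alpha\geq 0$, so $0$ appears among the factors $\alpha,\alpha-1,\dots,\alpha-k+1$ of $\binom{\alpha}{k}$ and that term vanishes; thus $c_{\gamma}=0$ for $\gamma<0$ and $\widehat{\D}_{A_t}$ is a subalgebra. The main obstacle is the valuation bookkeeping of the second paragraph: turning the linear lower bound on the valuations of the individual coefficients into both local finiteness of the composition sums at every valuation level and a lower bound of exactly the same linear shape for the coefficients of the product; once that is secured, associativity and the differential-operator case are essentially formal.
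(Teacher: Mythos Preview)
Your proof is correct. The paper itself does not prove this lemma; it simply attributes the result to Mulase \cite{M3} (see also \cite{ER2013}) and moves on. So you are supplying exactly the argument the paper elides.

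Your valuation bookkeeping is sound: normalising the growth condition to hold at all indices with positive slopes $C_1,C_2>0$ is legitimate (the condition is vacuous for $\alpha\le 0$ since $val_t\ge 0$, and finitely many positive $\alpha$ can be absorbed into $N$), and the harmonic-mean constant $C'=C_1C_2/(C_1+C_2)$ for the product is the right one. The finiteness argument for each $c_\gamma$ modulo $F^TA_t$ is correct, and completeness of $A_t=R[[t]]$ for the valuation filtration gives convergence. Your associativity argument via the family of reductions $\widehat{\Psi}(A_t)\to\Psi(A_t/F^pA_t)$ is clean and avoids any delicate interchange of triple sums; the key points---that $F^pA_t$ is a $\partial$-stable ideal and that the growth condition forces each image to have finite $\partial$-order---are both right. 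The subalgebra claim for $\widehat{\D}_{A_t}$ via $\binom{\alpha}{k}=0$ for $0\le\alpha<k$ is the standard one-line check.

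One small remark: the paper's Definition writes $C\in\R$ without explicitly requiring $C>0$, which taken literally would make the growth condition vacuous; you correctly work with $C>0$, which is the intended (and only sensible) reading, consistent with Mulase's original setup.
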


\begin{Definition} \label{tez}
    Let $\mathcal{K}$ be the ideal of $A_t$ generated by $\tau_1, \tau_2,
    \cdots$. If $P \in \widehat{\Psi}(A_t)$, we denote by $P\vert_{\tau=0}$ 
    the equivalence class $P\, mod\; \mathcal{K}$ (i.e. the projection on the constant 
    coefficient of the $\tau$-series), and we identify it
    with an element of $\Psi(A)$. We also define the spaces
    \begin{equation} \label{345}
    G(\widehat{\Psi}(A_t)) = \{ P \in \widehat{\Psi}(A_t) :
    P\vert_{\tau=0}\in G_{A_t}  \}
    \end{equation}
    and
    \begin{equation} \label{346}
    \widehat{\D}_{A_t}^{\times} = \{ P \in \widehat{\D}_{A_t} :
    P\vert_{\tau=0}=1 \} \; .
    \end{equation}
\end{Definition}

\smallskip

\begin{rem} \label{fring} 
We note that if $R$ is a Fr\'echet ring, then the spaces defined in 
(\ref{aldef}) and (\ref{aldef1}) are locally convex but not necessarily 
complete, and the same remark holds for the spaces defined in Lemma 
\ref{Psi-Rt} below. On the other hand, we will see
(Theorem \ref{Psihat}) that these spaces have Fr\"olicher structures and that 
therefore they can be studied in a rather complete way without the need to 
consider delicate convergence issues.
\end{rem}

Hereafter we assume that $R$ is a Fr\"olicher $\K-$algebra and that 
$\d: R \rightarrow R$ is a smooth derivation. 
These assumptions imply the following (we use the notations introduced at the 
beginning of this subsection):

\begin{Lemma} \label{Psi-Rt}
$\Psi(R)[[\tau]]$ is a Fr\"olicher algebra and 
    $$G\Psi_t(R)=\{u \in\Psi(R)[[\tau]] : \hbox{ the monomial of valuation }
    0\hbox{ equals } 1\}$$
    is a Fr\"olicher Lie group with Lie algebra
    $$\mathfrak{g}\Psi_t(R)=\{u \in\Psi(R)[[\tau]] : \hbox{ the monomial of 
    valuation }0\hbox{ equals } 0\}.$$
\end{Lemma}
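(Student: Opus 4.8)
The plan is to view $\Psi(R)[[t]]$ as a projective/product limit of the building blocks already shown to be Frölicher algebras, and then to recognize $G\Psi_t(R)$ as an instance of the group $1+\A$ appearing in Theorem \ref{regulardeformation}. First I would observe that $R$ being a Frölicher $\K$-algebra with smooth $\d$, Proposition \ref{Psi-F} already gives that $\Psi(R)$ is a Frölicher algebra; identifying a power series $u=\sum_{t\in T}u_t\tau^t$ with the family $(u_t)_{t\in T}\in\Psi(R)^T$ and using Proposition \ref{prod1} and Proposition \ref{prod2} (the product diffeology/Frölicher structure on an infinite product), $\Psi(R)[[t]]$ inherits a Frölicher vector space structure as a closed linear subspace of $\Psi(R)^T$. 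The multiplication $uv=w$ with $w_t=\sum_{r+s=t}u_rv_s$ is, for each fixed $t$, a finite sum of bilinear maps $(u_r,v_s)\mapsto u_rv_s$, each of which is smooth because multiplication in $\Psi(R)$ is smooth; hence by the universal property of the product Frölicher structure the whole multiplication map is smooth, and addition and scalar multiplication are smooth for the same reason. This establishes that $\Psi(R)[[t]]$ is a Frölicher algebra.

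Next I would set up the grading needed to invoke Theorem \ref{regulardeformation}. Grade $\Psi(R)[[t]]$ by the valuation $val_t$: let $A_n$ denote the space of homogeneous elements of valuation $n$ (elements $\sum_{|t|=n}u_t\tau^t$ with $u_t\in\Psi(R)$), so that $A_0$ is identified with $\Psi(R)$ itself and, for $n\geq 1$, $A_n$ is a finite product of copies of $\Psi(R)$ indexed by $\{t\in T:|t|=n\}$ — hence a complete locally convex Frölicher vector space which is regular (the integration map is taken componentwise; $\Psi(R)$ is regular by the componentwise argument used in Lemma \ref{Psi-1}, since $R$ is assumed regular). Inequality \eqref{val3}, $val_t(uv)\geq val_t(u)+val_t(v)$, shows exactly that $A_n\cdot A_m\subset A_{n+m}$, and this multiplication is smooth by the argument of the previous paragraph restricted to homogeneous components. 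Then $\mathfrak{g}\Psi_t(R)$ is precisely the algebra of formal series $\A=\bigl\{\sum_{n\geq 1}a_n:a_n\in A_n\bigr\}$ (the elements whose valuation-$0$ part vanishes), and $G\Psi_t(R)=1+\A$ (the elements whose valuation-$0$ monomial equals $1$). Theorem \ref{regulardeformation} then yields at once that $1+\A$ is a Frölicher Lie group with regular Frölicher Lie algebra $\A$, i.e. that $G\Psi_t(R)$ is a Frölicher Lie group with Lie algebra $\mathfrak{g}\Psi_t(R)$ as claimed.

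The one point requiring a little care — and the place I expect the main obstacle — is the bookkeeping showing that the decomposition by $val_t$ really does match the hypotheses of Theorem \ref{regulardeformation}: that $\bigoplus_{n\geq 1}A_n$ with its graded multiplication sits inside $\Psi(R)[[t]]$ so that the product Frölicher structure on $\prod_{n\geq 1}A_n$ coincides with the one induced from $\Psi(R)[[t]]$, and that the $A_n$ are genuinely complete and regular in the required sense. Since $T$ is countable and for each valuation level $n$ there are only finitely many multi-indices $t$ with $|t|=n$, each $A_n$ is a finite product $\Psi(R)^{k_n}$, so completeness and regularity are immediate from the corresponding properties of $\Psi(R)$; and the identification of Frölicher structures follows from the fact (Proposition \ref{froproj} and Proposition \ref{prod2}) that a countable reindexing of a product Frölicher structure is canonically the same structure. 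With this verified, the smooth bijectivity of $\exp:\mathfrak{g}\Psi_t(R)\to G\Psi_t(R)$ also comes for free from the last statement of Theorem \ref{regulardeformation}.
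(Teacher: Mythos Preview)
Your argument for the Fr\"olicher algebra structure on $\Psi(R)[[t]]$ is essentially the paper's own: view the series as an element of $\Psi(R)^T$, and observe that each coefficient of a product is a finite bilinear expression in the coefficients of the factors. That part is fine.

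For the Lie group structure, you take a different route from the paper, and there is a genuine gap. You invoke Theorem~\ref{regulardeformation} with $A_n$ the space of homogeneous elements of valuation $n$. That theorem, however, requires each $A_n$ to be a \emph{complete locally convex} Fr\"olicher vector space which is \emph{regular}. Neither hypothesis is available at this point in the paper: (i) regularity of $R$ is only assumed later, just before Lemma~\ref{Psi-vreg}, so your parenthetical ``since $R$ is assumed regular'' is premature; (ii) more seriously, the remark immediately following Proposition~\ref{Psi-F} explicitly warns that $\Psi(R)$ is \emph{not} complete for the product topology even when $R$ is, so your $A_n\cong\Psi(R)^{k_n}$ need not satisfy the completeness hypothesis either. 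Thus Theorem~\ref{regulardeformation} cannot be applied as stated.

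The paper avoids both issues by a direct argument: it writes down the recursion
\[
[u^{-1}]_0=1,\qquad [u^{-1}]_t=-\sum_{t't''=t,\,t'\neq 0}[u]_{t'}[u^{-1}]_{t''}
\]
for the coefficients of $u^{-1}$ (induction on $|t|$), which shows smoothness of inversion using only that $\Psi(R)$ is a Fr\"olicher algebra, and then checks that the bracket $[v_1,v_2]=v_1v_2-v_2v_1$ lands in $\mathfrak{g}\Psi_t(R)$ via the valuation inequality~\eqref{val1}. Your grading idea is morally the same recursion, just packaged inside Theorem~\ref{regulardeformation}; to rescue your approach you would either have to unpack that proof by hand (which is exactly the paper's computation) or add the extra hypotheses on $R$ that the paper deliberately postpones.
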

\begin{proof}
By Proposition \ref{Psi-F},  $\Psi(R)$ is a Fr\"olicher algebra. Since the space of power
series $\Psi(R)[[\tau]]$ can be identified with the vector space of functions $\Psi(R)^T
\subset \Psi(R) \times T$, we conclude that $\Psi(R)[[\tau]]$ admits a natural Fr\"olicher
structure because of the results reviewed in Subsections 2.4 and 2.5. We need to show that
the algebra operations are smooth with respect to this structure. 
Addition and scalar multiplication are those of the vector space $\Psi(R)^T,$ so that they 
are smooth. Now, if 
$$
u = \sum_{t \in T} a_t \tau^t \quad \mbox{ and } 
\quad v = \sum_{t \in T} b_{t} \tau^{t} \; ,
$$
then $u.v = w,$ in which $w$ is given by the formulae recalled above, when we introduced 
operations in the algebra of series $A_t$: fixing $t \in T,$ the 
coefficient in $\Psi(R)$ of $\tau^t$ is a (finite) linear combination of multiplications 
in $\Psi(R)$, namely, the products $a_{t'}\,b_{t''}$ with $t'+t''=t.$ Thus multiplication 
is smooth and so $\Psi(R)[[\tau]]$ is a Fr\"olicher algebra. 
    
Now let $u \in G\Psi_t(R).$ The coefficients of $u^{-1}$ are obtained 
from the equation $$ u u^{-1} = 1$$ which gives, by induction on the valuation in $t$ 
of each monomial:
    $$ [u^{-1}]_0 = ([u]_0)^{-1} = 1$$
    and $$ [u^{-1}]_t = - \sum_{t'+t''=t\, ,\, t'\neq 0} [u]_{t'}[u^{-1}]_{t''}$$
    which shows that inversion is smooth, since $\Psi(R)$ is a Fr\"olicher algebra.
    
Since $\Psi(R)[[\tau]]$ is a Fr\"olicher algebra, the Adjoint map
\begin{eqnarray*}
G\Psi_t(R) \times \mathfrak{g}\Psi_t(R) & \rightarrow & \mathfrak{g}\Psi_t(R) \\ 
(g,v) &\mapsto& gvg^{-1}
\end{eqnarray*}
differentiates to the classical Lie bracket
\begin{eqnarray*}
\mathfrak{g}\Psi_t(R) \times \mathfrak{g}\Psi_t(R) & \rightarrow & \Psi_t(R) \\ 
(v_1,v_2) &\mapsto& [v_1,v_2]= v_1v_2 - v_2v_1 \; ,
\end{eqnarray*}
see the discussion leading to (2.1). Now 
$$
val_t\left(v_1v_2 - v_2v_1\right) \geq inf( val_t (v_1),val_t(v_2)) \geq 1\; ,
$$ 
and hence $[v_1,v_2] \in \mathfrak{g}\Psi_t(R)$,
which ends the proof.
\end{proof}

\begin{Theorem} \label{Psihat}
    The following algebras are Fr\"olicher algebras:

\smallskip

\centerline{(1) $A_t\,$; \quad \quad (2) $\Psi(A_t)\, ;$ \quad \quad 
(3) $\widehat{\Psi}(A_t)\, ;$\quad \quad (4) $\widehat{\D}_{A_t}\, .$}
\end{Theorem}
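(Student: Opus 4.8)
The plan is to derive all four statements from Proposition~\ref{Psi-F} together with the product and subset constructions of Section~2, in the order listed, with (3) carrying essentially all of the content. For (1): as a vector space $A_t=R[[t]]$ is the infinite product $\prod_{t\in T}R$ of copies of the Fr\"olicher algebra $R$, hence a Fr\"olicher vector space by Propositions~\ref{prod1} and \ref{prod2}; its product $uv=w$ is given, for each fixed $t\in T$, by the \emph{finite} sum $w_t=\sum_{r+s=t}u_rv_s$ of products in $R$, so every component map $(u,v)\mapsto w_t$ is smooth and hence so is $(u,v)\mapsto w$ into the product. Thus $A_t$ is a Fr\"olicher algebra — this is the argument already used for $\Psi(R)[[t]]$ in Lemma~\ref{Psi-Rt} — and $\partial u=\sum_t(\partial u_t)\tau^t$ exhibits $\partial$ as a smooth derivation of $A_t$. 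Then (2) is immediate: apply Proposition~\ref{Psi-F} with the Fr\"olicher algebra $A_t$ and its smooth derivation $\partial$ in place of $A$, identifying $P\in\Psi(A_t)$ with its sequence of partial symbols to realize $\Psi(A_t)$ as a Fr\"olicher algebra inside $A_t^{\Z}$.

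For (3): as a set $\widehat{\Psi}(A_t)$ sits inside $\prod_{\alpha\in\Z}A_t$, and — grading by $val_t$ — inside $\prod_{p\geq 0}B_p$, where $B_p$ is the Fr\"olicher vector space of operators whose coefficients are $t$-homogeneous of valuation $p$; I would use the corresponding subset Fr\"olicher structure. Closure under addition and scalar multiplication is immediate from (\ref{val1}): if $P,Q$ satisfy the growth condition of (\ref{aldef}) with constants $(C,N)$ and $(C',N')$, then $P+Q$ does with $(\min(C,C'),\max(N,N'))$. That the multiplication $\circ$ of (\ref{alfa}) is well defined on $\widehat{\Psi}(A_t)$ and maps back into it is exactly the lemma, stated above, that $\widehat{\Psi}(A_t)$ is an algebra; its proof uses the growth condition to ensure that at each prescribed valuation the Leibnitz expansion of a product is a finite sum. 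To prove $\circ$ smooth it suffices, by the product characterization of smoothness, to show that for each $\alpha$ the $\partial^\alpha$-coefficient of $P\circ Q$ depends smoothly on $(P,Q)$; by (\ref{alfa}) that coefficient is a sum of terms $\binom{\beta}{k}\,a_\beta\,\partial^k b_{\alpha-\beta+k}$, each assembled from the smooth operations of $A_t$ (coordinate projections, $\partial$, multiplication), and all but finitely many vanish at any fixed valuation.

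The main obstacle is that this finiteness is only \emph{pointwise} in $(P,Q)$ — the bounds on $\beta,k$ involve the growth constants $(C,N)$, which vary over $\widehat{\Psi}(A_t)$ — and a pointwise-finite sum of smooth maps that is not locally finite need not be smooth. I would remove this by writing $\widehat{\Psi}(A_t)$ as the countable directed union $\bigcup_{C,N}\widehat{\Psi}_{C,N}(A_t)$ of the subspaces cut out by a \emph{fixed} choice of constants — each a genuine Fr\"olicher vector space (a product of copies of $A_t$ and of high-valuation subspaces of $A_t$) on which the computation above gives smooth multiplication with \emph{uniform} finite expansions — and equipping $\widehat{\Psi}(A_t)$ with the induced inductive-limit structure; the technical heart then becomes showing that plots of this inductive limit locally factor through a single $\widehat{\Psi}_{C,N}(A_t)$ (reducing to curves by Boman's theorem), whence smoothness of $\circ$ follows piecewise. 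This is the valuation-filtration analogue of the graded construction of Theorem~\ref{regulardeformation}, and one checks in passing that on bounded pieces it agrees with the subset structure used in (1)--(2), so that all four items refer to compatible Fr\"olicher structures. Finally (4) is a corollary: by the same lemma $\widehat{\D}_{A_t}$ is a subalgebra of $\widehat{\Psi}(A_t)$, closed under addition and scalar multiplication in the subset structure, with multiplication the restriction of the now-smooth multiplication of $\widehat{\Psi}(A_t)$; hence $\widehat{\D}_{A_t}$ is a Fr\"olicher algebra, and all the real work sits in step (3).
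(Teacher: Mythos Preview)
Your treatments of (1), (2), and (4) match the paper's essentially verbatim. The divergence is in (3), and there your route is both harder and incomplete, while the paper's is a one-liner you have all the ingredients for.

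The paper does not work inside $\prod_{\alpha\in\Z}A_t$ at all. Instead it \emph{reindexes}: it observes that $\widehat{\Psi}(A_t)$ embeds as a subalgebra of $\Psi(R)[[t]]$, the very algebra you already handled in Lemma~\ref{Psi-Rt}. The growth condition in (\ref{aldef}) is precisely what guarantees that, for each fixed $t$-monomial, only finitely many $\partial$-powers contribute, so the $t$-coefficient lies in $\Psi(R)$. Since $\Psi(R)[[t]]$ is a Fr\"olicher algebra by Lemma~\ref{Psi-Rt}, and $\widehat{\Psi}(A_t)$ is a subalgebra, the subset Fr\"olicher structure makes it a Fr\"olicher algebra immediately. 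The obstacle you identified --- that the Leibnitz expansion is only \emph{pointwise} finite in the $\partial$-grading, with bounds depending on $(C,N)$ --- simply disappears in the $t$-grading, where the finiteness is uniform and intrinsic to the monomial.

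Your proposed fix, writing $\widehat{\Psi}(A_t)=\bigcup_{C,N}\widehat{\Psi}_{C,N}(A_t)$ and passing to an inductive-limit Fr\"olicher structure, has a real gap at exactly the point you flag as the ``technical heart'': there is no general reason for smooth curves into a directed union of Fr\"olicher spaces to factor locally through a single piece (think of $\bigcup_n\R^n$ inside $\R^{\N}$), and you give no argument specific to this filtration. Even if you could push that through, you would then owe a compatibility check between the inductive-limit structure and the subset structures used elsewhere. The paper's reorganization by $t$-valuation rather than $\partial$-order sidesteps all of this.
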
 
\begin{proof}
~
\begin{enumerate}
\item We argue as in the previous lemma. Let us look at monomials of the 
series $u\in A_t.$ Addition and scalar multiplication are those induced by the vector 
space $R^T,$ so that they are smooth. If $u.v = w,$ fixing $t \in T,$ the $R-$value of the 
$t-$monomial is a (finite) linear combination of multiplications the $R-$values of the 
$t'-$ and $t''-$monomials of $u$ and $v$ respectively, where $t'+t''=t.$ Thus 
multiplication is smooth.
\item It follows from Proposition \ref{Psi-F} and the previous item.
\item We recall that the notation $\tau^t$ makes sense because of the conventions
      at the beginning of this subsection. We follow the proof of Lemma \ref{Psi-Rt}. 
      
Let $P = \sum_{n \in \Z} a_n \partial^n \in \widehat{\Psi}(A_t)$. Then formally we have
\begin{equation} \label{id1}
\sum_{n \in \Z} a_n \partial^n =  \sum_{n \in \Z} \left( \sum_{t \in T}
u_{n t}\tau^t \right) \partial^n = 
\sum_{t \in T} \left( \sum_{n \in \Z} u_{n t} \partial^n \right) \tau^t
\end{equation} 
for 
$u_{nt} \in R$, and so we can identify $\widehat{\Psi}(A_t)$with a subalgebra of 
$\Psi(R)[[\tau]]^\Z.$ By Proposition \ref{Psi-F},  $\Psi(R)$ is a Fr\"olicher algebra, 
by Lemma \ref{Psi-Rt}, $\Psi(R)[[\tau]]$ is a Fr\"olicher algebra. Thus, by Proposition
2.15, $\widehat{\Psi}(A_t)$ is a Fr\"olicher algebra. 
     \item $\widehat{\D}_{A_t}$ is a subalgebra of $\widehat{\Psi}(A_t)$ and therefore 
     it is a Fr\"olicher algebra.
\end{enumerate}
\end{proof}

\noindent We obtain the following result at the group level:

\begin{Theorem} \label{Ghat} The following assertions hold:
~
    \begin{enumerate}
        \item   $\widehat{\D}_{A_t}^{\times}$ is a Fr\"olicher group.
        \item $G_R = 1 + \Psi^{-1}(R)$ and $G_{A_t}=1+\Psi^{-1}(A_t)$ are 
        Fr\"olicher Lie groups.
        \item $ G(\widehat{\Psi}(A_t))$ is a Fr\"olicher group.
    \end{enumerate}
\end{Theorem}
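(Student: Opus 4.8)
The plan is to establish each of the three group structures by exhibiting the relevant set as a group whose multiplication and inversion are smooth for the Fr\"olicher structures already produced in Theorem \ref{Psihat}, using the subset/push-forward machinery from Section 2 together with Theorems \ref{regulardeformation} and \ref{exactsequence}. For item (2), the groups $G_R = 1+\Psi^{-1}(R)$ and $G_{A_t}=1+\Psi^{-1}(A_t)$ fit directly into the hypotheses of Theorem \ref{regulardeformation} (equivalently Lemma \ref{Psi-1}): $\Psi^{-1}(R)$ is graded by order via the pieces $A_n=\{a_{-n}\partial^{-n}\}$, the multiplication is graded and smooth because $\Psi(R)$ (resp.\ $\Psi(A_t)$) is a Fr\"olicher algebra by Theorem \ref{Psihat}, and each graded piece is a regular Fr\"olicher vector space (isomorphic to $R$, resp.\ $A_t$, which we have assumed regular). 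Hence $1+\Psi^{-1}(R)$ and $1+\Psi^{-1}(A_t)$ are Fr\"olicher Lie groups with regular Lie algebras $\Psi^{-1}(R)$ and $\Psi^{-1}(A_t)$; this is the easy part.

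For item (3), I would identify $G(\widehat{\Psi}(A_t))$ as a subset of the Fr\"olicher algebra $\widehat{\Psi}(A_t)$ and equip it with the subset Fr\"olicher structure. The point to check is that it is closed under multiplication and inversion. Closure under multiplication follows from the fact that $\widehat{\Psi}(A_t)$ is an algebra together with the observation that the reduction map $P\mapsto P|_{t=0}$ is an algebra homomorphism onto $\Psi(A)$, so $P|_{t=0},Q|_{t=0}\in G_{A}$ implies $(PQ)|_{t=0}\in G_A$. For inversion one argues as in the proof of Lemma \ref{Psi-Rt}: writing $P = 1 + (\text{lower order / positive valuation in }t)$, the coefficients of $P^{-1}$ are computed recursively by a formula of the shape $[P^{-1}]_{n} = -\,(\text{leading term})^{-1}\sum [P]_{k}[P^{-1}]_{m}$ with $k>0$; each term is a finite composition of the smooth algebra operations and of inversion in $A_t^{*}$ (which is smooth since $A_t$ is a Fr\"olicher algebra and the relevant leading coefficients lie in $A_t^{*}$ by construction of $G(\widehat{\Psi}(A_t))$). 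Smoothness of multiplication is inherited from $\widehat{\Psi}(A_t)$, and smoothness of inversion follows because each coefficient of $P^{-1}$ depends smoothly on finitely many coefficients of $P$; the valuation estimate (\ref{val3}) guarantees that the recursion stays inside $\widehat{\Psi}(A_t)$.

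Item (1), the group $\widehat{\D}_{A_t}^{\times} = \{P\in\widehat{\D}_{A}: P|_{t=0}=1\}$, is handled the same way but inside the Fr\"olicher algebra $\widehat{\D}_{A_t}$: multiplicative closure is immediate since $P|_{t=0}=Q|_{t=0}=1$ gives $(PQ)|_{t=0}=1$, and the recursive inversion formula is now even simpler because the order-zero coefficient is already $1$, so no inversion in $A_t^{*}$ is needed and the recursion terminates at each valuation level; smoothness of both operations is again inherited coefficientwise from $\widehat{\D}_{A_t}$. The main obstacle, in all three cases, is bookkeeping rather than conceptual: one must verify carefully that the recursively defined coefficients of the inverse genuinely satisfy the growth/valuation condition defining $\widehat{\Psi}(A_t)$ (resp.\ $\widehat{\D}_{A_t}$), i.e.\ that inversion does not escape the space --- this is where the constants $(C,N)$ in (\ref{aldef}) and the inequalities (\ref{val1})--(\ref{val3}) do the real work, and it is essentially the smooth counterpart of the algebraic closure statements already established by Mulase and recalled in the preceding lemmas. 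Once that is in place, smoothness is automatic because every coefficient of a product or an inverse is a fixed polynomial expression in finitely many coefficients of the inputs together with one inversion in $A_t^{*}$, all of which are smooth.
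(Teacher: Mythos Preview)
Your proposal is essentially correct, but the paper organizes the argument differently and thereby avoids most of the bookkeeping you flag as the main obstacle.

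For (1), rather than running a coefficientwise recursion inside $\widehat{\D}_{A_t}$ and checking the valuation bounds by hand, the paper simply observes that $\widehat{\D}_{A_t}^{\times}$ is a subgroup of the group $G\Psi_t(R)$ already constructed in Lemma \ref{Psi-Rt}, where smoothness of multiplication and inversion were proved once and for all. For (2), the paper reaches the same conclusion as you, though it curiously invokes an evaluation map $t_i\mapsto\partial^{-i}$ to transport the Fr\"olicher structure from Lemma \ref{Psi-Rt}; your appeal to Lemma \ref{Psi-1} / Theorem \ref{regulardeformation} is cleaner and in fact is exactly how the paper proved Lemma \ref{Psi-1} in the first place. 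For (3), instead of inverting directly in $\widehat{\Psi}(A_t)$ and tracking the $(C,N)$-condition, the paper uses the short exact sequence
\[
0 \longrightarrow \mathrm{Ker}\bigl((\cdot)|_{t=0}\bigr) \longrightarrow G(\widehat{\Psi}(A_t)) \longrightarrow G_R \longrightarrow 0
\]
together with the obvious smooth section $G_R\hookrightarrow G(\widehat{\Psi}(A_t))$, so that $G(\widehat{\Psi}(A_t))$ is a semi-direct product of $G_R$ (handled by (2)) and the kernel (which is a subgroup of $G\Psi_t(R)$, handled by Lemma \ref{Psi-Rt}). This factorization sidesteps the valuation bookkeeping entirely.

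One small inaccuracy in your write-up: in item (3) the leading piece to invert is not an element of $A_t^{*}$ but the operator $P|_{t=0}\in G_R\subset\Psi(R)$; the smooth inversion you need is the one provided by (2), not by the units of $A_t$. This does not affect the correctness of your outline.
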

\begin{proof}

~

\begin{enumerate}
\item We proceed as in Theorem \ref{Psihat}, see Equation (\ref{id1}). 
The group $\widehat{\D}_{A_t}^{\times}$ can be identified with a subgroup of 
$G\Psi_t(R)$, on which multiplication and inversion are smooth.
\item These facts follow from Lemma \ref{Psi-1} and Theorem \ref{Psihat}.
\item Recall that we denote by $P\vert_{\tau=0}$ the projection of the coefficients of $P$ 
on their constant coefficient, see Definition \ref{tez}. Of course, this coincides with 
``evaluating at $\tau_1=\tau_2=...=0$", which explains the notation. Since 
$$P\vert_{\tau=0}(G(\widehat{\Psi}(A_t)))= G_R = 1 + \Psi^{-1}(R)$$ 
we have the following exact sequence:
$$ 0 \rightarrow Ker\left(P\vert_{\tau=0}\right)\rightarrow    
G(\widehat{\Psi}(A_t)) \rightarrow G_R \rightarrow 0$$
and there is a global section $G_R\rightarrow G(\widehat{\Psi}(A_t)),$ given by the 
canonical inclusion componentwise, which is smooth. Then adapting the classical 
(algebraic) construction of the semi-direct product 
$$G(\widehat{\Psi}(A_t)) =  Ker\left(P\vert_{\tau=0}\right) \rtimes G_R$$ 
we remark that all the necessary operations to build the multiplication and the inversion 
group structure of $G(\widehat{\Psi}(A_t))$ from $Ker\left(P\vert_{\tau=0}\right)$ and 
$G_R$ are smooth for the subset Fr\"olicher structures inherited from 
$\widehat{\Psi}(A_t).$ 
\end{enumerate}
\end{proof}

We are ready to state and prove an enriched version of Mulase's
algebraic factorization theorem, see \cite{M1,M3}, for the
Fr\"olicher group $G(\widehat{\Psi}(A_t))\,$:

\begin{Theorem} \label{mu2}

~

\begin{enumerate}
    \item   For any $U \in G(\widehat{\Psi}(A_t))$ there exist unique $S \in
    G_{A_t}$ and $Y \in \widehat{\D}_{A}^{\times}$ such that
    \[
    U = S^{-1}\,Y \; .
    \]
    In other words, there exists a unique global factorization of the 
    Fr\"olicher Lie group $G(\widehat{\Psi}(A_t))$ as a group defined by matched pairs,
    \[
    G(\widehat{\Psi}(A_t)) = G_{A_t} \, \widehat{\D}_{A_t}^\times \; .
    \]
    \item The factorization $U \mapsto (S,Y)$ is (Fr\"olicher)
    smooth. Hence, it is smooth in the sense of Gateaux if $R$ is
    a Fr\'echet ring.
\end{enumerate}
\end{Theorem}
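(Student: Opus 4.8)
The plan is to establish existence and uniqueness of the factorization $U = S^{-1}Y$ by an induction on the valuation in $t$, mirroring Mulase's algebraic argument but keeping track of smoothness at every stage, and then to upgrade the conclusion to Fr\"olicher smoothness using Theorem \ref{Psihat}, Theorem \ref{Ghat}, and the fact that smoothness of maps between Fr\"olicher spaces is tested on plots (Proposition \ref{fd}). First I would reduce to the statement over $A_t = R[[t]]$: write $U = 1 + \sum_{p \geq 1} U_p$ where $U_p$ is the homogeneous part of valuation $p$ (so $U|_{t=0} = U_0 \in G_{A_t}$ means $U_0 \in 1 + \mathcal{I}_R$), and similarly seek $S = 1 + \sum_{p\geq 1} S_p \in G_{A_t}$ with $S_p \in \mathcal{I}_{A_t}$ and $Y = 1 + \sum_{p \geq 1} Y_p \in \widehat{\D}_{A_t}^\times$ with $Y|_{t=0}=1$. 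The equation $SU = Y$, i.e. $S U = Y$, expanded by valuation reads, at level $p$,
\[
S_p + \sum_{i+j = p,\, i,j \geq 1} S_i U_j + U_p = Y_p \; ,
\]
where the sum involves only $S_i$ with $i < p$. Thus at each step we must split a known element $Z_p := U_p + \sum_{i<p} S_i U_j \in \widehat\Psi$ (of valuation $p$) as $Z_p = Y_p - S_p$ with $Y_p \in \widehat{\D}$ (nonnegative order part) and $S_p \in \mathcal I$ (negative order part): this is exactly the canonical direct-sum decomposition $\widehat{\Psi}(A_t) = \widehat{\D}_{A_t} \oplus \mathcal{I}_{A_t}$ at a fixed valuation, which is where the order-$\geq 0$ / order-$<0$ splitting does the work, and it is manifestly unique. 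Hence $S_p, Y_p$ are uniquely determined, which proves (1).

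For the smoothness statement (2), I would argue that the assignment $U \mapsto (S_p, Y_p)$ built above is, at each valuation level $p$, a \emph{polynomial} expression in the partial symbols of $U$ together with the previously constructed $S_i, Y_i$ (only finitely many operations: the multiplication on $\widehat\Psi(A_t)$, which is smooth by Theorem \ref{Psihat}(3), and the coordinate projections onto nonnegative/negative order parts, which are linear and smooth). By induction every $S_p$ and $Y_p$ depends smoothly on $U$; since the Fr\"olicher structures on $G_{A_t}$ and $\widehat{\D}_{A_t}^\times$ are the subset structures inherited from $\widehat\Psi(A_t)$ and are detected valuation-by-valuation (the infinite-product Fr\"olicher structure), the full maps $U \mapsto S$ and $U \mapsto Y$ are smooth — concretely, for any plot $c$ of $G(\widehat{\Psi}(A_t))$, each component $[S\circ c]_t$ is a composite of smooth maps, so $S \circ c$ is a plot, and likewise for $Y$. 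The final sentence about Gateaux smoothness over a Fr\'echet ring is then immediate from Remark \ref{comp}.

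The main obstacle I anticipate is not the algebra of the induction — that is Mulase's argument, and the recursion is formally clean — but rather the bookkeeping needed to confirm that the resulting $S$ lies in $G_{A_t}$ and, more delicately, that $Y$ actually lies in $\widehat{\D}_{A_t}^\times$, i.e. that the infinite-order growth condition $val_t(a_\alpha) > C\alpha - N$ defining $\widehat{\D}_{A_t} \subset \widehat{\Psi}(A_t)$ is preserved under the recursion. One must check that the constants $(C,N)$ controlling $U$ propagate (perhaps with a controlled loss) to $S$ and $Y$; this is the genuinely analytic point in Mulase's original proof and the one place where the specific definition \eqref{aldef} of $\widehat{\Psi}(A_t)$ is essential rather than formal. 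I would handle it by a uniform estimate: if $val_t([U]_\alpha) > C\alpha - N$ for $\alpha \gg 0$, then because the multiplication on $\widehat\Psi$ satisfies $val_t$-superadditivity (inequality \eqref{val3}) and order-additivity (Lemma \ref{lieprop}), the partial symbols of each $Z_p$ — and hence of $S_p$ and $Y_p$ — inherit a bound of the same shape with the \emph{same} slope $C$ and a possibly larger $N$, so $S \in \widehat\Psi$ (indeed in $\mathcal I_{A_t}$, since its order is $< 0$ at every valuation) and $Y \in \widehat{\D}_{A_t}$. Once membership is secured, smoothness follows from the structural theorems already in place, and the proof concludes.
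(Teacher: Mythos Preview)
Your approach is sound in outline but has one genuine slip at the base of the induction, and it organises the recursion differently from the paper.

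The slip: you write $U = 1 + \sum_{p\ge 1}U_p$ and $S = 1 + \sum_{p\ge 1}S_p$, forcing $U_0 = S_0 = 1$. But the definition (\ref{345}) of $G(\widehat{\Psi}(A_t))$ only gives $U|_{t=0}\in G_R = 1+\Psi^{-1}(R)$, not $=1$ (your own parenthetical acknowledges this, inconsistently with the displayed expansion). At valuation $0$ the equation $S_0 U_0 = Y_0 = 1$ forces $S_0 = U_0^{-1}$, and your recursion at level $p$ is then missing the factor $U_0$: the correct form is $S_pU_0 + S_0 U_p + \sum_{1\le i\le p-1}S_iU_{p-i} = Y_p$, which solves as $Y_p = (W_p)_+$ and $S_p = -(W_p)_- \,U_0^{-1}$ with $W_p$ the known terms. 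This is a local repair (inversion in $G_R$ is smooth by Theorem~\ref{Ghat}(2)), but as written your base step is wrong and the subsequent formula is off.

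The paper's route is different: it does \emph{not} induct on the $t$-valuation first. Instead it writes $(SU)_-=0$ as an infinite triangular linear system for the $\partial$-coefficients $s_\mu$ ($\mu<0$) of $S$ and performs the outer induction on $|\mu|$; only \emph{within} each fixed $\mu$ does it run a secondary induction on the $t$-monomials, exhibiting each $[s_\mu]_t$ as an explicit polynomial in finitely many coefficients of $U$ and already-determined $s_\nu$'s, hence visibly smooth. Your valuation-first scheme, with the single projection $\widehat\Psi = \widehat{\D}\oplus\mathcal{I}$ at each level, is a legitimate and more conceptual alternative that reaches the same conclusion. Note also that the paper simply defers part~(1) --- existence, uniqueness, and in particular the growth estimate ensuring $Y\in\widehat{\D}_{A_t}^\times$ --- to \cite{M1}, and only carries out the smoothness verification~(2); your final paragraph on the $val_t$-bound is therefore addressing a point the paper itself does not re-prove.
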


\noindent In order to prove this theorem, we need to introduce some notation: 
we let $u \mapsto u_-$ be the linear projection on $\Psi^{-1}(A_t)$
with respect to the $\d-$components of $u \in
\widehat{\Psi}(A_t)$, and we set $u_+ = u-u_-\, .$

\begin{proof}
 Part (1) is already proved in \cite{M1,M3} (see also the later exposition 
 \cite{ER2013}), but we have to examine carefully
 the original proof in order to check that the map $U \mapsto S$ is smooth. 
 Once we have this first result,
 we finish the proof of the theorem by remarking that $Y = SU.$

    Let us follow the proof of \cite[Lemma 3]{M1}. We solve the equation 
    \begin{equation}\label{st1}
    (SU)_- = 0
    \end{equation}
    for a given $$U = \sum_{\nu \in \Z}u_\nu \d^\nu$$
    and unknown $$S = \sum_{\mu \in \Z-\N} s_\mu \d^\mu.$$
    Equation (\ref{st1}) turns into the infinite system
$$
\sum_{\nu \in \Z-\N}
\left(\sum_{i\in \N} 
\left(\begin{array}{c}\nu \\ i \end{array} \right)\d^i u_{\mu + i - \nu} \right) s_\nu 
               = -u_\mu \quad \hbox{ for } \mu \in \Z-\N\; ,
$$
with the convention 
$$
\left(\begin{array}{c}\nu \\ i \end{array} \right) = 
(-1)^{|\nu|} \left(\begin{array}{c}|\nu|\\ i \end{array} \right) \hbox{ if } \nu < 0\; .
$$
We solve this infinite system by induction on $\mu.$ For $\mu = -1,$ let us study the 
$t-$monomials.  
$$\left[\sum_{\nu \in \Z-\N}\left(\sum_{i\in \N} \left(\begin{array}{c}\nu \\ i \end{array} \right)\d^i u_{-1 + i - \nu} \right) s_\nu\right]_0 
= \left[\left(\sum_{i\in \N} \left(\begin{array}{c}-1 \\ i \end{array} \right)\d^i u_{ i } \right) s_{-1}\right]_0 
= -[s_{-1}]_0$$
thus $[s_{-1}]_0 = [u_{-1}]_{0}.$
Then, at the monomial of valuation 1, which is only the $t_1-$monomial,
    \begin{eqnarray*}
    \left[\sum_{\nu \in \Z-\N}\left(\sum_{i\in \N} \left(\begin{array}{c}\nu \\ i \end{array} \right)\d^i u_{-1 + i - \nu} \right) s_\nu\right]_{t_1} & = & \left[\left(\sum_{i\in \N} \left(\begin{array}{c}-1 \\ i \end{array} \right)\d^i u_{ i } \right)\right]_{t_1} \left[s_{-1}\right]_0 +\\&& \left[\left(\sum_{i\in \N} \left(\begin{array}{c}-1 \\ i \end{array} \right)\d^i u_{ i } \right)\right]_{0} \left[s_{-1}\right]_{t_1} \\
    & = & \left[\left(\sum_{i\in \N} \left(\begin{array}{c}-1 \\ i \end{array} \right)\d^i u_{ i } \right)\right]_{t_1} \left[u_{-1}\right]_0 - [s_{-1}]_{t_1}\end{eqnarray*}
we get 
$$[s_{-1}]_{t_1} = [u_{-1}]_{t_1} + \left[\left(\sum_{i\in \N} \left(\begin{array}{c}-1 \\ i \end{array} \right)\d^i u_{ i } \right)\right]_{t_1} \left[u_{-1}\right]_0\; , $$
which shows that the maps $u\mapsto [s_{-1}]_0$ and $u \mapsto [s_{-1}]_{t_1}$ are smooth. 
The same calculations lead to any $t-$monomial for $s_{-1},$ and they show that 
$u \mapsto s_{-1}$ is smooth.
Let us now assume that $s_{-1},...s_{-k}$ are determined and that they are smooth with 
respect to $U,$ and let us determine $s_{-k-1}.$ Then we write, for $\mu = -k-1,$
 $$\sum_{\nu \leq -k-1}\left(\sum_{i\in \N} \left(\begin{array}{c}\nu \\ i \end{array} \right)\d^i u_{-k-1 + i - \nu} \right) s_\nu = -u_{-k-1} - \sum_{\nu =-k}^{-1}\left(\sum_{i\in \N} \left(\begin{array}{c}\nu \\ i \end{array} \right)\d^i u_{-k-1 + i - \nu} \right) s_\nu $$
 noticing that the right side is fully determined. Then we carry out induction with 
 respect to the valuation on $t,$ for which we sketch the first two steps as we did 
 before:

 - for the $t=0$ term:
$$[s_{-k-1}]_0 = (-1)^{-k-1}\left[u_{-k-1} + \sum_{\nu =-k}^{-1}\left(\sum_{i\in \N} \left(\begin{array}{c}\nu \\ i \end{array} \right)\d^i u_{-k-1 + i - \nu} \right) s_\nu\right]_0,$$

 - for the $t_1$-monomial:
 \begin{eqnarray*}
[s_{-k-1}]_{t_1} &=& (-1)^{-k-1}\left(\left[u_{-k-1} + \sum_{\nu =-k}^{-1}\left(\sum_{i\in \N} \left(\begin{array}{c}\nu \\ i \end{array} \right)\d^i u_{-k-1 + i - \nu} \right) s_\nu\right]_{t_1} + \right. \\ && \left. \left[\left(\sum_{i\in \N} \left(\begin{array}{c}-1 \\ i \end{array} \right)\d^i u_{ i } \right)\right]_{t_1} \left[s_{-k-1}\right]_0\right) \end{eqnarray*}
Thus, we have a construction of $S$ made of smooth operations, so that the map 
$$U \mapsto S$$ is smooth. Hence the map $$U \mapsto Y = SU $$ is also smooth. 
This completes the proof.
\end{proof}

\subsection{Differentiable structures and regular subgroups of $G(\Psi(A_t))$}

Let us stress that we have established smoothness of the
operations on the groups considered by Mulase, but that we have
not proved the existence of an exponential map. This is
principally because it is difficult to show the existence of the
exponential map on $\widehat{\Psi}(A_t);$ also, because it is  
difficult to determine the tangent space
$T_1G(\widehat{\Psi}(A_t))$, and even more, to
differentiate an hypothetical adjoint action of
$G(\widehat{\Psi}(A_t))$ on $T_1G(\widehat{\Psi}(A_t))$. Most of
the difficulties come from the very general definition of 
$\widehat{\Psi}(A_t)$. This is why we construct a Fr\"olicher subalgebra
$\overline{\Psi}(A_t)\subset\widehat{\Psi}(A_t),$ motivated by the
next Lemma. We denote by $exp_t$ the exponential defined on formal 
series of elements of $\Psi(A).$ 
 
\begin{Lemma} \label{expt}
Let $(P_n)_{n \in \mathbb{N}^*} \in \Psi(A)^{\mathbb{N}^*}$ such that $order(P_n)\leq n.$  
Then
$$exp_t \left(\sum_{i \in \N^*} \tau_i P_i \right) \in\widehat{\Psi}(A)\; ,$$  
and the map
$(P_n)_{n \in \mathbb{N}^*} \mapsto exp_t(\sum_{i \in \N^*} \tau_i P_i)$ is smooth with
respect to the Fr\"olicher structures of $\Psi(A)^{\mathbb{N}^*}$ and 
$\widehat{\Psi}(A)$. Moreover, if we write
$$
exp_t \left(\sum_{i\in\N^*} \tau_i P_i \right)=\sum_{\alpha \in \Z} a_\alpha \d^\alpha\; ,
$$ 
then we have
$$\alpha \leq val_t a_\alpha$$
for all $\alpha \in \Z$.
\end{Lemma}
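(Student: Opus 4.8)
The statement has three parts: (i) $\exp_t(\sum t_i P_i)$ actually lies in $\widehat{\Psi}(A)$; (ii) the assignment $(P_n)_n \mapsto \exp_t(\sum t_i P_i)$ is Fr\"olicher smooth; (iii) the coefficient $a_\alpha$ of $\d^\alpha$ in $\exp_t(\sum t_i P_i)$ satisfies $val_t(a_\alpha)\geq\alpha$. I would treat (iii) first, since the valuation estimate is the load-bearing fact from which membership in $\widehat{\Psi}(A)$ follows immediately. Write $Q = \sum_{i\in\N^*} t_i P_i$, so $\exp_t(Q) = \sum_{k\geq 0}\frac{1}{k!}Q^k$, the sum being taken in the algebra $\Psi(A)[[t]]$ (the $t$-adic completion makes this legitimate: each fixed $t$-monomial receives contributions from only finitely many $k$, because $Q$ has zero constant term in $t$, so $Q^k$ has $t$-valuation $\geq k$).

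For (iii), the key bookkeeping is to track simultaneously the order in $\d$ and the valuation in $t$. A typical monomial contributing to $\exp_t(Q)$ is $\frac{1}{k!}$ times a $k$-fold $\circ$-product $t_{i_1}P_{i_1}\circ\cdots\circ t_{i_k}P_{i_k} = t_{i_1}\cdots t_{i_k}\,(P_{i_1}\circ\cdots\circ P_{i_k})$. By Lemma \ref{lieprop}(1), $\operatorname{ord}(P_{i_1}\circ\cdots\circ P_{i_k})\leq i_1+\cdots+i_k$, while the $t$-valuation of the scalar $t_{i_1}\cdots t_{i_k}$ is exactly $i_1+\cdots+i_k$ by the convention $\operatorname{ord}(\tau_i)=i$, i.e. $|t|=\sum i\,n_i$. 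Hence any term of $\d$-order $\alpha$ arising this way carries a $t$-factor of valuation $\geq\alpha$; summing such terms can only raise the valuation (by \eqref{val1}), so $val_t(a_\alpha)\geq\alpha$. This inequality, applied with $C=1$ and $N=0$ in the defining condition \eqref{aldef}, shows $\exp_t(Q)\in\widehat{\Psi}(A)$, giving (i); in fact it lands in the sub-object where the homogeneous parts are controlled, which is precisely what motivates $\overline{\Psi}(A_t)$ in the sequel.

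For (ii), smoothness: by Proposition \ref{Psi-F} and Lemma \ref{Psi-Rt}, $\Psi(R)[[t]] = \Psi(A)[[t]]$ is a Fr\"olicher algebra, so multiplication, and hence each finite partial sum $\sum_{k\leq M}\frac{1}{k!}Q^k$, depends smoothly on the tuple $(P_n)_n$ (the map $(P_n)_n\mapsto Q=\sum t_iP_i$ being manifestly smooth, as it is linear and continuous on the product Fr\"olicher structure). The target $\widehat{\Psi}(A_t)$ carries the subset Fr\"olicher structure from $\Psi(A)[[t]]$, and its generating functions are the coefficient-extraction maps $u\mapsto [u]_{(t,\alpha)}$ onto the $A$-coefficient of $\tau^t\d^\alpha$. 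For each fixed pair $(t,\alpha)$, the composite $(P_n)_n\mapsto [\exp_t(Q)]_{(t,\alpha)}$ equals $[\sum_{k\leq |t|}\frac{1}{k!}Q^k]_{(t,\alpha)}$ — a \emph{finite} algebraic expression in finitely many coefficients of finitely many $P_i$ (only $P_1,\dots,P_{|t|}$ can appear, and of those only finitely many $\d$-components, since $\operatorname{ord}P_i\leq i$ bounds things below and $\alpha$ pins the $\d$-degree). A finite composition of smooth (polynomial) operations is smooth, so each generating function pulls back to a smooth map, and therefore the whole assignment is Fr\"olicher smooth.

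\textbf{Main obstacle.} The only genuinely delicate point is making the interchange "$\exp_t$ commutes with everything termwise" rigorous — i.e. justifying that for a fixed $t$-monomial and a fixed power $\d^\alpha$ the coefficient is a \emph{finite} sum. This is exactly where the hypothesis $\operatorname{ord}(P_i)\leq i$ is used twice over: once (via Lemma \ref{lieprop}) to bound the $\d$-order of products, and once (via $|t|=\sum i\,n_i$) to force the $t$-valuation up so that only $k\leq|t|$ contribute. Once that finiteness is in place, both the valuation estimate and the smoothness reduce to Lemma \ref{lieprop}(1) and the Fr\"olicher algebra structure of $\Psi(A)[[t]]$ respectively, and there is nothing further to check.
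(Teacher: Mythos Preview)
Your proposal is correct and follows essentially the same line as the paper's proof: expand $\exp_t(Q)$ as a formal series in the $t_i$, observe that each $t$-monomial $t_{i_1}\cdots t_{i_k}$ carries a $\Psi(A)$-factor of $\d$-order at most $i_1+\cdots+i_k=|t|$, deduce $val_t(a_\alpha)\geq\alpha$ via \eqref{val1}--\eqref{val3}, and conclude smoothness from the fact that each $(t,\alpha)$-coefficient is a finite polynomial in finitely many coefficients of finitely many $P_i$. Your write-up is in fact more explicit than the paper's on the finiteness mechanism and on the smoothness argument; the only cosmetic point is that Lemma~\ref{lieprop}(1) states an \emph{equality} for the order of a product under a non-zero-divisor hypothesis, whereas you only need the elementary inequality $\operatorname{ord}(PQ)\leq\operatorname{ord}(P)+\operatorname{ord}(Q)$, which holds unconditionally.
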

\begin{proof}
Each monomial of  $exp_t(\sum_{i \in \N^*} t_i P_i)$ is, up to a scalar 
multiplication, a finite sum of finite products of terms of the sequence $(P_n).$ Since 
multiplication is smooth on $\Psi(A),$ we get that the map  
$(P_n)_{n \in \mathbb{N}^*} \mapsto  exp_t(\sum_{i \in \N^*} t_i P_i)$ is smooth. 
Thus, we only need to check that 
$\forall \alpha \in \Z, \quad \alpha \leq val_t a_\alpha,$ which will ensure us that
$exp_t(\sum_{i \in \N^*} \tau_i P_i) \in \widehat{\Psi}(A).$ Setting $P_f(\N)$ the set of 
finite subsets of $\N,$ we get that for any $I \in P_f(\N)$ and any multi-index 
$(\alpha_i)_{i \in I},$ the $\left(\prod_{i \in I} t_i^{\alpha_i}\right)-$monomial 
$p_{(t_i),(\alpha_i)}$ is of ($\d-$maximal) order 
$$\alpha = \sum_{i \in I} \alpha_i\; ,$$ 
by property (\ref{val3}). Then, from the classical formulas that compute $a_\alpha,$ we 
obtain  
$$
val_t a_\alpha  \geq \sum_{I \in P_f(\N), \sum_I \alpha_i=\alpha} \alpha_i = \alpha\; ,
$$ 
using (\ref{val1}) and (\ref{val2}).
\end{proof}

\begin{Definition}\label{barpsi}
The regular space of formal pseudo-differential and differential operators of infinite 
order are, respectively, $\overline{\Psi}(A_t)$ and $\overline{\D}_{A_t}$, in which
\begin{equation} \label{aldef-reg}
\overline{\Psi}(A_t) = \left\{ \sum_{\alpha \in {\mathbb{Z}}}
    a_{\alpha}\,\d^{\alpha} \in \widehat{\Psi}(A_t) \; : 
    \, val_t(a_\alpha)\geq \alpha \right\} 
\end{equation}
    and
    \begin{equation}
    \overline{\D}_{A_t} = \left\{ P= \sum_{\alpha \in \mathbb{Z}}
    a_{\alpha}\,\d^{\alpha} : P \in\overline{\Psi}(A_t) \mbox{ and }
    a_\alpha=0 \mbox { for } \alpha <0 \right\} \; .
    \end{equation}
In addition, we define
    \begin{equation} \label{345-bar}
    G(\overline{\Psi}(A_t)) = \{ P \in \overline{\Psi}(A_t) :
    P\vert_{\tau=0}\in G_{A_t}  \}
    \end{equation}
    and
    \begin{equation} \label{346-bar}
    \overline{\D}_{A_t}^{\times} = \{ P \in \overline{\D}_{A_t} :
    P\vert_{\tau=0}=1 \} \; .
    \end{equation}
\end{Definition}

\smallskip

 Complementing Remark \ref{fring}, we note that if $R$ is a Fr\'echet ring,
 then $\overline{\Psi}(A_t)$ and $\overline{\D}_{A_t}$ are Fr\'echet
 algebras. In our general case we have:

\begin{Lemma} \label{Psi-bar}
    $\overline{\Psi}(A_t)$ is a Fr\"olicher algebra, and 
    $G(\overline{\Psi}(A_t))$ is a Fr\"olicher group.
\end{Lemma}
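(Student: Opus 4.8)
The plan is to establish the Fr\"olicher algebra structure of $\overline{\Psi}(A_t)$ first, and then deduce the group structure of $G(\overline{\Psi}(A_t))$ from it, mirroring the pattern already used for $\widehat{\Psi}(A_t)$ in Theorems \ref{Psihat} and \ref{Ghat}. For the algebra claim, I would begin by observing that $\overline{\Psi}(A_t)$ is, by Definition \ref{barpsi}, a linear subspace of $\widehat{\Psi}(A_t)$, which is a Fr\"olicher algebra by Theorem \ref{Psihat}(3). So $\overline{\Psi}(A_t)$ automatically inherits a Fr\"olicher structure as a subset, and addition and scalar multiplication are smooth because they are restrictions of the (smooth) operations on $\widehat{\Psi}(A_t)$ and the defining condition $val_t(a_\alpha)\leq\alpha$ is clearly preserved by linear combinations (using \eqref{val1} and \eqref{val2}). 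The one genuine point to check is that the product $\circ$ of $\widehat{\Psi}(A_t)$ restricts to $\overline{\Psi}(A_t)$, i.e. that $\overline{\Psi}(A_t)$ is closed under multiplication: if $P=\sum a_\alpha\d^\alpha$ and $Q=\sum b_\beta\d^\beta$ with $val_t(a_\alpha)\leq\alpha$ and $val_t(b_\beta)\leq\beta$, then each $\d$-coefficient $c_\gamma$ of $P\circ Q$ is a finite sum of terms of the form $\binom{\alpha}{k}(\d^k a_\alpha)\,b_\beta$ with $\alpha+\beta-k=\gamma$; since $\d$ does not decrease $val_t$ (it acts coefficientwise on the $R$-coefficients of the $t$-series, preserving the $t$-grading) and $val_t$ is superadditive for products by \eqref{val3}, one gets $val_t(c_\gamma)\geq val_t(a_\alpha)+val_t(b_\beta)\geq$ ... wait, one must be careful: the relevant inequality is $val_t((\d^k a_\alpha) b_\beta)\geq val_t(a_\alpha)+val_t(b_\beta)$, and since $val_t(a_\alpha)\leq\alpha$, $val_t(b_\beta)\leq\beta$ this does not immediately give $\leq\gamma=\alpha+\beta-k$. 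The correct bound uses that the $\d$-order bookkeeping built into Lemma \ref{expt}'s estimate is exactly the right one: the condition $val_t(a_\alpha)\leq\alpha$ is designed to be stable under the twisted product, which is precisely what is checked there, so the cleanest route is to invoke the estimate $\alpha\leq val_t(a_\alpha)$ — no, that is the reverse inequality. Let me restate: the real content is that the \emph{upper} bound $val_t(a_\alpha)\leq\alpha$ is what defines $\overline{\Psi}$, and one checks directly from formula \eqref{alfa} that if $val_t(a_\alpha)\leq\alpha$ and $val_t(b_\beta)\leq\beta$ then the coefficient $c_\gamma$ of $\d^\gamma$ in $P\circ Q$, being a sum of $\binom{\alpha}{k}(\d^ka_\alpha)b_\beta$ with $\alpha+\beta-k=\gamma$, satisfies... actually $val_t(c_\gamma)$ can only be \emph{bounded above} by using that $val_t$ of a sum of its terms is at most the minimum of their $val_t$'s when there is no cancellation, which need not hold. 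So the honest statement is that closure under multiplication follows by the same computation as the one in \cite{M3} showing $\widehat{\Psi}(A_t)$ is an algebra, refined to track the sharper growth condition, and the smoothness of multiplication is then inherited from $\widehat{\Psi}(A_t)$ since the restricted product agrees with the ambient one.

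After the algebra structure is in place, the group statement for $G(\overline{\Psi}(A_t))$ follows the template of Theorem \ref{Ghat}(3). First I would note that $G(\overline{\Psi}(A_t))$, as defined in \eqref{345-bar}, is a subgroup of $G(\widehat{\Psi}(A_t))$: closure under multiplication follows from closure of $\overline{\Psi}(A_t)$ under $\circ$ together with the fact that $P\mapsto P\vert_{t=0}$ is a ring homomorphism with $G_{A_t}$ a subgroup; closure under inversion follows because the inversion formula is the same inductive one used in Lemma \ref{Psi-Rt} and Theorem \ref{mu2}, and one checks that it preserves the condition $val_t(a_\alpha)\leq\alpha$ by induction on the valuation, exactly as in the proof of smoothness of $U\mapsto S$. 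Then $G(\overline{\Psi}(A_t))$ carries the subset Fr\"olicher structure from $\overline{\Psi}(A_t)$ (equivalently, from $\widehat{\Psi}(A_t)$), and multiplication and inversion are smooth because they are restrictions of the smooth operations on $G(\widehat{\Psi}(A_t))$ established in Theorem \ref{Ghat}(3); one only needs that the image of the restricted operations lands back in $G(\overline{\Psi}(A_t))$, which is the subgroup property just verified. Hence $G(\overline{\Psi}(A_t))$ is a Fr\"olicher group.

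The main obstacle, and the only place where real work is needed, is the verification that $\overline{\Psi}(A_t)$ is closed under the twisted product $\circ$ and that $G(\overline{\Psi}(A_t))$ is closed under inversion — i.e., that the sharper condition $val_t(a_\alpha)\leq\alpha$ (as opposed to the weaker cone condition defining $\widehat{\Psi}(A_t)$) is an algebra/group-theoretic condition. This is exactly the phenomenon isolated in Lemma \ref{expt}, where the interplay between $\d$-order and $t$-valuation is tracked for exponentials; the same bookkeeping, applied to products of finitely many homogeneous pieces via \eqref{val1}, \eqref{val2}, \eqref{val3} and the fact that $\d$ preserves the $t$-grading, shows that the condition propagates through the finite sums and products appearing in formula \eqref{alfa} and in the inductive inversion formula. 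Once this combinatorial point is settled, everything else is inheritance of smoothness from the already-established Fr\"olicher structures on $\widehat{\Psi}(A_t)$ and $G(\widehat{\Psi}(A_t))$, with no new analytic input required — in keeping with the paper's philosophy that topology-dependent issues are confined to integration, which is not at stake here.
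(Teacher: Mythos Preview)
Your overall strategy is exactly the paper's: show that $\overline{\Psi}(A_t)$ is a subalgebra of the Fr\"olicher algebra $\widehat{\Psi}(A_t)$ by checking that the defining valuation condition is preserved under addition, multiplication and (for the group) inversion, and then let smoothness be inherited from the ambient structures of Theorems \ref{Psihat} and \ref{Ghat}. The paper carries out precisely these three estimates and nothing more.

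The place where your argument stalls --- the multiplication estimate --- is not a failure of the method but a misreading forced by a typo in Definition \ref{barpsi}: the defining condition should read $val_t(a_\alpha) \geq \alpha$, not $\leq$. This is corroborated by Lemma \ref{expt} (which establishes $\alpha \leq val_t(a_\alpha)$ for the exponential and is the stated motivation for the definition), by the paper's own proof of the present lemma (every line verifies $val_t \geq \alpha$), and by the proofs of Lemma \ref{Psi-vreg} and Theorems \ref{Triv-d} and \ref{Psi-atreg}, all of which use ``order $\leq$ valuation''. With the correct inequality your own computation goes through cleanly: for the coefficient $c_\gamma$ of $\d^\gamma$ in $P\circ Q$, each contributing term $\binom{\alpha}{k}(\d^k a_\alpha)\,b_\beta$ with $\alpha+\beta-k=\gamma$ satisfies, by \eqref{val3} and the fact that $\d$ preserves the $t$-grading,
\[
val_t\big((\d^k a_\alpha)\,b_\beta\big)\ \geq\ val_t(a_\alpha)+val_t(b_\beta)\ \geq\ \alpha+\beta\ \geq\ \alpha+\beta-k\ =\ \gamma,
\]
and then \eqref{val1} gives $val_t(c_\gamma)\geq\gamma$. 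The inversion check is, as you say, an induction on the $t$-valuation bounding the $\d$-order of each homogeneous piece of the inverse by its valuation; the paper writes this out explicitly. So once the inequality is read correctly, your proposal and the paper's proof coincide, and no appeal to \cite{M3} is needed.
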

\begin{proof}
From Theorem \ref{Psihat} and Theorem \ref{Ghat}, we only have to prove that 
$\overline{\Psi}(A_t)$ and $G(\overline{\Psi}(A_t))$ are stable under algebraic operations. 
In order to do this, we only need to check some estimates:

    \begin{itemize}
        \item Let $(\sum_{\alpha \in \Z} a_\alpha \d^\alpha, \sum_{\alpha \in \Z} b_\alpha \d^\alpha) \in \overline{\Psi}(A_t)^2$. Then $$\sum_{\alpha \in \Z} a_\alpha \d^\alpha+ \sum_{\alpha \in \Z} b_\alpha \d^\alpha = \sum_{\alpha \in \Z} (a_\alpha +b_\alpha) \d^\alpha.$$
        Using property (\ref{val1}), $$val_t(a_\alpha + b_\alpha ) \geq \inf \left(val_t(a_\alpha) , val_t (b_\alpha )\right) \geq \alpha\; .$$
        \item Proceeding in the same way and with the same notations,
        $$\left(\sum_{\alpha \in \Z} a_\alpha \d^\alpha\right)\left( \sum_{\alpha \in \Z} b_\alpha \d^\alpha\right) = \sum_{(\alpha,\beta, k) \in \Z^2 \times \N} \frac{1}{k!} (\d^k a_\alpha)b_\beta  \d^{\alpha + \beta - k}$$
        gives:
        $$val_t \left((\d^k a_\alpha)b_\beta\right) \geq val_t\left( a_\alpha b_\beta\right)$$ and by property (\ref{val3}),
            $$val_t \left((\d^k a_\alpha)b_\beta\right) \geq \alpha + \beta \geq \alpha + \beta - k\; .$$
        \item   Mimicking the proof of Lemma \ref{Psi-Rt}, if $\sum_{\alpha \in \Z} a_\alpha \d^\alpha \in G(\overline{\Psi}(A_t)),$ then 
        $$ \left[(\sum_{\alpha \in \Z} a_\alpha \d^\alpha)^{-1}\right]_t = - \left[\sum_{\alpha \in \Z} a_\alpha \d^\alpha\right]^{-1}_0\sum_{t'+t''=t, t'\neq 0} \left[\sum_{\alpha \in \Z} a_\alpha \d^\alpha\right]_{t'}\left[(\sum_{\alpha \in \Z} a_\alpha \d^\alpha)^{-1}\right]_{t''}.$$ 
        Since $$\left[\sum_{\alpha \in \Z} a_\alpha \d^\alpha\right]_0 \in G_R,$$ we have
        $$val_t\left[\sum_{\alpha \in \Z} a_\alpha \d^\alpha\right]_0=0\;,$$ and hence, considering each sum of monomials of constant valuation $k,$
        \begin{eqnarray*}
        &&ord\left( \sum_{|t|=k}\left[\left(\sum_{\alpha \in \Z} a_\alpha \d^\alpha\right)^{-1}\right]_t \right)\\ & \leq &
         \sup_{|t|=k}ord\left(\left[\sum_{\alpha \in \Z} a_\alpha \d^\alpha\right]^{-1}_0\sum_{t'+t''=t, t'\neq 0}
         \left[\sum_{\alpha \in \Z} a_\alpha \d^\alpha\right]_{t'}\left[(\sum_{\alpha \in \Z} a_\alpha \d^\alpha)^{-1}\right]_{t''} \right) \\
        & \leq & 0+\sup_{|t|=k} \, ord\left(\sum_{t'+t''=t, t'\neq 0} \left[\sum_{\alpha \in \Z} a_\alpha \d^\alpha\right]_{t'}\left[(\sum_{\alpha \in \Z} a_\alpha \d^\alpha)^{-1}\right]_{t''} \right)\\
        & \leq & \sup_{|t|=k}\sup_{t'+t''=t, t'\neq 0} ord\left(\left[\sum_{\alpha \in \Z} a_\alpha \d^\alpha\right]_{t'}\left[(\sum_{\alpha \in \Z} a_\alpha \d^\alpha)^{-1}\right]_{t''} \right)\\
        & \leq & \sup_{|t|=k}\sup_{t'+t''=t, t'\neq 0} ord\left(\left[\sum_{\alpha \in \Z} a_\alpha \d^\alpha\right]_{t'}\right) + ord\left(\left[(\sum_{\alpha \in \Z} a_\alpha \d^\alpha)^{-1}\right]_{t''} \right)\\
            & \leq & \sup_{|t|=k}\sup_{t'+t''=t, t'\neq 0} val_t\left(\left[\sum_{\alpha \in \Z} a_\alpha \d^\alpha\right]_{t'}\right) + val_t\left(\left[(\sum_{\alpha \in \Z} a_\alpha \d^\alpha)^{-1}\right]_{t''} \right)\\
            & \leq &  \sup_{|t|=k}\sup_{t'+t''=t, t'\neq 0} |t'| + |t''| \\
            & \leq & k.
        \end{eqnarray*}
    \end{itemize}
\end{proof}

The following proposition gathers all important
properties of the ``barred" spaces introduced above, which are
necessary for our work. They are all consequences of Lemma
\ref{Psi-bar}. We leave the proof of this proposition as a toy
exercise for the reader. It can be proved either by direct
computations or by considering the intersection of the ``hatted"
sets appearing in Theorems \ref{Psihat} and \ref{Ghat} with
$\overline{\Psi}(A_t).$

\begin{Proposition} \label{Triv-bar}
    We have the following properties:
    \begin{enumerate}
        \item $\mathcal{I}_{A_t} \subset \overline{\Psi}(A_t)$.
        \item $G_{A_t} \subset G(\overline{\Psi}(A_t))$.
        \item $\overline{\D}_{A_t}^{\times}$ is a Fr\"olicher subgroup of $G(\overline{\Psi}(A_t)).$
        \item For any $U \in G(\overline{\Psi}(A_t))$ there exist unique $W \in
        G_{A_t}$ and $Y \in \overline{\D}_{A_t}^{\times}$ such that
        \[
        U = W^{-1}\,Y \; .
        \]
        In other words, there exists a unique global factorization of the 
        Fr\"olicher Lie group
        $G(\overline{\Psi}(A_t))$ as a group defined by matched pairs,
        \[
        G(\overline{\Psi}(A_t)) = G_{A_t} \, \overline{\D}_{A_t}^\times \; .
        \]
        \item The factorization $U \mapsto (W,Y)$ is smooth.
    \end{enumerate}
\end{Proposition}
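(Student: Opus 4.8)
The plan is to verify each of the five assertions in turn, drawing on the ``hatted'' analogues already established in Theorems \ref{Psihat} and \ref{Ghat} together with the stability estimates collected in Lemma \ref{Psi-bar}. The guiding principle is that every ``barred'' object is the intersection of its hatted counterpart with $\overline{\Psi}(A_t)$, so the only thing ever in question is stability of the relevant algebraic operation under the extra constraint $val_t(a_\alpha)\leq\alpha$; and precisely this stability is what Lemma \ref{Psi-bar} provides for multiplication, addition and inversion.

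For item (1), $\mathcal{I}_{A_t}=\Psi^{-1}(A_t)$ consists of operators $\sum_{\alpha<0}a_\alpha\d^\alpha$; since $a_\alpha\in A_t$ always satisfies $val_t(a_\alpha)\geq 0>\alpha$ for $\alpha<0$, the constraint is automatic and (1) follows. Item (2) is then immediate: $G_{A_t}=1+\mathcal{I}_{A_t}$, the summand $1$ has $val_t=0=\mathrm{ord}$ in the relevant sense, so $G_{A_t}\subset\overline{\Psi}(A_t)$; combined with $G_{A_t}\subset G(\widehat{\Psi}(A_t))$ and the definition (\ref{345-bar}) of $G(\overline{\Psi}(A_t))$ as $G(\widehat{\Psi}(A_t))\cap\overline{\Psi}(A_t)$, we get $G_{A_t}\subset G(\overline{\Psi}(A_t))$. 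For item (3), one first checks that $\overline{\D}_{A_t}^\times\subset G(\overline{\Psi}(A_t))$ (an element $Y$ of $\overline{\D}_{A_t}$ with $Y|_{t=0}=1$ lies in $\overline{\Psi}(A_t)$ by definition, and its constant-$t$ part is $1\in G_{A_t}$), that $\overline{\D}_{A_t}^\times$ is a subgroup (closure under multiplication is Lemma \ref{Psi-bar} for $\overline{\Psi}(A_t)$ together with closure of $\widehat{\D}_{A_t}^\times$ shown in Theorem \ref{Ghat}(1), and closure under inversion is the inversion estimate in the proof of Lemma \ref{Psi-bar}), and finally that multiplication and inversion are smooth for the subset Fr\"olicher structure --- but this is inherited directly from $G(\overline{\Psi}(A_t))$, which is a Fr\"olicher group by Lemma \ref{Psi-bar}. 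Hence $\overline{\D}_{A_t}^\times$ is a Fr\"olicher subgroup.

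Items (4) and (5) are the substantive part, and here the strategy is simply to run the proof of Theorem \ref{mu2} verbatim and observe that it never leaves the barred world. Concretely, given $U\in G(\overline{\Psi}(A_t))$, Theorem \ref{mu2}(1) already produces the unique factorization $U=W^{-1}Y$ with $W\in G_{A_t}$, $Y\in\widehat{\D}_{A_t}^\times$ inside $G(\widehat{\Psi}(A_t))$, so uniqueness is free. What must be checked is that $W\in G_{A_t}$ (automatic, since $G_{A_t}$ is the \emph{same} group in both settings) and that $Y=WU$ actually lies in $\overline{\D}_{A_t}^\times$ rather than merely $\widehat{\D}_{A_t}^\times$: but $W\in G_{A_t}\subset G(\overline{\Psi}(A_t))$ by item (2) and $U\in G(\overline{\Psi}(A_t))$, so $Y=WU\in\overline{\Psi}(A_t)$ by the multiplicative stability in Lemma \ref{Psi-bar}; combined with $Y\in\widehat{\D}_{A_t}^\times$ this gives $Y\in\overline{\D}_{A_t}\cap\{P:P|_{t=0}=1\}=\overline{\D}_{A_t}^\times$. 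This establishes (4). For (5), the smoothness of $U\mapsto(W,Y)$ is inherited from the smoothness of $U\mapsto(S,Y)$ proved in Theorem \ref{mu2}(2): the inductive formulas expressing the coefficients of $W=S$ and then $Y=WU$ as $t$-monomial-wise finite combinations of smooth operations on the coefficients of $U$ are exactly the same formulas, now read between subset Fr\"olicher structures of the barred algebras, and smoothness of a map into a subset Fr\"olicher space is tested by composing with the ambient generating functions, which is what Theorem \ref{mu2}(2) already verifies.

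The only point requiring genuine care --- and thus the main (minor) obstacle --- is confirming that the constraint $val_t(a_\alpha)\leq\alpha$ is genuinely preserved by the inversion step producing $W^{-1}$, and hence that the factorization $U=W^{-1}Y$ displayed in (4) really is an identity between elements of $G(\overline{\Psi}(A_t))$ and not just of $G(\widehat{\Psi}(A_t))$. This is settled by the last bullet in the proof of Lemma \ref{Psi-bar}, which bounds $\mathrm{ord}\big(\sum_{|t|=k}[(\,\cdot\,)^{-1}]_t\big)\leq k$; applied to $W\in G_{A_t}\subset G(\overline{\Psi}(A_t))$ it shows $W^{-1}\in\overline{\Psi}(A_t)$, so all three of $U$, $W^{-1}$, $Y$ sit in the barred algebra and the matched-pair decomposition $G(\overline{\Psi}(A_t))=G_{A_t}\,\overline{\D}_{A_t}^\times$ is meaningful. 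As the excerpt itself remarks, everything else is a toy exercise obtained by intersecting the hatted statements with $\overline{\Psi}(A_t)$.
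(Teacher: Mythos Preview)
Your proposal is correct and follows precisely the approach the paper itself indicates: the paper does not give a detailed proof but explicitly leaves the proposition as a ``toy exercise'' to be done ``by considering the intersection of the `hatted' sets appearing in Theorems \ref{Psihat} and \ref{Ghat} with $\overline{\Psi}(A_t)$,'' which is exactly what you carry out, invoking Lemma \ref{Psi-bar} for the required stability under multiplication and inversion. One cosmetic remark: you copy the inequality $val_t(a_\alpha)\leq\alpha$ from the paper's definition (\ref{aldef-reg}), which is a typo there (as the proof of Lemma \ref{Psi-bar} and Lemma \ref{expt} make clear, the intended constraint is $val_t(a_\alpha)\geq\alpha$); your actual arguments use the correct direction, so this does not affect the validity of your proof.
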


We now turn to regular Lie groups. For this, and till the end of the paper, we assume that 
$A_t$ is a regular algebra, and that $A_t^*$ is a regular Lie group. This is equivalent to 
saying that $R$ is a regular vector space, and $R^*$ is a regular Lie group.

\begin{Lemma} \label{Psi-vreg}
    $\overline{\Psi}(A_t)$ is a regular vector space.
\end{Lemma}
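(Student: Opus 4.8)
The plan is to integrate componentwise, reducing the statement to the regularity of $R$ (equivalently of $A_t$), and then to check that the two constraints which cut out $\overline{\Psi}(A_t)$ inside a product of copies of $R$ are preserved by that componentwise integration. This is precisely the principle already invoked without comment for $\Psi^{-1}(A)$ in the proof of Lemma \ref{Psi-1}.

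First I would record the ambient picture. By Lemma \ref{Psi-Rt} and the proof of Theorem \ref{Psihat}, $\overline{\Psi}(A_t)$ is a linear subspace of $\Psi(R)[[t]]$; moreover the defining condition $val_t(a_\alpha)\geq\alpha$ forces, for each monomial $\tau^t$, the coefficient of $\d^\alpha$ in that monomial to vanish whenever $\tau^t$ has valuation $<\alpha$, so the assignment of all these coefficients identifies $\overline{\Psi}(A_t)$ with a linear subspace of a product $\prod R$ of copies of $R$, carrying the subset Fr\"olicher structure inside that product. A product of regular Fr\"olicher vector spaces is itself regular: using Proposition \ref{prod2} (and Proposition \ref{froproj}) one has $C^\infty([0;1];\prod R)=\prod C^\infty([0;1];R)$, componentwise integration is the product of the smooth integration maps on the factors, hence smooth, and the solution of $u(0)=0$, $u'=v$ is unique because it is unique in each factor. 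Thus $\prod R$ is regular — here we use the standing hypothesis that $R$ is a regular vector space.

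Next I would verify that componentwise integration preserves $\overline{\Psi}(A_t)$. Let $s\mapsto v(s)=\sum_\alpha v_\alpha(s)\,\d^\alpha$ be a smooth path in $\overline{\Psi}(A_t)$, so that for every $\alpha$ and every $s$ one has $val_t(v_\alpha(s))\geq\alpha$, i.e. the $\tau^t$-component of $v_\alpha(s)$ vanishes for every monomial $\tau^t$ of valuation $<\alpha$. The integral $u(s)=\sum_\alpha u_\alpha(s)\,\d^\alpha$ has, in each monomial $\tau^t$, its $\d^\alpha$-coefficient equal to the integral over $[0,s]$ of the corresponding coefficient of $v$, which is identically zero whenever $\tau^t$ has valuation $<\alpha$; hence $val_t(u_\alpha(s))\geq\alpha$ for all $s$, and the growth condition defining membership in $\widehat{\Psi}(A_t)$ is inherited in exactly the same way. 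So $u(s)\in\overline{\Psi}(A_t)$ for all $s$. Since $\overline{\Psi}(A_t)$ carries the subset Fr\"olicher structure, a smooth family valued in $\prod R$ whose image lies in $\overline{\Psi}(A_t)$ is smooth as a family valued in $\overline{\Psi}(A_t)$; therefore the co-restricted map $\int_0^{(\cdot)}\colon C^\infty([0;1];\overline{\Psi}(A_t))\to C^\infty([0;1];\overline{\Psi}(A_t))$ is smooth, and uniqueness in $\prod R$ gives uniqueness in $\overline{\Psi}(A_t)$. This proves the Lemma.

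The only point demanding care is not a genuine obstacle but the bookkeeping identifying the Fr\"olicher structure on $\overline{\Psi}(A_t)$ with the coordinatewise structure used above; this follows from Lemma \ref{Psi-Rt}, Theorem \ref{Psihat}, Definition \ref{barpsi}, and the descriptions of subset and projective-limit Fr\"olicher structures recalled in Section 2 (in particular Proposition \ref{froproj}). Granting this identification, the proof is the componentwise-integration argument of Lemma \ref{Psi-1} together with the elementary observation that a linear subspace stable under that integration, equipped with the subset structure, inherits regularity.
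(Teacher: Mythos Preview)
Your proof is correct and follows essentially the same approach as the paper: integrate componentwise using the regularity of $R$, then check that the defining constraint $val_t(a_\alpha)\geq\alpha$ (i.e., the vanishing of the $\tau^t$-coefficients of $a_\alpha$ for $|t|<\alpha$) is preserved under integration. Your write-up is more explicit about the Fr\"olicher bookkeeping (subset structure, smoothness of the co-restricted map, uniqueness via the ambient product), but the core argument is identical.
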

\begin{proof}
Let $s\mapsto\sum_{n\in\Z} a_\alpha(s)\d^\alpha\in C^\infty([0;1],\overline{\Psi}(A_t)) .$
    \begin{itemize}
        \item If $val_t(a_\alpha(s))< \alpha$, then $a_\alpha(s) = 0$ 
    and hence $\int_0^sa_\alpha = 0.$
        \item If $val_t(a_\alpha(s))\geq \alpha$,
        then $s \mapsto \int_0^sa_\alpha$ exists since $R$ is regular.
    \end{itemize}
    This integral does not change the inequalities on $val_t$ satisfied by 
    $a_\alpha(s)$, and so
$$s\mapsto \int_0^s \sum_{n \in \Z} a_\alpha(s)\d^\alpha \in C^\infty([0;1],
\overline{\Psi}(A_t))\; .$$ 
\end{proof}

\begin{Theorem} \label{Triv-d}
The group $\overline{\D}_{A_t}^{\times}$ is a regular Fr\"olicher Lie group, with regular 
Lie algebra  
$$
\overline{\mathfrak{d}}_{A_t}^{\times} = 
\{ P \in \widehat{\D}_{A_t} : P\vert_{\tau=0}=0 \}\; .
$$  
\end{Theorem}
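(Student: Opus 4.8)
The plan is to realize $\overline{\D}_{A_t}^\times$ as an enlargeable subgroup produced by an exponential map, and then to transfer regularity through the factorization of Proposition~\ref{Triv-bar}, exactly as $G(\overline{\Psi}(A_t))$ was built from $G_{A_t}$ and $\overline{\D}_{A_t}^\times$. First I would identify the candidate Lie algebra. An element $P\in\overline{\D}_{A_t}^\times$ has $P|_{t=0}=1$ and $P\in\overline{\D}_{A_t}$, so differentiating a smooth path through $1$ produces an operator $v$ with $v|_{t=0}=0$, $v$ a differential operator (no negative $\d$-powers), and $val_t(a_\alpha)\leq\alpha$ for all $\alpha$; this is precisely $\overline{\mathfrak d}_{A_t}^\times$. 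Conversely, the content of Lemma~\ref{expt} is that the $t$-exponential $exp_t$ maps a series $\sum_i t_iP_i$ with $ord(P_i)\le i$ into $\widehat{\Psi}(A)$ and satisfies $\alpha\le val_t(a_\alpha)$, i.e. lands in $\overline{\Psi}(A_t)$; when the $P_i$ are differential operators, $exp_t$ of any element of $\overline{\mathfrak d}_{A_t}^\times$ visibly lies in $\overline{\D}_{A_t}^\times$. So I would use $exp_t$ as the exponential map and check it is a smooth bijection $\overline{\mathfrak d}_{A_t}^\times\to\overline{\D}_{A_t}^\times$: smoothness is Lemma~\ref{expt}, and bijectivity follows because $exp_t$ is triangular with respect to the valuation filtration in $t$, the inverse being $\log_t$ built by the same kind of finite-combinatorial (hence smooth) formulas used for inversion in Lemma~\ref{Psi-Rt} and Lemma~\ref{Psi-bar}.

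Next I would upgrade this to genuine regularity in the sense of Definition~\ref{reg1}, i.e. solvability of $\dot g\, g^{-1}=v(t)$ for time-dependent $v\in C^\infty([0;1],\overline{\mathfrak d}_{A_t}^\times)$. The key observation is that the filtration of $\overline{\mathfrak d}_{A_t}^\times$ by $t$-valuation is a grading of the type hypothesized in Theorem~\ref{regulardeformation}: writing $\A_n$ for the piece of $t$-valuation exactly $n$ (a space of differential operators of $\d$-order $\le n$ with coefficients in the degree-$n$ part of $A_t$), each $\A_n$ is a complete/regular Fr\"olicher vector space because $R$ — hence $A_t$ and its graded pieces — is regular by our standing assumption, and the multiplication is graded and smooth. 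Then $\overline{\D}_{A_t}^\times = 1+\A$ with $\A=\bigoplus_{n\ge 1}\A_n$, and Theorem~\ref{regulardeformation} gives directly that $1+\A$ is a Fr\"olicher Lie group with regular Fr\"olicher Lie algebra $\A=\overline{\mathfrak d}_{A_t}^\times$, and that $exp_t$ realizes the smooth bijection. Regularity of the group in the sense of Definition~\ref{reg1} then follows by solving $\dot g\,g^{-1}=v(t)$ componentwise along the grading: at $t$-valuation $n$ the equation becomes an affine ODE in $\A_n$ whose inhomogeneous term is a smooth expression in the already-determined lower-valuation components, which is solvable and smooth because each $\A_n$ is regular (Lemma~\ref{Psi-vreg} gives the vector-space regularity we need for the coefficient integrals). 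Uniqueness is immediate from the same induction.

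I expect the main obstacle to be the bookkeeping that verifies the grading hypothesis of Theorem~\ref{regulardeformation} is met \emph{on the nose} by $\overline{\D}_{A_t}^\times$ rather than merely by $G(\overline{\Psi}(A_t))$: one must check that the constraints defining $\overline{\D}_{A_t}^\times$ — no negative powers of $\d$, and $val_t(a_\alpha)\le\alpha$ — are stable under the graded multiplication and carve out a sub-grading $\A_n$ that is still a regular Fr\"olicher space, and that the induced Lie bracket preserves the grading (so that $\A_n\cdot\A_m\subset\A_{n+m}$ with the order and valuation estimates of Lemma~\ref{Psi-bar} respected). This is where the order/valuation inequalities of Lemma~\ref{expt} and Lemma~\ref{Psi-bar} do the real work. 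Once that is in place, the identification of the Lie algebra as $\overline{\mathfrak d}_{A_t}^\times$ is routine, and both regularities (of group and algebra) come for free from Theorem~\ref{regulardeformation}; alternatively one can deduce them from Theorem~\ref{exactsequence} applied to the exact sequence coming from the $t$-valuation filtration, which is the strategy already used for $G_R$ and $G_{A_t}$ in Theorem~\ref{Ghat} and Lemma~\ref{Psi-1}.
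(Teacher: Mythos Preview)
Your proposal is correct and follows essentially the same route as the paper: identify the Lie algebra by differentiating paths through $1$ (and conversely use affine paths $1+sw$), note that $\overline{\mathfrak d}_{A_t}^\times$ is regular as a vector space by the argument of Lemma~\ref{Psi-vreg}, and then invoke Theorem~\ref{regulardeformation} with $\A_n$ the space of operators of $t$-valuation exactly $n$ and $\d$-order at most $n$. Your extra discussion of $exp_t$ and $\log_t$ via Lemma~\ref{expt} is not needed as a separate step, since Theorem~\ref{regulardeformation} already packages both the smooth exponential bijection and the regularity of $1+\A$; the paper simply cites that theorem once the grading is set up.
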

\begin{proof} 
The vector space $\overline{\mathfrak{d}}_{A_t}^{\times}$ is by construction
a Fr\"olicher vector space, which is regular as can be seen by adapting the 
proof of Lemma \ref{Psi-vreg}.  From Proposition \ref{Triv-bar}, 
$\overline{\D}_{A_t}^{\times}$  is a Fr\"olicher group.
Let $$c(s) = a_0(s) + \sum_{\alpha \geq 0} a_\alpha(s)\d^\alpha$$ be a path 
in $\overline{\D}_{A_t}^{\times}$ starting at $1.$  From Equation (\ref{346-bar}) we have 
$$\d{a}_0(0) = 0\; .$$
Moreover, $val_t(a_\alpha) \geq \alpha$, and hence 
    $$val_t(d_s a_\alpha (0)) \geq \alpha\; .$$
Thus, 
$$\d_s{c}(0) \in \overline{\mathfrak{d}}_{A_t}^{\times}\; .$$
Conversely, let $w \in \overline{\mathfrak{d}}_{A_t}^{\times}$. Since $val_t(w) \geq 1,$
we have that the path 
$$c(s)= 1 + s w  \in\overline{\D}_{A_t}^{\times}\; .$$
Since $\overline{\mathfrak{d}}_{A_t}^{\times}$ is a Fr\"olicher algebra and 
$\overline{\D}_{A_t}^{\times}$ is a Fr\"olicher group, we conclude that the
Adjoint map 
\begin{eqnarray*}
    \overline{\D}_{A_t}^{\times} \times \overline{\mathfrak{d}}_{A_t}^{\times} 
    &\rightarrow& \overline{\mathfrak{d}}_{A_t}^{\times}\\
    (D,d) & \mapsto& DdD^{-1}
    \end{eqnarray*} 
is well-defined, with values in $\overline{\mathfrak{d}}_{A_t}^{\times},$ and it
differentiates to the standard Lie bracket 
\begin{eqnarray*}
\overline{\mathfrak{d}}_{A_t}^{\times} \times \overline{\mathfrak{d}}_{A_t}^{\times} 
&\rightarrow& \overline{\mathfrak{d}}_{A_t}^{\times}\\
    (d_1,d_2) & \mapsto& [d_1,d_2]=d_1d_2 - d_2d_1\; ,
    \end{eqnarray*} 
which shows that $\overline{\D}_{A}^{\times}$ is a Fr\"olicher Lie group with regular Lie 
algebra $\overline{\mathfrak{d}}_{A_t}^{\times}.$ We complete the proof by applying 
Theorem \ref{regulardeformation}, with $\A_n$ defined as the vector space of all 
$(t,\d)-$monomials of valuation equal to $n$, and of order $\leq n$.
\end{proof}

\noindent We now give the corresponding theorem for $G(\overline{\Psi}(A_t)).$

\begin{Theorem} \label{Psi-atreg}
 The group $G(\overline{\Psi}(A_t))$ is a regular Fr\"olicher Lie group, with regular 
 Fr\"olicher Lie algebra
 $$\mathfrak{g}(\overline{\Psi}(A_t)) = 
 \left\{ P \in \overline{\Psi}(A_t) : P\vert_{\tau=0} \in \Psi^{-1}(R)\right\}.$$
\end{Theorem}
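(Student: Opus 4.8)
The plan is to build the regular Fr\"olicher Lie group structure on $G(\overline{\Psi}(A_t))$ out of the two pieces we already control, namely $G_{A_t}$ and $\overline{\D}_{A_t}^{\times}$, using the smooth Mulase factorization of Proposition \ref{Triv-bar} together with the exact-sequence machinery of Theorem \ref{exactsequence}. First I would observe that Lemma \ref{Psi-bar} already gives us that $G(\overline{\Psi}(A_t))$ is a Fr\"olicher group, so what remains is: (a) identify the tangent space at $1$ as the stated $\mathfrak{g}(\overline{\Psi}(A_t))$ and check it is a diffeological vector space stable under the commutator bracket, so that $G(\overline{\Psi}(A_t))$ is a Fr\"olicher Lie group in the sense of our definitions; and (b) establish regularity of both the group and the Lie algebra.

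For step (a), I would differentiate a smooth path $c(s) = 1 + \Psi^{-1}$-part $+ \dots$ starting at $1$: the condition $c(s)\vert_{t=0}\in G_{A_t}$ forces $\d_s c(0)\vert_{t=0}\in \Psi^{-1}(R)$, while the estimate $val_t(a_\alpha)\le\alpha$ defining $\overline{\Psi}(A_t)$ passes to the derivative; conversely, for $w$ in the candidate Lie algebra the affine path $s\mapsto 1+sw$ lies in $G(\overline{\Psi}(A_t))$ because $val_t(w)\geq 1$ guarantees invertibility of $1+sw$ within the barred algebra (same computation as in Lemma \ref{Psi-bar}). Since $\overline{\Psi}(A_t)$ is a Fr\"olicher algebra (Lemma \ref{Psi-bar}), the Adjoint action $G(\overline{\Psi}(A_t))\times\mathfrak{g}(\overline{\Psi}(A_t))\to\mathfrak{g}(\overline{\Psi}(A_t))$, $(g,v)\mapsto gvg^{-1}$, is well-defined and smooth — one checks $gvg^{-1}$ still satisfies $val_t\ge\alpha$ and projects into $\Psi^{-1}(R)$ at $t=0$ — and it differentiates to $[v_1,v_2]=v_1v_2-v_2v_1$, which stays in the Lie algebra by properties (\ref{val1})–(\ref{val3}) exactly as in Lemma \ref{Psi-Rt} and Theorem \ref{Triv-d}. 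This makes $\mathfrak{g}(\overline{\Psi}(A_t))$ a regular Fr\"olicher Lie algebra once we know it is a regular vector space, which follows by adapting Lemma \ref{Psi-vreg}.

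For step (b), the key structural input is the smooth global factorization $G(\overline{\Psi}(A_t)) = G_{A_t}\,\overline{\D}_{A_t}^{\times}$ of Proposition \ref{Triv-bar}(4)–(5), which I would repackage as a short exact sequence of Fr\"olicher Lie groups. Concretely, $G_{A_t} = 1+\mathcal{I}_{A_t}$ is a normal subgroup (it is the kernel of a natural projection, or one uses the matched-pair structure directly), $\overline{\D}_{A_t}^{\times}$ maps in as the complementary factor, and the factorization $U\mapsto(W,Y)$ being smooth gives a smooth section. One then has
$$ 1 \longrightarrow G_{A_t} \longrightarrow G(\overline{\Psi}(A_t)) \longrightarrow \overline{\D}_{A_t}^{\times} \longrightarrow 1 $$
(or the analogous sequence with the roles arranged to fit Theorem \ref{exactsequence}), with the trace diffeology condition automatic since all spaces are embedded in $\overline{\Psi}(A_t)$ with subset Fr\"olicher structures. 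By Theorem \ref{Triv-d}, $\overline{\D}_{A_t}^{\times}$ is regular with regular Lie algebra; by Theorem \ref{Ghat}(2) and Lemma \ref{Psi-1} (applied to $A_t$, which is legitimate since $A_t$ is a regular Fr\"olicher algebra by hypothesis), $G_{A_t} = 1+\Psi^{-1}(A_t)$ is regular with regular Lie algebra $\Psi^{-1}(A_t) = \mathcal{I}_{A_t}$. Theorem \ref{exactsequence} then yields that $G(\overline{\Psi}(A_t))$ is regular with regular Lie algebra, and the Lie algebra is the extension $\mathcal{I}_{A_t}\oplus\overline{\mathfrak{d}}_{A_t}^{\times}$, which one identifies with $\{P\in\overline{\Psi}(A_t):P\vert_{t=0}\in\Psi^{-1}(R)\}$ precisely as in step (a).

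The main obstacle I expect is verifying that the matched-pair factorization really upgrades to an \emph{exact sequence of Fr\"olicher Lie groups} meeting every hypothesis of Theorem \ref{exactsequence} — in particular that $G_{A_t}$ is normal with the correct quotient, that the quotient diffeology on $G(\overline{\Psi}(A_t))/G_{A_t}$ agrees with the Fr\"olicher structure on $\overline{\D}_{A_t}^{\times}$, and that the section from the smooth factorization is genuinely a smooth section in the required sense rather than merely a set-theoretic splitting. Once that bookkeeping is in place the regularity conclusion is essentially formal. A secondary (but routine) point is confirming, via the order/valuation estimates already used repeatedly above, that the Adjoint action and bracket preserve the defining inequalities of $\mathfrak{g}(\overline{\Psi}(A_t))$, so that the Lie algebra identification is correct and the bracket is smooth.
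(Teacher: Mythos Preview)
Your step (a) and the use of Lemma \ref{Psi-vreg} are fine, but the heart of your argument --- the exact sequence in step (b) --- does not exist. In a matched pair $G=AB$ neither factor need be normal, and here neither is. Concretely, take $S=1+x\partial^{-1}\in G_{A_t}$ and $Y=1+t_1\partial\in\overline{\D}_{A_t}^{\times}$; a short computation gives $[Y,x\partial^{-1}]=t_1\partial^{-1}$ and hence
\[
Y(x\partial^{-1})Y^{-1}=x\partial^{-1}+t_1\partial^{-1}Y^{-1}=x\partial^{-1}+t_1\partial^{-1}-t_1^{2}+t_1^{3}\partial-t_1^{4}\partial^{2}+\cdots,
\]
which has strictly positive $\partial$-order terms, so $YSY^{-1}\notin G_{A_t}$. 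The symmetric computation shows $\overline{\D}_{A_t}^{\times}$ is not normal either. Thus there is no homomorphism $G(\overline{\Psi}(A_t))\to\overline{\D}_{A_t}^{\times}$ with kernel $G_{A_t}$, and the ``natural projection'' you allude to (essentially $P\mapsto P_+$) is not a group homomorphism. You correctly flagged this as the main obstacle, but it is a genuine obstruction, not bookkeeping.

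The paper avoids this by using a different exact sequence,
\[
1\longrightarrow \operatorname{Ker}\bigl((\cdot)|_{t=0}\bigr)\longrightarrow G(\overline{\Psi}(A_t))\longrightarrow G_R\longrightarrow 1,
\]
whose kernel is normal because $(\cdot)|_{t=0}$ \emph{is} a group homomorphism, with smooth section the inclusion $G_R\hookrightarrow G(\overline{\Psi}(A_t))$ on the constant $t$-term. Regularity of $G_R=1+\Psi^{-1}(R)$ is Lemma \ref{Psi-1}, and regularity of the kernel $\{P:P|_{t=0}=1\}=G\Psi_t(R)\cap\overline{\Psi}(A_t)$ follows from Theorem \ref{regulardeformation} with the grading $\A_m=\{t\text{-monomials of valuation }m\text{ and order}\le m\}$; Theorem \ref{exactsequence} then finishes. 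So the exact-sequence strategy is right, but the sequence must be built from the evaluation map at $t=0$, not from the Mulase factorization.
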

\begin{proof}
    We have the following exact sequence:
$$
0 \rightarrow Ker((.)|_{\tau=0}) \rightarrow G(\overline{\Psi}(A_t)) \rightarrow 
G_R \rightarrow 0
$$
with smooth inclusion $G_R \rightarrow G(\overline{\Psi}(A_t))$ on the $\tau=0$ component.
So that by Theorem \ref{exactsequence} and Lemma \ref{Psi-1}, we only need to prove that
    $ Ker((.)|_{\tau=0})$ is a regular Lie group with Lie algebra
    $$\mathfrak{l} = \left\{P \in \overline{\Psi}(A_t) : P|_{\tau=0}=0 \right\}.$$
    We notice that
    $$Ker((.)|_{\tau=0}) = G\Psi_t(R)\cap \overline{\Psi}(A_t)$$
    and hence $Ker((.)|_{\tau=0})$ is a Fr\"olicher Lie group.
    Let us check the following:
\begin{itemize}
\item \underline{$\mathfrak{l}$ is regular as a vector space}
     It is proved in Lemma \ref{Psi-vreg} that any path in 
     $C^{\infty}([0,1],\mathfrak{l})$ integrates to a path on $\overline{\Psi}(A_t),$
     and it is easy to see that the $\tau=0$ component, which equals to 0, integrates 
     to $0.$
\item \underline{The exponential map $exp:C^{\infty}([0,1],\mathfrak{l}) \rightarrow
 C^{\infty}([0,1],Ker(P|_{\tau=0}))$ }
We can now apply Theorem \ref{regulardeformation} setting $\A_m$ as the vector space of 
$\tau-$monomials of valuation equal to $m,$ and which order is not greater than $m.$
\end{itemize}
\end{proof}



\section{Integration of the Kadomtsev-Petviashvili hierarchy} \label{S4}

The Kadomtsev-Petviashvili (KP) hierarchy reads
\begin{equation} \label{lolo}
\frac{d L}{d t_{k}} = \left[ (L^{k})_{+} , L \right]\; , \quad
\quad k \geq 1 \; ,
\end{equation}
with initial condition $L(0)=L_0  \in \d + \Psi^{-1} (R)$. The dependent
variable $L$ is chosen to be of the form
$$L = \d + \sum_{\alpha \leq -1 } u_\alpha \d^\alpha \in {\Psi}^1(A_t) \; .$$
A standard reference on (\ref{lolo}) is L.A. Dickey's treatise \cite{D}, see also 
\cite{KZ,M1,M3}. The following result gives a solution to the Cauchy problem
for the KP hierarchy (\ref{lolo}).

\begin{Theorem} \label{KPcentral}

~

\begin{enumerate}
\item
Consider the KP hierarchy 
\begin{equation} \label{lolo1}
\frac{d L}{d \tau_{k}} = \left[ (L^{k})_{+} , L \right]\; , \quad
\quad k \geq 1 \; ,
\end{equation}
with initial condition $L(0)=L_0$. Then,
\begin{enumerate}
\item There exists a pair $(S,Y) \in G_{A_t} \times \widehat{\D}(A_t)^\times$ 
    such that the unique solution to Equation $(\ref{lolo1})$ with $L|_{\tau=0}=L_0$ is
    \begin{eqnarray*}
        L(\tau_1,\tau_2,\cdots)=Y\,L_0\,Y^{-1} = S L_0 S^{-1} \; .
    \end{eqnarray*}
\item The pair $(S,Y)$ is uniquely determined by the smooth decomposition problem 
    $$exp\left(\sum_{k \in \N}\tau_k L_0^k\right) = S^{-1}Y$$ 
    and the solution $L$ depends smoothly on the initial condition $L_0$.
\end{enumerate}
\item Consider the KP hierarchy $(\ref{lolo})$
with initial condition $L(0)=L_0$. Set $\K^\infty=\cup_{n \in \N}\K^n$ and equip this set 
with the structure of a locally convex topological space given by the inductive limit. The 
algebra $A_t$ is to be considered as an algebra of power series in infinitely many 
variables, and each monomial of a given series in $A_t$ takes values in $\K^\infty$. 
\begin{enumerate}
\item There exists a pair $(S,Y) \in G_{A_t} \times 
\overline{\D}(A_t)^\times$ such that the unique solution to Equation $(\ref{lolo})$ with 
$L(0)=L_0$ is
\begin{eqnarray*}
        L(t_1,t_2,\cdots)=Y\,L_0\,Y^{-1} = S L_0 S^{-1} \; .
\end{eqnarray*}
\item The pair $(S,Y)$ is uniquely determined by the smooth decomposition problem 
    $$exp\left(\sum_{k \in \N} t_k L_0^k\right) = S^{-1}Y$$ 
posed in the regular Fr\"olicher Lie group $G(\overline{\Psi}(A_t))$.
\item The solution operator $L$ is smoothly dependent on the variable $t$ and on the 
    initial value $L_0.$ This means that the map 
$$ 
(L_0,s) \in (\d + \Psi^{-1}(R))\times \K^\infty \mapsto 
\sum_{n \in \N} \left( \sum_{|t|=n}[L(s)]_t\right) \in (\d + \Psi^{-1}(A_t))^\N 
$$ 
is smooth.
\end{enumerate}
\end{enumerate}
\end{Theorem}

The existence of an algebraic decomposition as in Parts 1(a), 1(b) 
of Theorem \ref{KPcentral} appears already in Mulase's seminal
paper \cite{M1}, and a formal solution to (\ref{lolo}) as in Part
(1) is in \cite{ER2013}. The richness of Theorem \ref{KPcentral}
steams from the fact that we pose the KP hierarchy in
the Fr\"olicher algebra $\Psi(A_t)$ (Theorem \ref{Psihat}) and
that then we {\em solve} the corresponding Cauchy problem using
analytically rigorous factorizations in Fr\"olicher Lie groups
({\em e.g.} Theorem \ref{Ghat}, Proposition \ref{Triv-bar}, Theorem
\ref{Triv-d} and Theorem \ref{Psi-atreg} for the infinite-dimensional 
regular Fr\"olicher Lie group $G(\overline{\Psi}(A_t))\,$). We
propose two proofs, one inspired on the formal arguments of
\cite{ER2013} and the other inspired on the work of \cite{Ma2013}
on a $q$-deformed KP hierarchy. 
As is plain from the discussions below, the new arguments used
herein are intrinsically interwoven with the ones used in
\cite{ER2013,Ma2013}.

\subsection{First proof}
We prove Part (2). Motivated by \cite{ER2013}, we model our proof after 
Reyman and Semenov-tian-Shansky's \cite{RS}, see also the later exposition \cite{P}. We 
set 
$$U = \exp \left(\sum_{k \in \N}t_kL_0^k\right)\; .$$
Then, $U \in G(\overline{\Psi}(A_t))$ by Lemma \ref{expt} applied to $P_n=L_0^n$ and
Theorem \ref{Psi-atreg}. 
We consider the smooth decomposition 
$$U \mapsto (S,Y) \in G_{A_t}\times \overline{\D}(A_t)^\times $$
following Proposition \ref{Triv-bar}, and we set $L=Y\,L_0\,Y^{-1}$.
We make two obvious observations:
\begin{enumerate} 
    \item
        $L^k = Y L_0^{\;k} Y^{-1}$.
    \item  $U\,L_0^{\;k} U^{-1}
    = L_0^{\;k}$, since $L_0$ commutes with 
    $U = \exp (\sum_k t_k\,L_0^{\; k}).$
    \end{enumerate}

     It follows that $L^k = Y L_0^{\;k} Y^{-1} = S S^{-1}
    Y L_0^{\;k} Y^{-1} S S^{-1} = S L_0^{\;k} S^{-1}$.

    We take $t_k$-derivative of $U$ for each $k \geq 1$. We obtain the 
    equation
    \[
    L_0^{\; k} U = - S^{-1}S_{t_k} S^{-1} Y + S^{-1} Y_{t_k}
    \]
    and so, using $U = S^{-1}\,Y$, we obtain the decomposition
    \[
    S L_0^{\; k} S^{-1} = - S_{t_k} S^{-1} + Y_{t_k} Y^{-1} \; .
    \]
    Since $S_{t_k} S^{-1} \in \mathcal{I}_A$ and $Y_{t_k} Y^{-1} \in
    \mathcal{D}_A$, we conclude that
    $$(L^k)_+ = Y_{t_k} Y^{-1} \; \; \mbox{ and } \; \; (L^k)_- = - S_{t_k}
    S^{-1}.$$ Now we take $t_k$-derivative of $L$:
    \begin{eqnarray*}
        \frac{d L}{d t_k} & = & Y_{t_k} L_0 Y^{-1} - Y L_0 Y^{-1} Y_{t_k}
        Y^{-1} \\
        & = & Y_{t_k} Y^{-1} Y L_0 Y^{-1} - Y L_0 Y^{-1} Y_{t_k}
        Y^{-1} \\
        & = & (L^k)_+\, L - L\, (L^k)_+ \\
        & = & [ (L^k)_+ , L ] \; .
    \end{eqnarray*}
   
    We check the initial condition: We have $L(0) = Y(0)
    L_0 Y(0)^{-1}$, but $Y(0) = 1$ since $Y \in \overline{\D}(A_t)^\times.$

    Smoothness with respect to $t$ is already proved by construction, and we 
    have established smoothness
    of the map $L_0 \mapsto Y$ at the beginning of the proof. Thus, the map
    $$L_0 \mapsto L(t)= Y(t) L_0 Y^{-1}(t)$$ is smooth in 
    $\overline{\Psi}(A_t)$ and 2(c) follows.

    In order to finish the proof, we need to prove that the solution $L$ 
    belongs to   
    $\d + \Psi^{-1}(A_t)$. For this, we consider above mentioned relation
    $$L^k = SL_0^k S^{-1},$$ which for $k=1,$ implies that 
    $$L = L_0 + S [L_0,S^{-1}].$$
 Then, $[L_0,S^{-1}]\in \Psi^{-1}(A_t),$ $S \in \Psi_0(A_t)$ and 
 by hypothesis $L_0 \in \d + \Psi^{-1}(R),$ so that 
 $$L \in \d + \Psi^{-1}(A_t).$$

    \begin{rem}
     The pseudo-differential operator $L$ we have constructed gives a full 
     solution to the KP hierarchy,
     in contradistinction to the result proven in \cite{ER2013} in which 
     (\ref{lolo}) is considered as
     an infinite system of equations for the coefficients of $L$ in {\em two} 
     independent variables,
     the ``time variable" $t_k$ and the ``space variable" modelled by the 
     derivation $\partial$.
     We can give another construction of the operators $S$, $Y$ and $L$ appearing
     above using induction and taking advantage of the fact that we are 
     considering our infinitely many time variables to take values in $\K^\infty$: 

        First, we fix $t_2 = t_3 = \cdots = 0,$ and we obtain a first 
        operator
        $U_1(t_1)= \exp(t_1L_0)=S_1^{-1}Y_1$ such that 
        $L_1 = Y_1 L_0 Y_1^{-1} = S_1 L_0 S_1^{-1}$
        is the unique solution of the $t_1-$equation of the KP-hierarchy
        $$\frac{d L_1}{dt_1} = \left[(L_{1})_+,L_1\right].$$
        Then, applying this first result to $R = R[[t_1]], $ and noticing 
        that $R[[t_1,t_2]]= R[[t_1]][[t_2]],$
        we set 
$$ 
U_2(t_1,t_2) = \exp(t_2L_1) =S_2^{-1}Y_2\in G\overline{\Psi}R[[t_1,t_2]] \; ,
$$
thereby obtaining a solution $$L_2(t_1,t_2) = Y_2 L_1 Y_2^{-1} = Y_2Y_1 L_0
Y_1^{-1}Y_2^{-1}$$ or equivalently
$$
L_2(t_1,t_2) = S_2 L_1 S_2^{-1} = S_2S_1 L_0 S_1^{-1}S_2^{-1}\; ,
$$ 
and so on. The total solution $L$ is built by a
inductive limit process which gives the operators $S$ and $Y$, and we can 
also easily check smoothness in $t$ of this full solution by remarking that 
$\sum_{|t|=n}[L(s)]_t$ is a finite sum of $t-$monomials.
\end{rem}

\subsection{Second proof}
In this second approach we model our proof
after \cite{Ma2013} and \cite{M3,M4}. As in \cite{Ma2013}, we apply a scaling to the KP hierarchy 
in order to work with fully regular Fr\"olicher Lie groups, and we make 
rigorous some of the formal arguments of \cite{M3,M4}.   

We let $L \in \d + \Psi^{-1}(A_t)$ and, as before, we consider the space 
$\K^\infty = \cup_{n \in \N}\K^n$ equipped with the structure of locally convex 
topological space given by the inductive limit. 
Since we have
    $[L^n,L]=0,$ we obtain from (\ref{lolo}) that 
\begin{equation}\label{lolo3}
\frac{d L}{dt_n} = [L^n_+,L] = -[L^n_-,L]\; , \quad \quad n \geq 1 \; .
\end{equation}
    Then, setting $dt = \sum_{k \in \N^*} d t_k$, we can define
    $$Z dt = \sum_{n \in \N^*} L^n_+ dt_n \quad \mbox{ and } \quad 
      Z^c dt = -\sum_{n \in \N^*} L^n_- dt_n\; .$$
The form $Z^c dt$ is a bona fide one-form, 
$$Z^c dt \in \Omega^1(\K^\infty,\Psi^{-1}({A_t}))\; ,$$ where $A_t$
    is understood as  set of smooth maps $\K^\infty \rightarrow R$ and  ${\Psi}(A_t)$ as a set of
    smooth maps  $\K^\infty \rightarrow \Psi(R).$
    Thus, in this context derivations with respect to
     $t_n$ appearing in (\ref{lolo3}) correspond to derivations 
    in a locally convex topological vector space. 

\noindent It follows from Equation (\ref{lolo3}) that the equation 
$$d Z^c - [Z^c,Z^c]=0$$ 
holds, and our constructions imply that this equation is a rigorous zero-curvature  
equation. 

\smallskip

In order to build a solution to (\ref{lolo3}) we proceed to deform the KP hierarchy and we 
work with deformed versions of $Z dt$ and $Z^c dt$. After \cite{Ma2013}, we use the 
scaling $t_n \mapsto q^n t_n$ to build the following operator  
    in ${\Psi}(A_t)[[q]]:$
    \begin{equation} \label{sca}
    \tilde{L}(t_1,t_2,...) = qL(qt_1,q^2t_2,...)\; .
    \end{equation}
We note that if we define $\delta : \K^\infty \rightarrow \K^\infty$ by 
$\delta(t_1,t_2,\cdots) = (q t_1,q^2 t_2,\cdots)$ then
\[
\frac{d \tilde{L}}{d t_n} = 
q \left.\frac{d (L \circ \delta)}{d t_n}\right|_{(t_1,t_2,\cdots)} = 
q^{n+1} \left.\frac{d L}{d t_n}\right|_{\delta(t_1,t_2,\cdots)} \; ,
\]
an easy observation which will be of use presently. We also consider the 1-forms
\begin{equation} \label{z+}
\tilde{Z}_{\tilde{L},+}(t_1,...) = \sum_{n = 1}^{+\infty} q^n (\tilde{L}^n)_+ dt_n
\end{equation}
and
\begin{equation} \label{z-}
\tilde{Z}_{\tilde{L},-}(t_1,...) = -\sum_{n = 1}^{+\infty}q^n (\tilde{L}^n)_- dt_n\; .
\end{equation}
By construction, $\tilde{Z}_{+}$ and $\tilde{Z}_{-}$ are smooth 1-forms in 
$\Omega^1(T,\Psi(A_t)[[q]])$. Now, instead of Equation (\ref{lolo3}),  
we write down an equation for $\tilde{L}$. Using (\ref{lolo3}) we obtain, after cancellation 
of the $q^{n+1}$-factors, the ``deformed KP hierarchy" 
\begin{equation} \label{lolo4}
\frac{d \tilde{L}}{dt_n}  =  [\tilde{L}^n_+\, ,\, \tilde{L}]
 =  - [\tilde{L}^n_-\, ,\, \tilde{L}]\; , \quad n \geq 1 \; .
\end{equation}  
These equations are equivalent to 
\begin{equation}\label{ast!!}
{d (\tilde{L})} = 
    [ \tilde{Z}_{\tilde{L},+} , \tilde{L} ] = [ \tilde{Z}_{\tilde{L},-} , \tilde{L} ] \; , 
\end{equation}
and we conclude that the corresponding zero curvature equations 
\begin{equation} \label{zcc+} 
d\tilde{Z}_{\tilde{L},+} + [\tilde{Z}_{\tilde{L}+},\tilde{Z}_{\tilde{L},+}] = 0
\end{equation}
    and
\begin{equation} \label{zcc-} 
d\tilde{Z}_{\tilde{L},-} - [\tilde{Z}_{\tilde{L},-},\tilde{Z}_{\tilde{L},-}] = 0
\end{equation}
also hold.

    We define the following algebras and groups, in which $val_q$ is 
    valuation of $q-$series, after \cite{Ma2013,Ma2015}:

    \begin{Definition}
        We set
        $$\Psi_q(R)=\left\{\sum_{\alpha \in \Z}a_{\alpha}\d^\alpha \in \Psi(R)[[q]] :
         val_q(a_\alpha)\geq \alpha\right\},$$
        $$G\Psi_q(R)=\left\{\sum_{\alpha \in \Z}a_{\alpha}\d^\alpha \in \Psi_q(R) : 
        a_0 = 1 + b_0, \quad val_q(b_0)\geq 1\right\},$$

\smallskip \smallskip

$$G_{R,q} = \left\{A \in  G\Psi_q(R) : A = 1 + B, \quad B\in \Psi^{-1}(R)[[q]] \right\},$$
$$\D_q(R) = \left\{\sum_{\alpha \in \Z}a_{\alpha}\d^\alpha \in \Psi_q(R) : a_\alpha = 0
 \hbox{ if } \alpha < 0 \hbox{ and }a_0 = 1 + b_0, \quad val_q(b_0)\geq 1 \right\},$$
    \end{Definition}

    The following Lemma is proved directly in \cite{Ma2013,Ma2015}, but now 
    we can give an alternative shorter proof by 
    using some of the results already stated in this paper.

    \begin{Lemma} \label{44}
$\Psi_q(R)$ is a Fr\"olicher algebra and a regular Lie algebra, and the groups 
$G\Psi_q(R),$ $G_{R,q}$ and $\D_q(R)$ are fully regular Fr\"olicher 
Lie groups with Lie algebras given respectively by:
$$\mathfrak{g}\Psi_q(R)=
\left\{\sum_{\alpha\in\Z}a_{\alpha}\d^\alpha\in\Psi_q(R) :  val_q(a_0)\geq 1\right\}\; ,$$
        $$\mathfrak{g}_{R,q}= \Psi^{-1}(R)[[q]]\; , $$
        and
$$\mathfrak{d}_q(R) = 
\left\{\sum_{\alpha \in \Z}a_{\alpha}\d^\alpha \in D_q(R) : 
a_0 = 0 \hbox{ if } \alpha < 0 \quad val_q(a_0)\geq 1\right\} \; .$$
\end{Lemma}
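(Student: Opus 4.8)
The plan is to obtain this Lemma as the one–variable specialization of the ``barred'' machinery of Section~3.3. I would run the Bourbaki construction of Section~3.2 with the monoid $T=\N$ generated by a single variable $q$ of order $ord(q)=1$; then $A_t=R[[q]]$, the weight $|t|$ becomes the $q$-degree, and $val_t$ becomes $val_q$. Reading Definition~\ref{barpsi} in this case gives, term by term,
\[
\overline{\Psi}(R[[q]])=\Psi_q(R),\quad G(\overline{\Psi}(R[[q]]))=G\Psi_q(R),\quad \overline{\D}_{R[[q]]}^{\times}=\D_q(R),\quad 1+\mathcal{I}_{R[[q]]}=G_{R,q},
\]
and the Lie algebras furnished by Theorems~\ref{Triv-d} and~\ref{Psi-atreg} specialize exactly to $\mathfrak{d}_q(R)$ and $\mathfrak{g}\Psi_q(R)$. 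Nothing in Sections~3.2--3.3 uses that $T$ has infinitely many generators --- all the estimates there are phrased in terms of $val_t$ and $\d$-order --- so every statement transports verbatim to $T=\N$.

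With this dictionary the Lemma follows at once: $\Psi_q(R)$ is a Fr\"olicher algebra by Lemma~\ref{Psi-bar} and is regular as a vector space, hence as a Lie algebra, by Lemma~\ref{Psi-vreg}; $\D_q(R)$ is a regular Fr\"olicher Lie group with regular Lie algebra $\mathfrak{d}_q(R)$ by Theorem~\ref{Triv-d}; and $G\Psi_q(R)$ is a regular Fr\"olicher Lie group with regular Lie algebra $\mathfrak{g}\Psi_q(R)$ by Theorem~\ref{Psi-atreg}. For $G_{R,q}$ one observes that the constraint $val_q(a_\alpha)\ge\alpha$ is vacuous when $\alpha<0$, so $G_{R,q}=1+\Psi^{-1}(R)[[q]]$; this is a regular Fr\"olicher Lie group with regular Lie algebra $\Psi^{-1}(R)[[q]]=\mathfrak{g}_{R,q}$ by the argument of Lemma~\ref{Psi-1} together with Theorem~\ref{exactsequence}, applied to the split exact sequence $1\to 1+q\,\Psi^{-1}(R)[[q]]\to G_{R,q}\to 1+\Psi^{-1}(R)\to 1$. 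In each case the Lie algebra is read off from germs at the identity of smooth paths, using that $val_q$ of the commutator of two elements of valuation $\ge1$ is again $\ge1$.

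Should one prefer a proof not routed through the barred formalism, the same three ingredients do the job directly. First, $\Psi(R)[[q]]$ is a Fr\"olicher algebra by Lemma~\ref{Psi-Rt}, and $\Psi_q(R)$, $G\Psi_q(R)$, $\D_q(R)$ are stable under the algebraic operations and under inversion by the valuation estimates in the proof of Lemma~\ref{Psi-bar} (via (\ref{val1}) and (\ref{val3})). Second, $q$-coefficientwise integration --- legitimate because $R$ is a regular vector space --- gives the vector-space regularity of $\Psi_q(R)$, $\mathfrak{g}\Psi_q(R)$ and $\mathfrak{d}_q(R)$. Third, $\D_q(R)=1+\bigoplus_{n\ge1}\A_n$ with $\A_n$ the space of $q^n$-homogeneous differential operators of $\d$-order $\le n$, so Theorem~\ref{regulardeformation} applies directly; for $G\Psi_q(R)$ one instead uses the split exact sequence $1\to Ker\left((\cdot)\vert_{q=0}\right)\to G\Psi_q(R)\to G_R\to 1$ with $G_R=1+\Psi^{-1}(R)$ regular by Lemma~\ref{Psi-1}, with $Ker\left((\cdot)\vert_{q=0}\right)=1+\bigl\{\sum_{n\ge1}q^nP_n:P_n\in\Psi^n(R)\bigr\}$ regular by Theorem~\ref{regulardeformation} (graded pieces $\A_n=\Psi^n(R)$), and then Theorem~\ref{exactsequence}.

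The one point that is not purely mechanical --- and it is already dealt with inside the proofs of Theorems~\ref{Triv-d} and~\ref{Psi-atreg} --- is that the grading fed to Theorem~\ref{regulardeformation} must begin in degree $1$, whereas $\mathfrak{g}\Psi_q(R)$ carries a nontrivial $q^0$-homogeneous part, namely $\Psi^{-1}(R)$, which has to be peeled off via the evaluation $(\cdot)\vert_{q=0}$ before the graded-deformation theorem applies; for $\D_q(R)$ this part is absent and the grading starts cleanly at $1$. Beyond this, there is no analytic obstacle: the Fr\"olicher setting removes all completeness and convergence issues, and the remaining work is the routine bookkeeping that the inequalities between $val_q$ and $\d$-order are preserved by the operations and that each graded piece $\A_n$ is a regular complete locally convex Fr\"olicher vector space with graded multiplication.
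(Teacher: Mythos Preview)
Your proposal is correct and runs parallel to the paper's own argument, though the mechanism differs slightly. You specialize the Bourbaki set-up to a single variable $q$ and re-read Lemmas~\ref{Psi-bar}, \ref{Psi-vreg} and Theorems~\ref{Triv-d}, \ref{Psi-atreg} in that one-variable case; the paper instead keeps the full infinite family $(t_n)$ and builds an explicit smooth algebra morphism $\overline{\Psi}(A_t)\to\Psi_q(R)$ by the substitution $t_n\mapsto q^n$, together with the slice $t_1\leftrightarrow q$ going back, and then transports the Fr\"olicher and regularity properties along these maps. Your route is arguably cleaner --- no morphism needs to be checked, only the observation that nothing in Section~3.3 uses infinitely many generators --- while the paper's route has the side benefit of producing an actual smooth morphism between the $t$- and $q$-pictures, which is what is used immediately afterwards to pass from $\tilde Z_\pm$ to the $q$-setting. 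Your alternative direct argument via Theorem~\ref{regulardeformation} and the split exact sequence is also fine and is exactly how the barred theorems were proved in the first place.
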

\begin{proof}
        We simply remark that by substituting $q^n$ for $t_n$, we obtain maps
        $$ \overline{\Psi}(A_t) \rightarrow \Psi_q(R)\; ,$$
        $$ G(\overline{\Psi}(A_t)) \rightarrow G\Psi_q(R)\; ,$$
        $$ G_{A_t} \rightarrow G_{R,q}\; ,$$   
        and $$ \overline{\D}(A_t) \rightarrow \D_q(R)\; .$$
        By straightforward computation on the coefficients of the series,
        the first map $ \overline{\Psi}(A_t) \rightarrow \Psi_q(R)$ is a
        (smooth) morphism of Fr\"olicher algebras which commutes with the  valuations
        $val_t$ and $val_q,$ and with integration along
        paths. Thus, identifying as formal variables {\em e.g.,} 
        $t_1$ with $q,$ we have slice maps 
$$ \overline{\Psi}(A_t) \leftarrow \Psi_q(R)\; ,$$
$$ G(\overline{\Psi}(A_t)) \leftarrow G\Psi_q(R)\; ,$$
$$ G_{A_t} \leftarrow G_{R,q}\; .$$ Pulling back the Fr\"olicher structure of 
$\overline{\Psi}(A_t)$ we see that  $\Psi_q(R)$ is a regular Fr\"olicher algebra. 
       The rest of the Lemma follows easily using similar arguments.
    \end{proof}

   Now we adapt some arguments of \cite{M3,M4} to our Fr\"olicher context. We remark
   that the operator $\tilde{L}$ we have considered so far is not, at this step
   of the proof, a solution to (\ref{lolo4}), but simply a $q$-deformation of an operator
   $L \in \d + \Psi^{-1}(A_t)$. Our construction of a true solution to the KP
   hierarchy satisfying a given initial condition $L_0$ is based on the observation that
   the operator $q \partial$ {\em is} a (stationary) solution to our deformed KP hierarchy 
   (\ref{lolo4}). 
    
   We consider the one-forms $Z_{q\d,+}$ and $Z_{q\d,-}$ defined as in (\ref{z+}) and
   (\ref{z-}) with $q\d$ instead of $\tilde{L}$. Then the equation
\begin{equation} \label{zcc+d} 
d\tilde{Z}_{q\d,+} + [\tilde{Z}_{q\d,+},\tilde{Z}_{q\d,+}] = 0
\end{equation}
obviously holds, while 
\begin{equation} \label{zcc-d} 
d\tilde{Z}_{q\d,-} - [\tilde{Z}_{q\d,-},\tilde{Z}_{q\d,-}] = 0
\end{equation}
is a trivial identity. Equation (\ref{zcc+d}) allows us to apply the Ambrose-Singer
theorem \ref{Hslice} on the trivial principal bundle $\K^\infty \times 
G\Psi_q(R)$. More precisely, we choose an initial condition 
$L_0 \in \d + \Psi^{-1}(R)$ and we 
choose $S_0 \in G_{R,q}$ such that  $q L_0 = S_0 (q \d) S_0^{-1}.$
Theorem \ref{Hslice} implies that there exists a unique section 
\[
\tilde{U}(t) : \K^\infty \rightarrow \K^\infty \times G\Psi_q(R)
\]
such that
\begin{equation}\label{flat}
   d\,\tilde{U}\cdot \tilde{U}^{-1} = \tilde{Z}_{q\d,+} 
\end{equation}
with initial condition $\tilde{U}(0) = S_0^{-1} \cdot Y_0$, in which $Y_0=1$. This smooth
section $\tilde{U}$ is a ``$q$-deformation" of the dressing operator present in our first
proof, and it replaces Mulase's formal operator $U$ appearing in \cite[Theorem 1.4]{M3}.

\smallskip

Now we use the slice maps built in
the proof of Lemma \ref{44} and we obtain $\overline{U}(t) \in G(\overline{\Psi}(A_t))$
satisfying $\overline{U}(0)=\overline{S_0}^{-1}$, in which $\overline{S_0}$ is the lift
of $S_0$. We use our factorization result (Proposition 3.19) to obtain unique 
$\overline{S}(t) \in G_{A_t}$ and $\overline{Y}(t) \in \overline{\D}_{A_t}^\times$ 
satisfying the equation
$$\overline{U}(t) = \overline{S}(t)^{-1} \cdot \overline{Y}(t) \; ,$$
and the initial conditions $\overline{S}(0) = \overline{S}_0$, and $\overline{Y}(0) = 1$. 
We return to $G\Psi_q(R)$ (using again the maps appearing in the proof of Lemma 4.4), and 
we obtain the unique factorization
\begin{equation} \label{uuu}
\tilde{U}(t) = \tilde{S}(t)^{-1} \cdot \tilde{Y}(t)
\end{equation}
with $\tilde{S}(t) \in G_{R,q}$, $\tilde{Y}(t) \in \D_q(R)$, $\tilde{S}(0)=S_0$, and 
$\tilde{Y}(0)=1$. 
In order to finish this second proof, we set  
\begin{equation} \label{newl}
\tilde{L} =  \tilde{S} (q \d) \tilde{S}^{-1} \in C^\infty(T,\Psi_q(R)) 
\end{equation}
(so that $\tilde{L}$ satisfies the $q$-deformed initial condition $\tilde{L}(0) = q L_0$) 
and we also {\em define} the smooth one-forms  
\begin{eqnarray}
\tilde{Z}_+ & = & \tilde{S} \tilde{Z}_{q\d,+} \tilde{S}^{-1} + 
                   d \tilde{S}\,.\,\tilde{S}^{-1}\; , \label{defz+} \\
\tilde{Z}_- & = & d \tilde{S}\,.\,\tilde{S}^{-1}\; .   \label{defz-}
\end{eqnarray}
The one-form $\tilde{Z}_{q\d,+}$ satisfies (\ref{zcc+d}) and $\tilde{Z}_+$ is obtained 
from $\tilde{Z}_{q\d,+}$ via a gauge transformation, so that $\tilde{Z}_+$ satisfies the 
zero curvature equation (\ref{zcc+}). Also, $\tilde{Z}_-$ satisfies (\ref{zcc-}), since  
it is ``pure gauge". 

\smallskip
 
By construction $\tilde{Z}_-$ takes values in the Lie algebra of $G_{R,q}$, and we claim
that $\tilde{Z}_+$ takes values in the Lie algebra of $\D_q(R)$. Indeed, we compute using
Equation (\ref{uuu}):
\[
d \tilde{U} =  d ( \tilde{S}^{-1} \cdot \tilde{Y} )
=  - \tilde{S}^{-1} d \tilde{S}\, \tilde{S}^{-1} \tilde{Y} + \tilde{S}^{-1}d \tilde{Y}\; ; 
\]
multiplying by $\tilde{U}^{-1}$ on the right and using (\ref{flat}) we obtain
\[
\tilde{Z}_{q\d,+} = 
- \tilde{S}^{-1}\, d \tilde{S} + \tilde{S}^{-1} d \tilde{Y}\, \tilde{U}^{-1} \; .
\]
Then
\[
\tilde{S} \tilde{Z}_{q\d,+} \tilde{S}^{-1} = 
- d \tilde{S} \tilde{S}^{-1}  + d \tilde{Y}\,\tilde{U}^{-1} \tilde{S}^{-1}  \; .
\]
Adding $d \tilde{S}\, \tilde{S}^{-1}$ to both sides of this equality and using
(\ref{defz+}), we obtain that $\tilde{Z}_+$ can be written in terms of the slice
$\tilde{Y} \in C^\infty (\K^\infty , \D_q(R))$ as
$$ \tilde{Z}_+ = d \tilde{Y} \cdot \tilde{Y}^{-1} \; ,$$ 
which proves our claim. 

\smallskip

We are ready to check that $\tilde{L}$, as defined in (\ref{newl}), satisfies the 
$q$-deformed KP hierarchy. Indeed:
\begin{eqnarray*}
\tilde{Z}_+ - \tilde{Z}_- & = & \tilde{S} \tilde{Z}_{q\d,+} \tilde{S}^{-1} 
   \; = \; \sum_{n=1}^\infty q^n (\tilde{S} q \d \tilde{S}^{-1})^n d t_n 
   \; = \; \sum_{n=1}^\infty q^n \tilde{L}^n  d t_n  \\
   & = & \sum_{n=1}^\infty q^n \left[ (\tilde{L}^n)_+ - (-\tilde{L}^n)_- \right] d t_n\; ,
\end{eqnarray*}
so that
$$
\tilde{Z}_{+} = \sum_{n = 1}^{+\infty} q^n (\tilde{L}^n)_+ dt_n \; 
\mbox{ and } 
\; \tilde{Z}_{-} = -\sum_{n = 1}^{+\infty} q^n (\tilde{L}^n)_- dt_n\; .
$$
Now, the operator $q \d$ is covariantly constant with respect to $\tilde{Z}_{q\d,+}$, this is,
$$
d (q \d)  = [\tilde{Z}_{q\d,+} , q \d] \; ,
$$
and therefore applying a gauge transformation with gauge $\tilde{S}$ we obtain
$$
d (\tilde{S} (q \d) \tilde{S}^{-1})  = [\tilde{S} \tilde{Z}_{q\d,+} \tilde{S}^{-1} + 
                   d \tilde{S}\,.\,\tilde{S}^{-1} , \tilde{S} (q \d)\tilde{S}^{-1}] \; ,
$$
this is,
$$
d (\tilde{L})  =  [ \tilde{Z}_{+} , \tilde{L}] =  \left[ \sum_{n = 1}^{+\infty} q^n (\tilde{L}^n)_+ dt_n \, , \tilde{L} \right] 
  =  [ \tilde{Z}_{\tilde{L},+} , \tilde{L}] \; .
$$
This equation says that the operator $\tilde{L}$ defined in (\ref{newl}) satisfies the first  
equation appearing in (\ref{ast!!}), which we know is equivalent to the deformed KP hierarchy 
(\ref{lolo4}). In an analogous fashion, using that $q\d$ is covariantly constant with respect to 
$\tilde{Z}_{q\d,-} = 0$, we can check that
$$
d (\tilde{L}) =  \left[ - \sum_{n = 1}^{+\infty} q^n (\tilde{L}^n)_- dt_n \, , \tilde{L} \right] 
  =  [ \tilde{Z}_{\tilde{L},-} , \tilde{L}] \; ,
$$
this is, $\tilde{L}$ given by (\ref{newl}) also satisfies the second equation appearing in 
(\ref{ast!!}).

\smallskip

It remains to check that our solution is unique. The only arbitrary choice made in this proof 
is the choice of $S_0 \in G_{R,q}$ such that  $q L_0 = S_0 (q \d) S_0^{-1}.$ The fact that 
$\tilde{L}$ is independent on $S_0$ follows from the following argument motivated by \cite{M4}: 

We write 
$$
\tilde{L} = \tilde{S} ( q \d ) \tilde{S}^{-1} = \tilde{Y} \tilde{U}^{-1} (q \d) \tilde{U} 
\tilde{Y}^{-1}
$$
and we claim that $\tilde{U}^{-1} (q \d) \tilde{U}$ is constant. Indeed, the 
equation 
$$\tilde{Z}_{q\d,+}= \sum_{n=1}^{+\infty} q^n (q\d)^n d t_n = 
d \tilde{U} \cdot \tilde{U}^{-1} = \sum_{n=1}^{+\infty} \frac{\d \tilde{U}}{\d t_n} 
\tilde{U}^{-1} d t_n$$ implies  
$$\frac{\d \tilde{U}}{\d t_n} \tilde{U}^{-1} = q^{2n} \d^n$$
and so
$$\frac{\d}{\d t_n} \left[ \tilde{U}^{-1} (q \d) \tilde{U} \right] = 
- \tilde{U}^{-1} \frac{\d \tilde{U}}{\d t_n}\tilde{U}^{-1} (q \d) \tilde{U} + 
\tilde{U}^{-1} q \d \frac{\d \tilde{U}}{\d t_n} = 0 \; .
$$ 
It follows that 
$$
\tilde{L} = \tilde{Y} \tilde{U}^{-1}(0)\, (q \d)\, \tilde{U}(0) \tilde{Y}^{-1} = \tilde{Y} S_0 
\,(q \d)\, S_0^{-1} \tilde{Y}^{-1} \; ,
$$
that is, our solution $\tilde{L}$ given by (\ref{newl}) can be also written as
$$\tilde L = \tilde Y (q L_0) \tilde Y^{-1} \; ,$$
and this expression shows that $\tilde L$ does not depends on $S_0$.

\smallskip

Finally, we consider the infinite jets
$$S_q = j^\infty(\tilde S)\; , $$
$$ Y_q =  j^\infty(\tilde Y) \; , $$
and $$ L_q =  j^\infty(\tilde L) \; .$$

\noindent By classical properties of infinite jets\footnote{Jets are considered in 
\cite{W}, in the classical paper \cite{BR} and in the more recent review \cite{R-ams}; we note 
that in \cite{W} the author introduces jets within the category $DS$ of {\em differential spaces}, but it 
is known that Fr\"olicher spaces form a subcategory of $DS$, see \cite{Ma2013}. 
Moreover, as explained in Section 2.5 (see also \cite{Ma2013} and \cite{F}) the category of 
Fr\"olicher spaces is cartesian closed, complete and cocomplete, and so infinite jets in Fr\"olicher spaces can 
be defined by adapting the constructions of \cite{BR,R-ams}.}, all relations 
above involving relations on $\tilde S,$ $\tilde Y$ and $\tilde L$
remain valid for $S_q,  Y_q$ and $ L_q,$ and the $t-$monomials of
the infinite jet series are polynomials in $q.$ Then, they are
$q-$smooth, where $q\in \K$ is no longer a formal variable.
Setting

$$ (S,Y,L) = \lim_{q \rightarrow 1} (S_q,Y_q,L_q)\; ,$$
we recover Theorem \ref{KPcentral}.

\section{Perturbed solutions and Hamiltonian approach} \label{S5}

\subsection{Hamiltonian formulation of KP}

We consider a regular Fr\"olicher Lie algebra of formal
pseudo-differential operators $\Psi(A)$ in which $A$ is an
arbitrary Fr\"olicher algebra equipped with a smooth derivation $\d$. First of all, we
recall that if $P\; = \sum_{- \infty < \nu \leq N} a_{\nu} \,
\partial^{\nu} \in \Psi(A)$, the {\em residue} of $P$ (see \cite{Ad}) is
\[
res (P) = a_{-1} \; .
\]
The most important property of $res$ (see Adler, \cite{Ad}; proofs
also appear in \cite{D} and \cite[Chapter 5]{O}) is the following:

\begin{Lemma}  \label{traceprop}
    If $P,Q \in \Psi(A)$, then $res([P,Q]) = \d(f)$ for some $f \in A$.
\end{Lemma}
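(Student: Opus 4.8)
The plan is to reduce the statement to the case where $P$ and $Q$ are monomials $a\partial^m$, $b\partial^n$ with $a,b\in A$ and $m,n\in\Z$, and then to combine an ``integration by parts'' identity in $A$ with an elementary binomial identity. First I would write $P=\sum_\nu a_\nu\partial^\nu$ and $Q=\sum_\mu b_\mu\partial^\mu$, with $a_\nu=0$ for $\nu>N_P$ and $b_\mu=0$ for $\mu>N_Q$, and use the composition rule
\[ (a_\nu\partial^\nu)\circ(b_\mu\partial^\mu)=\sum_{k\geq 0}\binom{\nu}{k}a_\nu\,\partial^k(b_\mu)\,\partial^{\nu+\mu-k} \]
(the generalized binomial coefficient being meant when $\nu<0$). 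The coefficient of $\partial^{-1}$ in $PQ$ only receives contributions from triples $(\nu,\mu,k)$ with $k=\nu+\mu+1\geq 0$; since also $\nu\leq N_P$ and $\mu\leq N_Q$, this confines $(\nu,\mu)$ to a finite set. Hence $\operatorname{res}([P,Q])$ is a \emph{finite} $\K$-linear combination of terms $\operatorname{res}([a_\nu\partial^\nu,b_\mu\partial^\mu])$, and since $\partial(A)$ is a $\K$-subspace of $A$ it suffices to prove the claim for a single pair of monomials.

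For $P=a\partial^m$, $Q=b\partial^n$ the composition rule gives, with $\ell:=m+n+1$,
\[ \operatorname{res}(a\partial^m\circ b\partial^n)=\binom{m}{\ell}\,a\,\partial^\ell(b)\ \text{ if }\ell\geq 0,\qquad \operatorname{res}(a\partial^m\circ b\partial^n)=0\ \text{ if }\ell<0, \]
so for $\ell\geq 0$ one has $\operatorname{res}([a\partial^m,b\partial^n])=\binom{m}{\ell}a\,\partial^\ell(b)-\binom{n}{\ell}b\,\partial^\ell(a)$ (and $0$ when $\ell<0$). Next I would record the telescoping identity
\[ x\,\partial^j(y)-(-1)^j(\partial^j x)\,y=\partial\!\left(\sum_{i=0}^{j-1}(-1)^i\,\partial^i(x)\,\partial^{\,j-1-i}(y)\right),\qquad x,y\in A,\ j\geq 0, \]
which is verified by expanding the right-hand side via the Leibniz rule and cancelling consecutive terms. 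Since $A$ is commutative, $(\partial^j x)\,y=y\,\partial^j(x)$, so this identity says $b\,\partial^\ell(a)\equiv(-1)^\ell\,a\,\partial^\ell(b)\pmod{\partial(A)}$. Substituting, $\operatorname{res}([a\partial^m,b\partial^n])\equiv\bigl(\binom{m}{\ell}-(-1)^\ell\binom{n}{\ell}\bigr)\,a\,\partial^\ell(b)\pmod{\partial(A)}$, and the coefficient vanishes by the upper-negation identity $\binom{n}{\ell}=(-1)^\ell\binom{\ell-n-1}{\ell}$ together with $\ell-n-1=m$. Thus $\operatorname{res}([a\partial^m,b\partial^n])\in\partial(A)$, and summing the finitely many monomial contributions proves the lemma.

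I expect the only genuine subtlety to be the bookkeeping in the reduction step: one must justify that the coefficient of $\partial^{-1}$ in $[P,Q]$ really is a finite $\K$-linear combination of residues of monomial commutators, which is exactly where the relation $k=\nu+\mu+1$ combined with $\nu\leq N_P$, $\mu\leq N_Q$ is used. The remaining pieces — the symbol composition formula, the integration-by-parts identity, and the binomial identity — are all elementary, the only mild care being the consistent use of generalized binomial coefficients when $m$ or $n$ is negative.
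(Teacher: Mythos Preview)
Your argument is correct: the reduction to monomials via the finiteness of contributing pairs $(\nu,\mu)$, the residue computation, the telescoping integration-by-parts identity, and the upper-negation binomial identity $\binom{n}{\ell}=(-1)^\ell\binom{m}{\ell}$ with $m=\ell-n-1$ all check out. The paper itself does not prove this lemma but simply cites Adler, Dickey, and Olver; your proof is essentially the standard one found in those references.
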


Let us suppose that there exists a ${\mathbb K}$-linear function
$I : A \rightarrow B$, in which $B$ is an unitary Fr\"olicher
algebra, such that
\begin{equation}  \label{condition3}
I (\partial u) = 0 \mbox{ for all } u \in A \: .
\end{equation}
We define the  {\em trace form} on $\Psi (A)$ by
\begin{equation}
\mbox{Trace}(P) = I ( {\rm res}(P) ) \; \mbox{ for all } \; P
\in \Psi(A) \;  . \label{trace}
\end{equation}
{\em Hereafter we assume}, generalizing the standard case in which
$A = C^\infty(S^1,\mathbb{K})$ and the map $I$ is definite
integration, that the trace form (\ref{trace}) satisfies the
non-degeneracy condition
\begin{equation}\label{product}
\begin{array}{l}
\mbox{ The pairing } \left<\; ,  \, \right> : \Psi(A) \times
\Psi(A) \rightarrow B \mbox{ given
    by }  \left< P , Q \right> = \mbox{Trace}(PQ) \\
\mbox{ is non-degenerate, that is, } \left< P , Q \right> =0
\mbox{ for all } Q \in \Psi(A) \mbox{ implies } P = 0\; .
\end{array}
\end{equation}

Condition (\ref{condition3}) implies that we can ``integrate by parts'', 
that is, the identity $I( \d(u) \cdot v) = - I (u \cdot \d(v))$ holds for 
all $u,v \in A$.
This fact is crucial for the proof of the following standard lemma:
\begin{Lemma}
    The $B$-valued pairing $ \left<\; ,  \, \right>$ defined in $(\ref{product})$
    is $\mathbb{K}$-bilinear, symmetric, non-degenerate, and it satisfies
    \begin{equation} \label{47*}
    \left< [P,Q] , S \right> \; = \; \left< [S, P] , Q \right>
    \end{equation}
    for all $P,Q,S \in \Psi(A)$.
\end{Lemma}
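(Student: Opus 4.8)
The plan is to check the four asserted properties in turn; the only genuine input beyond formal manipulation is Lemma~\ref{traceprop} together with the ``integration by parts'' hypothesis~(\ref{condition3}). \textbf{Bilinearity} is immediate: the map $(P,Q)\mapsto PQ$ is $\mathbb{K}$-bilinear since $\Psi(A)$ is a $\mathbb{K}$-algebra, $\mathrm{res}$ is the $\mathbb{K}$-linear projection onto the coefficient of $\partial^{-1}$, and $I:A\to B$ is $\mathbb{K}$-linear by assumption; composing these gives $\mathbb{K}$-bilinearity of $\langle P,Q\rangle=I(\mathrm{res}(PQ))$. \textbf{Non-degeneracy} is nothing but hypothesis~(\ref{product}), so there is nothing to prove there.

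For \textbf{symmetry} I would argue as follows. Given $P,Q\in\Psi(A)$, both $PQ$ and $QP$ lie in $\Psi(A)$ because $\Psi(A)$ is closed under multiplication, and by Lemma~\ref{traceprop} applied to $[P,Q]=PQ-QP$ there is $f\in A$ with $\mathrm{res}(PQ)-\mathrm{res}(QP)=\mathrm{res}([P,Q])=\partial(f)$. Applying $I$ and using~(\ref{condition3}),
\[
\langle P,Q\rangle-\langle Q,P\rangle=I(\mathrm{res}(PQ))-I(\mathrm{res}(QP))=I(\partial f)=0 ,
\]
so $\mathrm{Trace}(PQ)=\mathrm{Trace}(QP)$; equivalently, the trace form is ``cyclic''.

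Finally, for the invariance identity~(\ref{47*}): grouping factors and applying the cyclicity just obtained twice, $\mathrm{Trace}(UVW)=\mathrm{Trace}((UV)W)=\mathrm{Trace}(W(UV))=\mathrm{Trace}((WU)V)=\mathrm{Trace}(V(WU))$, so $\mathrm{Trace}$ is invariant under any cyclic permutation of three factors. Expanding both sides of~(\ref{47*}) by bilinearity and the definition of the bracket gives $\langle[P,Q],S\rangle=\mathrm{Trace}(PQS)-\mathrm{Trace}(QPS)$ and $\langle[S,P],Q\rangle=\mathrm{Trace}(SPQ)-\mathrm{Trace}(PSQ)$; since $\mathrm{Trace}(SPQ)=\mathrm{Trace}(PQS)$ and $\mathrm{Trace}(PSQ)=\mathrm{Trace}(QPS)$ by cyclic invariance, the two expressions coincide, which is~(\ref{47*}). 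There is essentially no analytic obstacle here: everything reduces to Lemma~\ref{traceprop}, and the only point requiring care is the routine bookkeeping that all triple products $PQS$, etc., are genuine elements of $\Psi(A)$ (so that $\mathrm{res}$ and $I$ may be applied) and that the computations take place in $B$ viewed merely as a $\mathbb{K}$-module, so that no commutativity of $B$ is invoked. The ``hard part'', such as it is, is thus entirely contained in the already established Lemma~\ref{traceprop}.
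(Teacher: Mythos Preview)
Your proof is correct and follows exactly the route the paper indicates: the lemma is stated as ``standard'' with no explicit proof given, the paper only remarking that the integration-by-parts consequence of condition~(\ref{condition3}) is ``crucial''. Your argument makes this precise by combining Lemma~\ref{traceprop} with~(\ref{condition3}) to obtain $\mathrm{Trace}(PQ)=\mathrm{Trace}(QP)$, from which symmetry and the cyclic invariance needed for~(\ref{47*}) follow by the elementary bookkeeping you carry out; this is precisely the standard computation the paper has in mind.
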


Now we consider the Fr\"olicher Lie algebra $\Psi(A)$ and we
define the {\em
    regular dual space} $$\Psi(A)' = \{ \mu \in L(\Psi(A),B) : \mu =
\left< P , \cdot \right> \mbox{ for some } P \in \Psi(A) \}\; .$$
These definitions allow us to adapt standard results of
Hamiltonian mechanics (see for instance \cite{MR} or, the recent
summary \cite{ER2013} and references therein) to our
infinite-dimensional context as follows:

\smallskip

Inspired by \cite{GD1981,KO1990}, we let $f : \Psi(A)' \rightarrow
B$ be a polynomial function of the type

\begin{equation}
\label{Pol}f(\mu) = \sum_{k=0}^n a_k Trace(P^k)
\end{equation}
with $\mu = \left< P,. \right>.$ The {\em functional derivative of
f at} $\mu \in \Psi(A)'$ is the unique element $\delta f/\delta
\mu$ of $\Psi(A)$ determined by
\begin{equation}
\left< \nu \; \left| \; \frac{\delta \, f}{\delta \, \mu} \right> \right.
\; = d\, f_\mu ( Q )  \label{gradiente}
\end{equation}
for all $\nu = \left<Q,.\right> \in \Psi(A)'.$ If $f
=Trace\left(\sum_{k=0}^n a_k P^k\right),$ direct computations show
that $\frac{\delta f}{\delta \mu}$ is given by (algebraic)
derivation of the polynomials (\ref{Pol}) in agreement with \cite{D}.

\begin{Definition} \label{lpb}
    The $B$-valued Lie-Poisson bracket on the regular dual space
    $\Psi(A)'$ is defined as follows: for all polynomial functions $F, G :
    \Psi(A)' \rightarrow B$ of the type (\ref{Pol}) and $\mu \in \Psi(A)'$,
    \begin{equation}
    \{F \, , \, G \}(\mu) = \left< \mu \, \; \left|\; \left[
    \frac{\delta \, F}{\delta \, \mu} \, ,
    \, \frac{\delta \, G}{\delta \, \mu} \right] \right> \right. \; . \label{lie-poisson}
    \end{equation}
\end{Definition}

Now we fix a polynomial function $H : \Psi(A)' \rightarrow B$ with $H$ as
in (\ref{Pol}). Then, Equation (\ref{lie-poisson}) determines a
derivation $X_H$ on polynomial functions $F : \Psi(A)' \rightarrow
B$ via
\begin{equation} \label{vf}
X_H (\mu) \cdot F = \{ F , H \}(\mu)  = \left< \mu \; \left| \;
\left[ \frac{\delta \, F}{\delta \, \mu} \, ,
\, \frac{\delta \, H}{\delta \, \mu} \right] \right> \right.
\end{equation}
for all $\mu \in \Psi(A)'$.
\begin{Lemma} \label{le1}
    Let $H : \Psi(A)' \rightarrow B$ be a smooth function on
    $\Psi(A)'$. Then, with the previous notations, Hamilton's equations
    \begin{equation} \label{he}
    \frac{d }{dt} (F\circ \mu) = X_H(\mu) \cdot F
    \end{equation}
    for $\mu(t) = \left< P(t) , \cdot \right> \in \Psi(A)'$
    can be written as equations
    on $\Psi(A)$ as follows:
    \begin{equation} \label{lax1}
    \frac{d \, P}{d t} = \left[ \frac{\delta \, H}{\delta \, \mu} \, , \, P \right] \; .
    \end{equation}
\end{Lemma}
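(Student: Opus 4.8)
The plan is to prove the equivalence of (\ref{he}) and (\ref{lax1}) by pairing both sides of (\ref{he}) against functional derivatives and then invoking the non-degeneracy condition (\ref{product}). First I would unwind the left-hand side of (\ref{he}): the curve $t\mapsto\mu(t)=\left\langle P(t),\cdot\right\rangle$ has velocity $\dot P$ under the identification $\Psi(A)'\simeq\Psi(A)$, so the chain rule together with the defining relation (\ref{gradiente}) of the functional derivative gives
\[
\frac{d}{dt}\bigl(F\circ\mu\bigr)=dF_\mu(\dot P)=\left\langle \dot P,\frac{\delta F}{\delta\mu}\right\rangle=\mathrm{Trace}\!\left(\dot P\cdot\frac{\delta F}{\delta\mu}\right).
\]

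Next I would rewrite the right-hand side. By the definition (\ref{vf}) of $X_H$, the definition (\ref{lie-poisson}) of the Lie--Poisson bracket, and symmetry of the pairing,
\[
X_H(\mu)\cdot F=\left\langle \mu,\left[\frac{\delta F}{\delta\mu},\frac{\delta H}{\delta\mu}\right]\right\rangle=\left\langle \left[\frac{\delta F}{\delta\mu},\frac{\delta H}{\delta\mu}\right],P\right\rangle .
\]
Using the ad-invariance identity (\ref{47*}) twice (first to bring $P$ into the commutator, then to reorder the arguments) this becomes
\[
X_H(\mu)\cdot F=\left\langle \left[\frac{\delta H}{\delta\mu},P\right],\frac{\delta F}{\delta\mu}\right\rangle=\mathrm{Trace}\!\left(\left[\frac{\delta H}{\delta\mu},P\right]\cdot\frac{\delta F}{\delta\mu}\right).
\]
Comparing the two displays, (\ref{he}) holds for every admissible $F$ if and only if
\[
\left\langle \dot P-\left[\frac{\delta H}{\delta\mu},P\right],\frac{\delta F}{\delta\mu}\right\rangle=0
\]
for all such $F$.

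The final step is to deduce (\ref{lax1}), i.e. $\dot P=[\delta H/\delta\mu,P]$, from this vanishing; the reverse implication is then obtained by reading the chain of equalities backwards. For the forward direction I would enlarge, if necessary, the class of admissible observables to include the linear functionals $\mu\mapsto\mathrm{Trace}(PQ)$ for fixed $Q\in\Psi(A)$, whose functional derivative is exactly $Q$; then the functional derivatives $\delta F/\delta\mu$ exhaust $\Psi(A)$ and the non-degeneracy hypothesis (\ref{product}) forces $\dot P-[\delta H/\delta\mu,P]=0$. I expect this to be the only genuinely delicate point: one must ensure that the observables for which Hamilton's equations are imposed have enough functional derivatives to activate (\ref{product}), equivalently that passing to the \emph{regular} dual $\Psi(A)'$ rather than a full topological dual loses no information here. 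Everything else is a formal consequence of (\ref{gradiente}), (\ref{vf}), (\ref{lie-poisson}) and the invariance identity (\ref{47*}).
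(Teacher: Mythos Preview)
Your proof is correct and follows essentially the same route as the paper: compute both sides of (\ref{he}) as pairings $\left\langle\,\cdot\,,\frac{\delta F}{\delta\mu}\right\rangle$ via the chain rule/definition (\ref{gradiente}) on the left and via (\ref{vf}), (\ref{lie-poisson}) and the ad-invariance (\ref{47*}) on the right, then compare. The only cosmetic difference is that the paper carries out the left-hand computation explicitly for the polynomial observables $F(\mu)=\mathrm{Trace}\bigl(\sum_k a_k P^k\bigr)$ of (\ref{Pol}) rather than invoking (\ref{gradiente}) abstractly, and it does not spell out the non-degeneracy step you flag at the end.
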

\begin{proof}
Let $P \in C^\infty(\R, \Psi(A))$ and let 
$\mu = \left<P,.\right> \in C^\infty(\R, \Psi(A)').$
We set up the differential equation
    \[
    \frac{d }{dt} (F\circ \mu) = X_H(\mu) \cdot F
    \]
for $\mu(t) = \left< P(t) , \cdot \right> \in \Psi(A)'$ and
$F$ as above. If we set $F(\mu) =Trace\left(\sum_{k=0}^n a_k P^k\right),$ we can compute 
the left and the right side of (\ref{he}) separately. Using (\ref{vf}) and (\ref{47*}) we 
obtain
\begin{eqnarray*}
\frac{d }{dt} (F\circ \mu) &=& \frac{d }{dt}Trace\left(\sum_{k=0}^n a_k P^k\right)\\ 
  &=& \sum_{k=0}^n a_kTrace \left(\frac{d P^k}{dt}\right)\\
  &=& \sum_{k=0}^n a_kTrace \left(kP^{k-1}\frac{d P}{dt}\right)\\ 
  &=& \left<  \frac{d P}{dt} ,  \frac{\delta \, F}{\delta \, \mu} \right> 
\end{eqnarray*}
    and
\begin{eqnarray*}
X_H(\mu) \cdot F &=& 
\left< P,\left[ \frac{\delta \, F}{\delta \, \mu} \, , \, 
                                   \frac{\delta \, H}{\delta \, \mu} \right] \right>\\
&=& \left<\left[ P,\frac{\delta \, H}{\delta \, \mu}\right],
        \frac{\delta \, F}{\delta \, \mu}   \right> \; ,
\end{eqnarray*}
    and the result follows.
\end{proof}
This proof suggests that we can identify $X_H(\mu)$ with the germ
$\frac{dP}{dt}.$ The equations for the integral curves of $X_H$
are Hamilton's equations on $\Psi(A)'$ corresponding to the
Hamiltonian function $H$.

Now we work in the context of $r$-matrices, see \cite{STS}. The
decomposition $\Psi(A) = \Psi^{-1}(A) \oplus \mathcal{D}_A$, in
which $\mathcal{D}_A =  \Psi(A) - \Psi^{-1}(A)$, allows us to
consider a new Lie bracket on the regular dual space $\Psi(A)'$
given by
\begin{equation}
[ P , Q ]_{0} = [ P_{+} , Q_{+} ] - [ P_{-} , Q_{-} ] \; , \label{r-bracket}
\end{equation}
in which $P_{\pm} = \pi_{\pm}(P)$, $Q_{\pm} = \pi_{\pm}(Q)$, and
$\pi_\pm$ are the projection maps from $\Psi(A)$ onto
$\mathcal{D}_A$ and $\mathcal{ I}_A$ respectively. This bracket
determines a new Poisson structure $\{ \, , \, \}_{0}$ on
$\Psi(A)'$, simply by replacing the original Lie product for $[\;
,\;]_0$ in (\ref{lie-poisson}). Using again the non-degenerate
pairing (\ref{product}) we obtain the following version of Lemma
\ref{le1}:

\begin{Lemma}   \label{adler2}
Let $H : \Psi(A)' \rightarrow B$ be a smooth function on $\Psi(A)'$ such that
    \begin{equation}
    \left< \, \mu \; \left| \; \left[ \frac{\delta H}{\delta \mu} \, ,
    \, \cdot \, \right] \right> \right. \, = 0   \quad \quad \mbox{
        for all } \mu \in \Psi(A)' \; .
    \label{ad1}
    \end{equation}
    Then, as equations on $\Psi(A)$, the Hamiltonian equations of motion
    with respect to the $\{ \, , \, \}_{0}$ Poisson structure of
    $\Psi(A)'$ are
    \begin{equation}
    \frac{d\,P}{d\,t} = 
\left[ \left( \frac{\delta H}{\delta \mu} \right)_{+} \, , \, P \right] \; .    
\label{lax2}
    \end{equation}
\end{Lemma}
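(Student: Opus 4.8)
The plan is to reduce the statement, exactly as in Lemma~\ref{le1}, to an identity between two computations of the two sides of Hamilton's equation (\ref{he}), the only new work being the algebra of the modified bracket $[\,\cdot\,,\,\cdot\,]_0$ under the hypothesis (\ref{ad1}). Fix a polynomial function $F$ of the type (\ref{Pol}), write $f=\delta F/\delta\mu$ and $h=\delta H/\delta\mu$, and fix a smooth curve $\mu(t)=\langle P(t),\cdot\rangle$. The left-hand side of (\ref{he}) is handled verbatim as in Lemma~\ref{le1}: differentiating $F(\mu)=\mathrm{Trace}\!\big(\sum_k a_kP^k\big)$ and using cyclicity and $\mathbb{K}$-bilinearity of the trace pairing gives $\frac{d}{dt}(F\circ\mu)=\big\langle\tfrac{dP}{dt},f\big\rangle$.

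The first genuinely new step is to expand the right-hand side. Replacing $[\,\cdot\,,\,\cdot\,]$ by $[\,\cdot\,,\,\cdot\,]_0$ in (\ref{lie-poisson}) and using (\ref{r-bracket}),
\[
X_H(\mu)\cdot F \;=\; \langle P,[f_+,h_+]\rangle-\langle P,[f_-,h_-]\rangle,
\]
with $f_\pm=\pi_\pm(f)$ and $h_\pm=\pi_\pm(h)$. The main obstacle, and the heart of the proof, is to collapse these two summands into a single commutator. I would argue as follows. By (\ref{47*}) one has $\langle P,[h,Q]\rangle=\langle[P,h],Q\rangle$ for every $Q$, so the hypothesis (\ref{ad1}) together with the non-degeneracy (\ref{product}) forces $[P,h]=0$, equivalently $[h,P]=0$. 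Applying (\ref{47*}) again, $\langle P,[f_-,h]\rangle=\langle[h,P],f_-\rangle=0$, so that $\langle P,[f_-,h_-]\rangle=-\langle P,[f_-,h_+]\rangle$. Substituting back and using $[f_+,h_+]+[f_-,h_+]=[f,h_+]$,
\[
X_H(\mu)\cdot F \;=\; \langle P,[f_+,h_+]\rangle+\langle P,[f_-,h_+]\rangle \;=\; \langle P,[f,h_+]\rangle \;=\; \langle[h_+,P],f\rangle,
\]
the last equality once more by (\ref{47*}) and cyclicity.

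Finally, comparing with the computation of the left-hand side, Hamilton's equation (\ref{he}) becomes $\big\langle\tfrac{dP}{dt},f\big\rangle=\langle[h_+,P],f\rangle$ as $f$ runs over functional derivatives of polynomial functions; reading this off as an evolution equation for the germ $\tfrac{dP}{dt}$ on $\Psi(A)$, exactly as is done for (\ref{lax1}) in Lemma~\ref{le1}, gives $\tfrac{dP}{dt}=\big[(\delta H/\delta\mu)_+,P\big]$, which is (\ref{lax2}). The only delicate point is the middle step: one must verify that (\ref{ad1}) genuinely yields $[h,P]=0$ and then organise the four-term expansion so that only the $h_+$-part survives; everything else is the same trace-form calculus already used for Lemma~\ref{le1}, so no further obstacle is expected.
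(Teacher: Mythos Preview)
Your argument is correct and is precisely the approach the paper has in mind: the paper does not write out a proof of Lemma~\ref{adler2} but introduces it as ``the following version of Lemma~\ref{le1}'' obtained ``using again the non-degenerate pairing~(\ref{product})'', i.e.\ by rerunning the computation of Lemma~\ref{le1} with $[\,\cdot\,,\,\cdot\,]_0$ in place of $[\,\cdot\,,\,\cdot\,]$ and exploiting~(\ref{ad1}). Your use of non-degeneracy to deduce $[h,P]=0$ from~(\ref{ad1}) and the subsequent reduction of the $[\,\cdot\,,\,\cdot\,]_0$ expansion to $\langle[h_+,P],f\rangle$ is exactly the intended calculation.
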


\smallskip

We now use some specific functions $H$. Let us recall the
following results (see for example \cite{D} or the more recent
review \cite{ER2013}):

\begin{Proposition} \label{casimir}
    We define the functions $\displaystyle H_{k}(L) = Trace \left(
    (L^{k}) \right)$,
    $k=1,2,3,\cdots ,$ for $L \in \Psi(A)$. Then, $\displaystyle
    \frac{\delta H_{k}}{\delta L} = k L^{k-1}$. In particular, the functions $H_k$ satisfy 
    $(\ref{ad1})$.
\end{Proposition}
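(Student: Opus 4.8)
The plan is to reduce the whole statement to the cyclicity of the trace form. Fix $\mu = \left< L, \cdot \right> \in \Psi(A)'$, and for an arbitrary $\nu = \left< Q, \cdot \right> \in \Psi(A)'$ compute the directional derivative $d(H_k)_\mu(Q)$ appearing on the right-hand side of (\ref{gradiente}). Since $s \mapsto (L + sQ)^k$ is a polynomial, hence smooth, path in $\Psi(A)$ and $\mbox{Trace}$ is $\mathbb{K}$-linear, differentiating $s \mapsto \mbox{Trace}\big((L+sQ)^k\big)$ at $s = 0$ yields
\[
d(H_k)_\mu(Q) \;=\; \mbox{Trace}\Big( \sum_{j=0}^{k-1} L^{j}\,Q\,L^{k-1-j} \Big) \; .
\]
The point where a little care is needed, and the reason the right-hand side does not collapse immediately, is that $\Psi(A)$ is noncommutative, so the naive power rule does not apply term by term.

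To finish the computation of the functional derivative I would invoke the cyclicity of $\mbox{Trace}$: by Lemma \ref{traceprop}, $res([P,Q]) = \d f$ for some $f \in A$, so $\mbox{Trace}([P,Q]) = I(\d f) = 0$ by (\ref{condition3}), and hence $\mbox{Trace}(PQ) = \mbox{Trace}(QP)$ for all $P, Q \in \Psi(A)$. Iterating this identity gives $\mbox{Trace}\big(L^{j}\,Q\,L^{k-1-j}\big) = \mbox{Trace}\big(Q\,L^{k-1}\big)$ for each $j$, so that $d(H_k)_\mu(Q) = k\,\mbox{Trace}\big(Q\,L^{k-1}\big) = \left< Q, k L^{k-1}\right>$ for every $Q$. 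Comparing with the defining relation $\left< Q, \delta H_k/\delta\mu\right> = d(H_k)_\mu(Q)$ and using the non-degeneracy of the pairing (\ref{product}), we conclude $\delta H_k/\delta\mu = k L^{k-1}$, i.e. $\delta H_k/\delta L = k L^{k-1}$ under the identification of $\Psi(A)'$ with $\Psi(A)$.

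It remains to check that the $H_k$ satisfy (\ref{ad1}). Substituting $\delta H_k/\delta\mu = k L^{k-1}$, then using $\mu = \left< L, \cdot\right>$, the symmetry of $\left< \cdot, \cdot\right>$, and the identity (\ref{47*}), for any $W \in \Psi(A)$ we obtain
\[
\left< \mu \,\big|\, [\, k L^{k-1}, W\,] \right> \;=\; k\,\left< [L^{k-1}, W], \, L \right> \;=\; k\,\left< [L, L^{k-1}], \, W \right> \;=\; 0 \; ,
\]
since $L$ commutes with $L^{k-1}$. Hence (\ref{ad1}) holds. I do not anticipate any genuine obstacle: the argument is the classical Adler computation, the only additional ingredient being the (immediate) observation that differentiating the $B$-valued form $\mbox{Trace}$ is legitimate because multiplication in the Fr\"olicher algebra $\Psi(A)$ is smooth and $\mbox{Trace}$ is $\mathbb{K}$-linear.
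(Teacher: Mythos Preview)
Your argument is correct and is exactly the standard Adler--Dickey computation; the paper does not actually supply a proof of this proposition but only states it as a recalled result, citing \cite{D} and \cite{ER2013} and remarking just beforehand that ``direct computations show that $\delta f/\delta\mu$ is given by (algebraic) derivation of the polynomials.'' Your proof is precisely that direct computation, so there is nothing to compare.
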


It follows that we can apply Lemma \ref{adler2}. It yields:

\begin{Proposition} \label{kpp}
    Let us equip the Lie algebra $\Psi(A)$ with the non-degenerate pairing
    $(\ref{product})$. Write $\Psi(A) = \Psi^{-1}(A) \oplus \mathcal{D}_A$
    and consider the Hamiltonian functions
    \begin{equation}
    \mathcal{H}_{k}(\mu) = \frac{1}{k} \, Trace \left( (L^{k+1}) \right) 
    \label{kpham}
    \end{equation}
    for $\mu = \left< L,.\right>$. The corresponding Hamiltonian equations of
    motion with respect to the $\{ \, , \, \}_{0}$ Poisson structure
    of $\Psi(A)'$ are
    \begin{equation}
    \frac{d L}{d t_{k}} = \left[ (L^{k})_{+} , L \right] \; .  \label{kp}
    \end{equation}
\end{Proposition}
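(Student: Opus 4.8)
The plan is to derive the result directly from Lemma \ref{adler2}, specialized to the Hamiltonians $\mathcal{H}_k$. First I would check that each $\mathcal{H}_k(\mu)=\frac1k\,\text{Trace}(L^{k+1})$, with $\mu=\langle L,\cdot\rangle$, is of the polynomial type (\ref{Pol}), so that its functional derivative $\delta\mathcal{H}_k/\delta\mu\in\Psi(A)$ is well defined via (\ref{gradiente}). Then, invoking Proposition \ref{casimir} — which asserts $\delta H_{m}/\delta L = m\,L^{m-1}$ for $H_m(L)=\text{Trace}(L^{m})$, the functional derivative of such a polynomial function being computed by naive algebraic differentiation — I obtain that $\delta\mathcal{H}_k/\delta\mu$ is a nonzero scalar multiple of $L^{k}$ (namely $\tfrac{k+1}{k}L^{k}$ with the normalization written in the statement).

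The second step is to verify the hypothesis (\ref{ad1}) of Lemma \ref{adler2} for $H=\mathcal{H}_k$. Since $\delta\mathcal{H}_k/\delta\mu$ is proportional to $L^{k}$, identity (\ref{47*}) gives, for every $Q\in\Psi(A)$,
\[
\left\langle \mu,\ \Bigl[\tfrac{\delta\mathcal{H}_k}{\delta\mu},Q\Bigr]\right\rangle \;=\; \tfrac{k+1}{k}\,\bigl\langle [L^{k},Q],\,L\bigr\rangle \;=\; \tfrac{k+1}{k}\,\bigl\langle [L,L^{k}],\,Q\bigr\rangle \;=\;0 \; ,
\]
because $L$ commutes with $L^{k}$; this is exactly the last assertion of Proposition \ref{casimir}. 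Hence the $\mathcal{H}_k$ are admissible Hamiltonians for the $r$-bracket Poisson structure $\{\,,\,\}_0$ attached to (\ref{r-bracket}).

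Finally, Lemma \ref{adler2} tells us that the Hamiltonian equations of motion for $\mathcal{H}_k$ with respect to $\{\,,\,\}_0$, written on $\Psi(A)$ and under the identification of $X_{\mathcal{H}_k}(\mu)$ with $dL/dt_k$ justified in the remark following Lemma \ref{le1}, read
\[
\frac{dL}{dt_k}=\left[\left(\frac{\delta\mathcal{H}_k}{\delta\mu}\right)_{+},\,L\right]=\frac{k+1}{k}\,\bigl[(L^{k})_{+},L\bigr] \; ,
\]
where $(\,\cdot\,)_{+}=\pi_{+}(\,\cdot\,)$ is the projection onto $\mathcal{D}_A$ along $\Psi^{-1}(A)$ appearing in (\ref{r-bracket}), so that $(\delta\mathcal{H}_k/\delta\mu)_{+}$ is the differential-operator part $(L^{k})_{+}$ up to the same scalar. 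Absorbing that constant into the Hamiltonian (equivalently, rescaling the time $t_k$) yields precisely (\ref{kp}).

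The whole argument is essentially mechanical once the apparatus of Lemmas \ref{le1} and \ref{adler2} and Proposition \ref{casimir} is in place; the only points demanding genuine care are (i) the legitimacy of computing $\delta\mathcal{H}_k/\delta\mu$ by ordinary differentiation of the polynomial $L\mapsto\frac1k\text{Trace}(L^{k+1})$ in the Fr\"olicher regular-dual framework — which is the content of the discussion after (\ref{gradiente}) together with Proposition \ref{casimir} — and (ii) matching the splitting $\Psi(A)=\Psi^{-1}(A)\oplus\mathcal{D}_A$ used to define (\ref{r-bracket}) with the ``$+$''/``$-$'' notation, while keeping track of the numerical factor $\tfrac{k+1}{k}$, which is exactly why the identity is usually stated up to a rescaling of $t_k$.
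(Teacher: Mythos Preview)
Your proposal is correct and follows exactly the approach the paper intends: Proposition~\ref{casimir} supplies both the functional derivative and the ad-invariance condition~(\ref{ad1}), after which Lemma~\ref{adler2} gives the equation directly. You are also right to flag the stray factor $\tfrac{k+1}{k}$ coming from the normalization $\mathcal{H}_k=\tfrac{1}{k}\,\mathrm{Trace}(L^{k+1})$; the paper silently absorbs it, and your remark that it is removed by rescaling $t_k$ (or equivalently by taking $\mathcal{H}_k=\tfrac{1}{k+1}\,\mathrm{Trace}(L^{k+1})$) is the correct fix.
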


Equations (\ref{kp}) are the Kadomtsev--Petviashvili hierarchy on
the space $\Psi(A)$.

\subsection{Perturbed solutions}

From now onwards, we set $R = C^\infty(S^1,\K)$ where $\K$ is $\R$
or $\mathbb{C}$ and we consider the algebra 
$$
R_z = \left\{\sum_{n \in \N} z^n a_n : (a_n)_\N\in
R^\N\right\}
$$
in the whole previous picture. The algebra $R_z$ is a Fr\'echet
algebra with set of units given by
$$
R_z^* =  \left\{\sum_{n \in \N} z^n a_n \in R_z : a_0 \in R^*\right\} \; ,
$$
as it can be seen using standard techniques on formal series. 
According to Remark \ref{comp}, smoothness in the Fr\"olicher
sense is equivalent to smoothness in the sense of
Gateaux derivatives in the category of Fr\'echet spaces. As a
consequence, The operator $\d$ on $R$ extends (smoothly) componentwise,
to $R_z$. 

We wish to develop KP in this setting, but before proceeding, let us
give our motivation.

It is well-known that there exists a theory of pseudo-differential
operators with symbols of limited smoothness, see e.g.
\cite{BR1984,Marsch1988,Ma2015}, but in this case the algebra of formal 
symbols cannot be constructed as in the standard case considered 
{\em e.g.} in \cite{Om} and references therein, principally because of the 
lack of continuity of the multiplication of functions in Sobolev
classes for low Sobolev orders. However, for functions $f \in H^s_0,$ there exists a sequence
$(f_n)_{n \in \N} \in (C^\infty_c)^\N$ such that $\lim f_n = f$ in $H^s.$
Setting $u_0 = f_0 $ and $ u_n = f_n - f_{n-1}$ for $n>0, $ we obtain
$$
f = \lim_{z \rightarrow 1} \sum_{n \in N^*} z^n u_n \; , 
$$
so that we can naturally investigate the KP hierarchy on spaces of
formal pseudo-differential operators with coefficients with low
regularity by using $R_z,$ as already sketched in \cite{Ma2015}
with other notations.

\smallskip

In order to ensure that we can apply our previous results in this
setting, we have to check:

\begin{enumerate}
    \item $R_z$ is a commutative $\K-$algebra with unit $1$;
    \item  $\d$ is a derivation on $R_z$;
    \item $R_z$ is a {Fr\"olicher algebra};
    \item $\d$ is smooth;
    \item $R^*_z$ is a Fr\"olicher Lie group with Lie algebra $\mathfrak{g}_R R_z$;   
    \item $R_z$ is regular as a Fr\"olicher vector space.
\end{enumerate}

The proofs of (1)-(4) and (6) are straightforward. The proof of
(5) is an application of Theorems \ref{regulardeformation} and
\ref{exactsequence} to the short exact sequence

$$
1 \longrightarrow \left\{\sum_{n \in \N}z^na_n \in R_q : a_0 = 1
\right\} \longrightarrow R_z^* \longrightarrow R^* \longrightarrow
1\; ,
$$
as we already did for other groups of series. 

Let us consider the Hamiltonian formulation of KP in this context.
We define the function $I : R_z \rightarrow \mathbb{R}_z =
\R[[z]]$ as
\[
I \left( \sum_{n \in \mathbb{N}} z^n f_n \right) = \sum_{n \in \mathbb{N}}
z^n \int_{S^1} f_n \; .
\]
The function $I$ clearly satisfies (\ref{condition3}), and the
corresponding paring given by $\left< P , Q  \right> = I
(res(PQ))$ for all $P,Q \in \Psi(R_z)$, is non-degenerate. 
We can then apply Proposition \ref{kpp} and obtain the following result:
\begin{Corollary} \label{cordef}
    Let us equip the Lie algebra $\Psi(R_z)$ with the non-degenerate pairing
    $\left< P , Q  \right> = I(res(PQ))$ for all $P,Q \in \Psi(R_z)$.
    Write $\Psi(R_z) = \Psi^{-1}(R_z) \oplus \mathcal{D}_{R_z}$
    and consider the Hamiltonian functions
    \begin{equation}
    \mathcal{H}_{k}(L_z) = \frac{1}{k} \, I \left( \mbox{res}(L_z^{k+1}) \right) \; ,
    \; \; \;  L_z \in \Psi(R_z)' = \Psi(R_z) \; ,
    \end{equation}
    on $\Psi(R_z)'$. The corresponding Hamiltonian equations of
    motion with respect to the $\{ \, , \, \}_{0}$ Poisson structure
    of $\Psi(R_z)'$ are
    \begin{equation} \label{flow}
    \frac{d L_z}{d t_{k}} = \left[ (L_z^{k})_{+} , L_z \right] \; .
    \end{equation}
\end{Corollary}

As in the classical theory, see \cite{D} or the more recent review
\cite{ER2013}, the Hamiltonian functions $\mathcal{H}_k$ furnish
an infinite family of $\R[[z]]$-conservation laws for the flow of
(\ref{flow}). 

\subsection{The KP equation and perturbed solutions}

In this final subsection we consider briefly the perturbed solutions $\tilde{L_z} \in
C^\infty(T, \Psi^1((R_z)_q))$ to (\ref{lolo4}) obtained by using the
second proof of Theorem \ref{KPcentral}, and we discuss how to connect them with solutions
to the ($q$-deformed)  KP-II equation. Now, in writing $\tilde{L_z}$ we are following
the notations of Subsection 4.2 and Corollary \ref{cordef}. In order to avoid complicated 
symbols, hereafter we write $L$ instead of $\tilde{L_z}$, and we also set $R_{z,q}= (R_z)_q.$ 
We remark once again (see Subsection 4.2) that derivatives with respect to $t_k$ are 
derivatives of smooth functions, and also that within this approach the deformation 
parameter $q$ is mandatory, because we have to use fully regular Lie groups
as in Lemma 4.4 and Theorem 4.5. We first observe:

\begin{Theorem} \label{kp2}
Let $L$ be a solution to $(\ref{lolo4})$. The 
pseudo-differential operator $L$ satisfies the system of partial differential equations
\begin{equation} \label{zcc3}
\frac{\d (L^i_+)}{\d t_j}- \frac{\d (L^j_+)}{\d t_i}+
\left[L^i_+,L^j_+\right] = 0 \; .
\end{equation}
\end{Theorem}
\begin{proof}
The $q=1$ case is, for instance, in Dickey's book \cite{D}. The general case is proven in
a similar way: 
we first observe that the equation
$$\frac{d L^m}{d t_n} =  [ L^n_+ , L^m ]$$
holds. Then, we use this equation and (\ref{lolo4}) to check that the expression
\begin{eqnarray*}
\left( \frac{d}{d t_n}(L^m_+) - \frac{d}{d t_m}(L^n_+) - [L^n_+ , L^m_+] 
\right)  -   \\
\left\{ \frac{d}{d t_n}(-L^m_-) - \frac{d}{d t_m}(-L^n_-) - [-L^n_- , -L^m_-] 
\right\} 
\end{eqnarray*}
equals zero. Since the round bracketed expression is a differential
operator, while the curly bracketed expression is an integral operator, both must be zero.  
\end{proof}

We consider $j=2$ and $i=3$, and we fix $t_1$ and $t_n$ for $n \geq 4.$ Our definition 
(\ref{sca}) yields 
\begin{eqnarray*}
L^2_+ & = & q^2 \left( \d^2 + 2 u_1(x,q t_1, q^2 t_2,q^3t_3, \cdots) \right)  \\
L^3_+ & = & q^3 \left( \d^3 + 3\, u_1(x,q t_1, q^2 t_2,q^3t_3, \cdots)\, \d +
            3\, u_{1,x}(x,q t_1, q^2 t_2,q^3t_3, \cdots) \right. \\
      &   & + \left. 3\, u_2(x,q t_1, q^2 t_2,q^3t_3, \cdots) \right) \, . 
\end{eqnarray*}
We replace these expressions into Equation (\ref{zcc3}) and we set $ t_2 = y, t_3 = t$, 
$2u_{-1}=u$, so that the functions $u$ and $u_2$ depend on $x$, $q^2 y$ and $q^3 t$, and on   
extra parameters $q t_1$, $q^4 t_4 , q^5 t_5, \cdots$. Note that the variable $x$ has not been scaled. In other words, we do not identify 
the space variable $x$ with $t_1$ as it is sometimes done, see
for instance \cite{D}. Equation (\ref{zcc3}) yields, after canceling a common $q^5$ factor,
\begin{eqnarray*}
u_{xx} + 4  u_{2,x} & = & u_y \\
\frac{1}{3} u_{xxx} + 2 u_{2,xx} - u_x u & = & u_{xy} + 2 u_{2,y} - \frac{2}{3}u_t\; .
\end{eqnarray*}
We eliminate $u_2$ from these equations and we obtain 
\begin{equation} \label{qkpii}
\frac{1}{2} u_{yy} = \d_x \left( -\frac{1}{6} u_{xxx} -  u u_x + \frac{2}{3} u_t \right)  \; .
\end{equation} 
This is the standard KP-II equation (for the function $u(x,q^2 y,q^3 t)\,$) as it appears 
in \cite{Bourg1993,D,M1}.

\smallskip

We would like to compare briefly our viewpoint to other approaches to KP-II. Let us consider, for definitiveness, the particular initial condition 
$$ L_0 = \d + u_0\d^{-1} $$ where $u_0 \in R_z\,$.  Proceeding as in subsection
4.2, we obtain the solution 
$$
L = \d + u_{-1} \d^{-1 } + (\hbox{lower order terms}) \in C^\infty(\K^\infty,\Psi_q(R_{z})) 
$$
to (\ref{lolo4}), and therefore Theorem \ref{kp2} and our foregoing discussion imply that $u = 2\,u_{-1} \in R_{z,q}$ solves 
the $q-$deformed KP-II equation (\ref{qkpii}). 
We stress that $u$ is a ``perturbed solution" depending on the extra parameters $z$ and $q$. The 
$z$-parameter is connected with the fact that we are solving KP with coefficients with low 
regularity, as we explain in subsection 5.2; the $q$-parameter is a ``time-scaling" 
parameter motivated by geometric considerations, as pointed out before.  Intuitively, in order
to interpret our derivations and series, we may assume that we are in the following situation:

\begin{itemize}
	\item[~] The points $(x,y) \in\;  ]-\pi;\pi]^2 \;\sim\; S^1 \times ]-\pi;\pi] \sim 
	\mathbb{T}^2$, the operators $\d_x,\d_y$ are classical derivatives with respect to 
	coordinates, and $\d_x^{-1}$ is a well defined primitive 
	map. By expansion on (e.g. Fourier) series of a map 
	$\phi \in C^\infty(\,]-\pi;\pi],\K),$ we obtain an element $u_0 \in R_z$ 
	which converges to $\phi$ when $z \rightarrow 1,$ uniformly on any compact 
	subset of $]-\pi;\pi].$ 
\end{itemize}

Let us consider the operator $S_0$ connecting the initial condition $L_0 = \d + u_0\d^{-1}$ to $\d = \d_x.$  We solve the equation $q L_0 = S_0 \cdot (q \d ) \cdot S_0^{-1}$ in which 
$$S_0 = 1 + \sum_{k \in \N^*} s_{-n}\d^{-n}\; ,$$
for the ``deformed initial data'' $q L_0\,$.  Expanding both sides of the equation $L_0 \, S_0 = S_0\, \d$ we obtain 
\begin{equation} \label{qic}
 s_{-1,x} = - u_0 \; ,
 \end{equation}
and proceeding recursively we find $s_{-n}$ in terms of $u_0$ and its derivatives.
Equation (\ref{qic}) implies 
\begin{equation} \label{qicd}
\int_{S^1}u_0 = 0 \; .
\end{equation}
This necessary condition on the function $u_0$ appears also in 
\cite[Equation (1.8)]{Bourg1993}, as condition on the initial data for the KP-II equation. 

\smallskip

Summarizing, our solution $u=2u_{-1}$ to the $q$-deformed KP-II equation (\ref{qkpii}) is 
a $q$-series with coefficients in $C^\infty(\K^\infty,R_z)$ ---in which we recall that 
$R_z = C^\infty(S^1,\K)[[z]]$--- and initial data $u_0 \in R_z$ satisfying (\ref{qicd}). 
In other words, we begin with a ``rough function of $x$" as our given data (our $z$-series 
$u_0$), and we obtain a unique solution $u(x,q^2 y,q^3 t)$ for which {\em the full 
$y$-dependence, including at $t=0$, is obtained by integration of the hierarchy}. This 
certainly is different to what Bourgain proves in \cite{Bourg1993} for standard KP-II: 
in this reference he shows that the Cauchy problem for KP-II is globally well-posed for 
{\em arbitrary initial data} $u(x,y,0) \in L^2(S^1\times S^1)$. Thus, it would seem that
we can reach only 
a restricted class of solutions to KP-II\footnote{The following example was essentially 
suggested by the referee: let us fix a smooth function $u_0:S^1 \rightarrow \K$ and think 
of it as a (trivial) element of $R_z = C^\infty(S^1,\K)[[z]]$. Then, we obtain a unique solution $u = \sum \sum f_n^q(x,qt_1,q^2y,q^3t, qt_4, 
\cdots)z^n q^k$ to the $q$-deformed KP-II equation (\ref{qkpii}), with $y$ dependence of u
fixed by integration. On the other hand, let us choose a smooth function $f$ on $S^1$ 
with $f(0)=0$ and such that $\tilde{u}_0(x,q^2 y) = q u_0(x)+ q f(q^2 y)$ is different from 
$u(x,0,q^2 y,0,\cdots)$. Then, the solution to KP-II with initial data $(1/q)\tilde{u}_0$ 
arising via Bourgain's theorem \cite{Bourg1993} induces, as explained in subsection 4.2, a
solution $\tilde{u}$ to our $q$-deformed KP-II equation. The function $\tilde{u}$ cannot 
coincide with our solution $u$, even though they coincide at $t=y=0$.}; we wonder if these 
solutions can be compared with the ``meromorphic'' solutions 
appearing in \cite{Sh1986}, in which one (complex) variable is required for the solutions 
described therein. These matters are currently under investigation.

The foregoing observations are not restricted to KP-II. In fact, we could conclude from 
the interesting paper \cite{DGRW} and our previous remarks that the KP hierarchy
furnishes only some of the solutions to the equation
$$
w_{xxxy} + 3 w_{xy}w_x + 3 w_y w_{xx} + 2w_{yt} - 3 w_{xz} = 0
\; ,
$$ 
precisely the next equation after KP-II which can be deduced using Theorem \ref{kp2}.
This discussion leads us to hypothesize on the existence of a possible equation extending 
the KP hierarchy, which could furnish a full description of all the solutions of the KP-II 
equation. There is already a candidate for this hypothetical equation, as it is a known 
fact that the KP hierarchy may be seen as a  special case of a self-dual Yang-Mills 
hierarchy, see \cite{ACT,Sc}. This situation seems to be a non-trivial
instance of the well-known problem of differential inclusion, and
this again is reason for us to suspect the existence of another
equation, perhaps as the ones considered in  \cite{ACT,Sc},  
for which the KP hierarchy would be a specialization.

\vskip 12pt

\paragraph{\bf Acknowledgements:} Both authors have been partially
supported by CONICYT (Chile) via the Fondo Nacional de Desarrollo
Cient\'{\i}fico y Tecnol\'{o}gico operating grant \# 1161691.


\begin{thebibliography}{99}
    \bibitem{ACT}  Ablowitz, M; Chakravarty, S.; Takhtajan, L.A.; A Self-Dual Yang-Mills Hierarchy
                            and its Reductions to Integrable Systems
                            in 1 + 1 and 2 + 1 Dimensions; {\em Commun. Math. Phys.} 158 (1993), 289--314.
    \bibitem{ARS1} Adams, M.; Ratiu, T.; Schmidt, R.; A Lie group structure for pseudo-differential operators;
                  \textit{Math. Annalen} \textbf{273}, no 4 (1986) 529-551
    \bibitem{ARS2} Adams, M.; Ratiu, T.; Schmidt, R.; A Lie group structure for Fourier integral operators;
                   \textit{Math. Annalen} \textbf{276}, no.1 (1986), 19--41.
    \bibitem{Ad} Adler, M.; On a Trace Functional for Formal Pseudo-Differential Operators
                 and the Symplectic Structure of the Korteweg-Devries Type Equations. {\em 
                 lnventiones math.}  \textbf{50} (1979), 219--248.
\bibitem{BN2005} Batubenge, A.; Ntumba, P.; On the way to Fr\"olicher Lie groups
                     \textit{Quaestionnes mathematicae} \textbf{28} no1 (2005), 73--93
\bibitem{BR1984} Beals, M. and Reed, M.; Microlocal regularity theorems for nonsmooth 
    pseudodifferential operators and applications to nonlinear problems; \textit{Trans. Amer.
     Math. Soc.} \textbf{285} (1984), 159-184.
\bibitem{BR} Bernshtein, N.; Rozenfel'd, B.I.; Homogeneous spaces of infinite-dimensional Lie 
algebras and characteristic classes of foliations. {\em Russ. Math. Surv.} \textbf{28} (19783), 
107--142.
\bibitem{B} Bourbaki, N.; {\it Elements of Mathematics}. Springer-Verlag, Berlin. (1998)
    \bibitem{Bourg1993} Bourgain, J.; On the Cauchy problem for the Kadomtsev-Petviashvili equation. \textit{Geom. Funct. Anal.} \textbf{3} no 4 (1993) 315--341
    \bibitem{CW} Christensen, D. and Wu, E.; Tangent spaces and tangent bundles for
                 diffeological spaces, \emph{Cahiers de Topologie et G\'eom\'etrie Diff\'erentielle} Volume \textbf{LVII} (2016),
		3--50.
    \bibitem{D1} Demidov, E.E.; On the Kadomtsev-Petviashvili hierarchy with a noncommutative
                 timespace. {\em Funct. Anal. Appl.} \textbf{29} no. 2, (1995), 131--133.
    \bibitem{D2} Demidov, E.E.; Noncommutative deformation of the Kadomtsev-Petviashvili hierarchy.
                 In ``Algebra. 5, Vseross. Inst. Nauchn. i Tekhn. Inform. (VINITI)'', Moscow, 1995.
                (Russian). {\em J. Math. Sci. (New York)} \textbf{88} no. 4, (1998), 520--536 (English).
\bibitem{D} Dickey, L.A.; \textit{Soliton equations and Hamiltonian systems, second edition} 
(2003). Advanced Series in Mathematical Physics $12$, World Scientific Publ. Co., Singapore.
\bibitem{DGV} Dodson, C.; Galanis, G.; Vassiliou, E.; \textit{Geometry in the Fr\'echet context: 
a projective limit approach} London Mathematical Society Lecture Notes Series {\bf 428}, 
Cambridge University Press (2015)
    \bibitem{DGRW} Dorizzi, G.; Grammaticos, B.; Ramani, A.; Winternitz,
                   P.; Are all the equations of the
                   Kadomtsev-Petviashvili hierarchy integrable?
                   {\em J. Math. Phys.} {\bf 27} (1986),
                   2848--2852.
    \bibitem{DN2007-1} Dugmore, D.; Ntumba, P.;On tangent cones of Fr\"olicher spaces
                      \textit{Quaetiones mathematicae} \textbf{30} no.1 (2007),
                      67--83.
\bibitem{ER2013} Eslami Rad, A.; Reyes, E. G.; The Kadomtsev-Petviashvili hierarchy and the 
Mulase factorization of formal Lie groups {\it J. Geom. Mech.} {\bf 5}, no 3 (2013) 345--363.
\bibitem{FK} Fr\"olicher, A; Kriegl, A; {\it Linear spaces and differentiation theory} Wiley 
series in Pure and Applied Mathematics, Wiley Interscience (1988).
\bibitem{F} Fr\"olicher, A; Cartesian closed categories and analysis of smooth maps. In:
'Categories in Continuum Physics', F.W. Lawvere and S.H. Schanuel (Eds.) LNM 1174 (1986), 
Springer-Verlag, Berlin.
\bibitem{GV} Galanis, G. and Vassiliou, E.; A generalized frame bundle for certain
 Fr\'echet vector bundles and linear connections. \textit{Tokyo J. Math.} {\bf 20}  no.1 (1997),
                 129--137.
\bibitem{GD1981} Gelfand, I.M.; Dorfman, I.Y.; Infinite dimensional operators and infinite 
dimensional Lie algebras \textit{Funk. Anal. Priloz.} {\bf 15} (1981)
                     23--40.
\bibitem{Glo} Gl\"ockner, H; Algebras whose groups of the units are Lie groups
                  \textit{Studia Math. } \textbf{153}, no2 (2002), 147--177.
\bibitem{GR} Guieu, L. and Roger, C.; \textit{L' Alg\`ebre et le
           groupe de Virasoro: Aspects g\'eom\'etriques et
           alg\'ebraiques, generalisations}. Centre de Recherches Mathematiques, Universit\'e de
           Montreal, (2007).
\bibitem{Igdiff} Iglesias-Zemmour, P.; \textit{Diffeology} AMS mathematical monographs {\bf 185} 
(2013)
\bibitem{KM} Kriegl, A.; Michor, P.W.; \textit{The convenient setting for global analysis}
Math. surveys and monographs \textbf{53}, American Mathematical society, Providence, USA. (2000)
\bibitem{KO1990} Khesin, B.A. and Ovsienko, V.Y.; Symplectic leaves of the Gelfand-Dickii 
brackets and homotopy casses of non flattening curves. \textit{Funk. Anal. Prihoz.} {\bf 24}   
(1990), 38--47.
    \bibitem{KZ} Khesin, B.A. and Zakharevich, I.; Poisson-Lie groups of pseudodifferential
                 symbols, {\em Comm. Math. Phys.} \textbf{171} no. 3 (1995), 475--530.
\bibitem{KW} Khesin, B.A. and Wendt, R.; ``The geometry of infinite-dimensional groups" (2009).
 Springer-Verlag, Berlin.
\bibitem{Ku} Kubo, F.; Non-commutative Poisson algebra structures on affine Kac-Moody algebras.
                 {\em J. Pure and Applied Algebra} \textbf{126} (1998),
                 267--286.
\bibitem{Les} Leslie, J.; On a Diffeological Group Realization of certain Generalized 
symmetrizable Kac-Moody Lie Algebras \textit{J. Lie Theory} \textbf{13} (2003), 427--442.
\bibitem{Ma} Magnot, J-P.; Diff\'eologie du fibr\'e d'Holonomie en dimension infinie,
                 \textit{ C. R. Math. Soc. Roy. Can.} \textbf{28} (2006), 121--127.
\bibitem{Ma2013} Magnot, J-P.; Ambrose-Singer theorem on diffeological bundles and complete 
integrability of KP equations. {\em Int. J. Geon. Meth. Mod. Phys.} {\bf 10}, no 9 (2013) 
Article ID 1350043.
\bibitem{Ma2015} Magnot, J-P.; {\it q-deformed Lax equations and their differential geometric 
background}, Lambert  Academic Publishing, Saarbrucken, Germany (2015).
\bibitem{Ma2018-2} Magnot, J-P.; The group of diffeomorphisms of a non-compact manifold is 
not regular \textit{Demonstr. Math.} 51, No. 1, 8-16 (2018)              
\bibitem{Marsch1988} Marschall, J.; Pseudo-differential operators with coefficients in Sobolev 
spaces; {\em Trans. AMS } \textbf{307}, no1 (1988), 335-361.
    \bibitem{MR} Marsden, J. and Ratiu, T.; ``Introduction to mechanics and symmetry.
                 A basic exposition of classical mechanical systems. Second edition'' (1999).
                 Texts in Applied Mathematics, 17. Springer-Verlag, New York. 
\bibitem{M1} Mulase, M.; Complete integrability of the Kadomtsev-Petvishvili equation.
                {\em Advances in Math.} \textbf{54} (1984), 57--66.
\bibitem{M3} Mulase, M.; Solvability of the super KP equation and a generalization of the 
    Birkhoff decomposition. {\em Invent. Math.} \textbf{92} (1988), 1--46.
\bibitem{M4} Mulase, M.; Algebraic Theory of the KP Equations. In: ``Perspectives in 
Mathematical Physics" R. Penner and S.T. Yau (Eds.). International Press, Boston, 157--223 
(1994).
\bibitem{Neeb2007} Neeb, K-H.; Towards a Lie theory of locally convex groups 
    \textit{Japanese J. Math.} \textbf{1} (2006), 291-468
    \bibitem{O}  Olver, P.J.; ``Applications of Lie Groups to Differential
                 Equations'' (1993). Second Edition, Springer-Verlag, New York.
\bibitem{Om} Omori, H.; \textit{Infinite dimensional Lie groups} AMS translations of 
    mathematical monographs \textbf{158} (1997)
\bibitem{OMYK2} Omori, H.;  Maeda, Y.;  Yoshioka, A.; Kobayashi,  O.; { On regular Fr\'echet
               Lie groups IV }; {\it Tokyo J. Math.} {\bf 5}  (1981),
               365-397.
\bibitem{P} Perelomov, A.M.; ``Integrable systems of classical mechanics and Lie algebras'' 
(1990) Birkh\"{a}user Verlag, Berlin.
\bibitem{PS} Pressley, A. and Segal, G.B.; ``Loop groups" (1986). Oxford University Press.
\bibitem{RS} Reyman, A.G. and Semenov-Tian-Shansky, M.A.; Reduction of Hamiltonian Systems,
Affine Lie Algebras and Lax Equations II. {\em Invent. math.} \textbf{63} (1981), 423--432.
\bibitem{R-ams}  Reyes, E.G.; Jet Bundles, Symmetries, Darboux Transforms. In: 'Algebraic Aspects of Darboux Transformations, Quantum Integrable Systems and Supersymmetric Quantum Mechanics', P.B. Acosta-Humanez, F. Finkel, N. Kamran, P.J Olver (Eds.), Contemporary Mathematics 563, AMS, 137--164, (2012).  
\bibitem{Rob} Robart, T.; Sur l'int\'egrabilit\'e des sous-alg\`ebres de Lie en dimension 
infinie; \textit{Can. J. Math.} \textbf{49} (4) (1997), 820-839.
\bibitem{Sc} Schiff, J.; Self-Dual Yang-Mills and the Hamiltonian Structures of Integrable 
Systems. IAS preprint IASSNS-HEP-92/34.  \texttt{ArXiv: hep-th/9211070}.
\bibitem{STS} Semenov-Tian-Shansky, M.A.; What is a classical $r$-matrix? {\em Funct. Anal. 
Appl.} \textbf{17} (1983), 259--272.
\bibitem{Sh1986} Shiota, T.; Characterization of Jacobian varieties in terms of Soliton 
equations \textit{Inv. Math.} {\bf 83} (1986), 333--382.
\bibitem{Sou} Souriau, J.M.; Un algorithme g\'en\'erateur de structures quantiques;
                  \textit{Ast\'erisque}, Hors S\'erie, 341-399 (1985).
\bibitem{Stern} Sternberg, S.; ``Lectures on Differential Geometry" (1998). 2nd edition, AMS.
\bibitem{W} Waliszewski, W.; Jest in differentiable spaces. {\em \v{C}asopis pro p\v{e}stov\'{a}n\'{i} matematiky} \textbf{110} (1985), 241--249.
    \bibitem{W1} Watanabe, Y.; Hamiltonian structure of Sato's hierarchy of KP equations and a
                 coadjoint orbit of a certain formal Lie group. {\em Lett. Math. Phys.} 
                 \textbf{7} (1983), 99--106.
    \bibitem{W2} Watanabe, Y.; Hamiltonian structure of M. Sato's hierarchy of
                 Kadomtsev--Petviashvili equation. {\em Ann. Mat. Pura Appl.} (4) \textbf{136} (1984), 77--93.
    \bibitem{Wa} Watts, J.; \textit{Diffeologies, differentiable spaces and symplectic geometry}. University of Toronto,
                 PhD thesis. arXiv:1208.3634v1.
\end{thebibliography}
\end{document}